\def\llncs{0}
\def\fullpage{1}
\def\anonymous{0}
\def\draft{0}
\def\submission{0}

\ifnum\submission=1
\def\llncs{1}
\def\draft{0}
\def\anonymous{1}
\def\fullpage{0}
\fi

\ifnum\llncs=1
    \documentclass[envcountsect,envcountsame]{llncs}
    \ifnum\fullpage=1
    \usepackage{fullpage}
    \fi
\else
    \documentclass[letterpaper,hmargin=1.05in,vmargin=1.05in]{article}
    \ifnum\fullpage=1
    \usepackage{fullpage}
    \fi
    \usepackage{microtype}
\fi

\usepackage{authblk}
\usepackage{braket}
\usepackage{xcolor} 
\usepackage{amsmath}
\allowdisplaybreaks[4] 
\usepackage{amssymb}
\usepackage{mathtools}

 \usepackage{amsthm}
 \usepackage{thmtools}
 \usepackage{physics}
\usepackage{enumitem} 
\usepackage[colorlinks=true,linkcolor=magenta,citecolor=blue,hypertexnames=false,pdftex,pdfpagelabels,bookmarks,hyperindex,hyperfigures]{hyperref}
\let\subparagraph\paragraph
\usepackage{titlesec}
\titleformat*{\paragraph}{\normalsize\bfseries}

\usepackage[capitalise,nameinlink]{cleveref}

\ifnum\draft=1
	\newcommand{\minki}[1]{\textcolor{blue}{$\langle\langle$Minki: #1$\rangle\rangle$}}
\else
	\newcommand{\minki}[1]{}
\fi

\ifnum\llncs=0
	\newtheorem{theorem}{Theorem}[section]
	\newtheorem{lemma}[theorem]{Lemma}
	\newtheorem{corollary}[theorem]{Corollary}
	
	\newtheorem{definition}[theorem]{Definition}
	
	\newtheorem{claim}[theorem]{Claim}

	\theoremstyle{remark}

\else
	\spnewtheorem{algorithm}{Algorithm}{\bfseries}{\rmfamily}

	\spnewtheorem{claim}{Claim}{\bfseries}{\itshape}
\fi
\crefname{appendix}{Appendix}{Appendices}
\Crefname{appendix}{Appendix}{Appendices}

\newcommand{\Haar}{\mathcal{H}}

\newcommand{\N}{\mathbb{N}}
\newcommand{\R}{\mathbb{R}}
\newcommand{\C}{\mathbb{C}}

\newcommand{\cH}{\mathcal{H}}

\newcommand{\eps}{\varepsilon}

\newcommand{\EE}{\mathop{\mathbb{E}}}

\renewcommand{\epsilon}{\varepsilon}

\newcommand{\Exp}{\mathbb{E}}

\newcommand{\poly}{\operatorname{poly}}
\newcommand{\negl}{\operatorname{negl}}

\newcommand{\bit}{\{0,1\}}

\ifnum\draft=1
	
\else
	
\fi

\newcommand{\proj}[1]{\ket{#1}\!\bra{#1}}
\renewcommand{\norm}[1]{\left\lVert#1\right\rVert}
\newcommand{\normone}[1]{\left\lVert #1 \right\rVert_1}
\newcommand{\normtwo}[1]{\left\lVert #1 \right\rVert_2}

\DeclarePairedDelimiterX{\inner}[1]{\langle}{\rangle}{\innerp{#1}}

\ExplSyntaxOn

\NewDocumentCommand \innerp { m }
  {
    \int_case:nnF { \clist_count:n {#1} }
      {
        { 1 } { #1, #1 }
        { 2 } {#1}
      }
      {
        \msg_error:nnnx { innerp } { too-many-operands }
          {#1} { \clist_count:n {#1} }
      }
  }
\ExplSyntaxOff

\newcommand{\Prb}{\mathbb{P}}
\newcommand{\normop}[1]{\left\lVert#1\right\rVert_{\infty}}
\newcommand{\normdiamond}[1]{\left\lVert#1\right\rVert_{\diamond}}

\crefname{assumption}{Assumption}{Assumptions}
\Crefname{assumption}{Assumption}{Assumptions}

\newcommand{\cL}{\mathsf{L}}
\newcommand{\Herm}{\operatorname{Herm}}
\newcommand{\Id}{I}

\newcommand{\Var}{\operatorname{Var}}

\newcommand{\Sphere}{\mathbb{S}}

\newcommand{\QPSPACE}{\mathsf{QPSPACE}}

\newcommand{\Gen}{\mathsf{Gen}}

\newcommand{\defeq}{\vcentcolon=}

\newcommand{\simmapsto}[1]{%
  \mathrel{%
    \ooalign{%
      $\xmapsto{#1}$\cr
      \hidewidth\raisebox{-3pt}{$\scriptstyle\approx$}\hidewidth\cr
    }%
  }%
}

\title{Derivatives of Quantum Randomness: Separating Pseudorandom Unitaries from Pseudorandom (Function-like) States}
\hypersetup{pdftitle={Derivatives of Quantum Randomness: Separating Pseudorandom Unitaries from Pseudorandom (Function-like) States}}
\date{}

\ifnum\llncs=1
    \ifnum\anonymous=1
        \author{}
        \institute{}
    \else
        \author{
        Minki Hhan\inst{1} 
        }
        \institute{
        KAIST, Daejeon, Korea
        \\\email{minkihhan@kaist.ac.kr}
        }
    \fi
\else
    \author{Minki Hhan}
    \affil{{\small KAIST, Daejeon, Korea}
    \authorcr{\small minkihhan@kaist.ac.kr}}
    
\fi

\begin{document}

\maketitle

\begin{abstract}
    Quantum computation gives rise to new pseudorandom primitives for states and unitaries, including pseudorandom state generators (PRSGs), pseudorandom function-like state generators (PRFSGs), and pseudorandom unitaries (PRUs). In this paper, we show a full unitary oracle separation between PRFSGs and PRUs.
    
    The separation holds between the strongest state notion and the weakest unitary notion: even adaptively secure, quantum-accessible PRFSGs do not imply non-adaptively secure, forward-only PRUs, even when their implementations are allowed to be non-unitary and use an arbitrary number of ancillary qubits. This reveals a fundamental distinction between pseudorandomness for quantum states and for quantum unitaries.
    
    Our main technical idea is to view a candidate PRU construction with access to state generation oracles as a map from the underlying oracle states to implemented unitaries, and to study the derivatives of this map. These derivatives are inherently low rank, and we exploit this low-rank structure to distinguish the resulting unitaries from truly random ones. We believe this differential perspective may be useful for studying other structural questions about quantum states and unitaries.
\end{abstract}
\ifnum\llncs=1
\fi
\ifnum\fullpage=1
\vfill

\paragraph{AI use disclosure.} The author first learned about the Poinca\'re inequality through ChatGPT 5.4 Pro. The mathematical ideas and proofs in this work were developed and completed by the author. ChatGPT 6 Astra was then used to prepare an initial written draft of the full proofs. The author subsequently rewrote and edited the manuscript. The author takes full responsibility for all mathematical claims and the content of this work.
\fi

\ifnum\submission=1
\else
	\clearpage
	\newpage
	\setcounter{tocdepth}{2}
	\tableofcontents
	\newpage
\fi

\section{Introduction}
\label{sec:introduction}

Quantum pseudorandomness extends computational randomness
from classical strings and functions to intrinsically quantum objects.
Ji, Liu, and Song introduced pseudorandom state generators (PRSGs) and
pseudorandom unitaries (PRUs)~\cite{C:JiLiuSon18}.
A PRSG uses a short classical key to prepare a quantum state whose
polynomially many copies are computationally indistinguishable from
copies of a Haar-random state.
A PRU, on the other hand, implements a keyed unitary, or possibly
near-unitary, transformation that is computationally indistinguishable
from a Haar-random unitary under quantum oracle queries.

PRSGs and PRUs do not have immediate counterparts in the classical
theory of pseudorandomness, as they concern two different types of
quantum objects: states and transformations.
Nevertheless, there is a useful analogy with the classical hierarchy.
A PRSG resembles a pseudorandom generator (PRG) in that a short
classical seed selects a single pseudorandom object, whereas a PRU
resembles a pseudorandom permutation (PRP) or a pseudorandom function (PRF) in that a short key specifies
a pseudorandom transformation over a large space.
In this analogy, pseudorandom function-like state generators (PRFSGs)~\cite{C:AnaQiaYue22}
may be closer to pseudorandom functions (PRFs): for a fixed key, they
associate a pseudorandom object with each classical input.
This analogy is imperfect, but it has generated rich cryptographic
applications.
Starting from PRSGs, quantum pseudorandomness has been used to construct private-key quantum money~\cite{C:JiLiuSon18}, commitments, encryption, and secure multiparty computation~\cite{C:AnaQiaYue22,C:MorYam22,TCC:AGQY22,C:BCKM21}. 
PRFSGs further support applications such as secret-key encryption and message authentication~\cite{C:AnaQiaYue22}.
PRUs have also found applications in
unclonable encryption~\cite{ITCS:PorembaRagavanVaikuntanathan26}.

Classically, PRGs, PRFs, and PRPs are tightly connected to each other.
The GGM construction turns PRGs into PRFs~\cite{JACM:GGM86}, while the Luby-Rackoff construction turns PRFs into PRPs~\cite{SICOMP:LR88}.
Thus, at the level of existence, these notions are equivalent.
The quantum landscape is considerably more fragmented. At a basic level, quantum pseudorandom primitives can exist relative to oracles where classical pseudorandomness does not exist \cite{Kre21}, even with classical oracles \cite{STOC:KQST23,STOC:KreQiaTal25} or even allowing negligible errors \cite{BEHMV26,Bar25}.

Even within quantum pseudorandomness, the picture remains fragmented.
Constructions of PRSGs and PRFSGs from classical pseudorandomness are well-established
\cite{C:JiLiuSon18,C:BraShm20,C:AnaQiaYue22,TCC:AGQY22}.
However, constructions of PRUs from classical pseudorandomness were obtained much later
and required substantially new techniques
\cite{FOCS:MPSY24,STOC:MaHua25,FLM+26}, while their equivalence with PRS(F)Gs remains unclear.
Several intermediate notions were introduced and constructed to capture pseudorandom operations on quantum inputs in the meantime, such as
pseudorandom state scramblers~\cite{TCC:LQSYZ24} and pseudorandom isometries~\cite{EC:AGKL24}.

This raises a natural question: are these notions of quantum pseudorandomness equivalent at the level of existence, as their classical counterparts are? In particular, can PRFSGs be used to construct PRUs?
There is an apparent obstacle: a PR(F)SG, or state generation more generally, only needs to operate on a low-dimensional subspace to generate individual pseudorandom states, whereas a PRU must specify a coherent transformation on an entire Hilbert space.
This dimensional mismatch has led to partial oracle separations \cite{BEHMV26,EPRINT:GLMY25} for PRU constructions with severely bounded ancilla space. 
However, these results leave open whether the ancilla restriction is essential.
Since general PRU constructions may use polynomially many ancillary qubits, these results leave open a full oracle separation with unrestricted ancilla space.

\subsection{Our result}

We give a unitary oracle separation between PRFSGs and PRUs.
\begin{theorem}
There is a unitary oracle relative to which PRFSGs exist, whereas PRUs do not.
\end{theorem}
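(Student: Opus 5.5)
The plan is to use the standard oracle-separation template: pick an oracle distribution that makes PRFSGs exist unconditionally, then show that every candidate PRU built from that oracle can be broken by an efficient, non-adaptive, forward-only distinguisher.

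\textbf{The oracle.} I would take a family of Haar-random state-preparation oracles. For each key length $\lambda$, key $k\in\bit^\lambda$ and input $x$, sample an independent Haar-random state $\ket{\psi_{k,x}}$ on $n(\lambda)$ qubits. The oracle $\mathcal{O}$ gives unitary access to preparation of these states. Concretely, it is the reflection $I-2\proj{\psi_{k,x}}$ controlled on $(k,x)$, or a unitary mapping $\ket{k,x}\ket{0}$ to $\ket{k,x}\ket{\psi_{k,x}}$ together with its inverse. I would also add a $\QPSPACE$-type oracle (or a $\mathsf{PSPACE}$ oracle) so that the eventual attack may run inefficient classical or quantum post-processing.

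\textbf{PRFSGs exist.} The PRFSG sets $G(k,x)=\ket{\psi_{k,x}}$. Adaptive, quantum-accessible security follows because a query-bounded adversary, not knowing $k$, has negligible query weight on the true key by a hybrid or BBBV argument. Once the true key is hidden, the states are fresh Haar states, and the $\QPSPACE$ oracle does not depend on $\psi$, so it gives no help.

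\textbf{PRUs do not exist.} Fix any candidate PRU $U_k^{\mathcal{O}}$ using $T=\poly$ queries and arbitrary ancillas. Following the abstract, I view the implemented channel as a polynomial map $\Phi:(\psi_{k',x})_{k',x}\mapsto U^{\mathcal{O}}$. Each oracle query is linear in the projector $\proj{\psi}$, so the implemented operator has degree at most $2T$ in the oracle states. I then study the derivative of $\Phi$ with respect to a single state $\psi_{k',x}$. Perturbing $\psi\mapsto\psi+\delta$ changes the reflection by a rank-$\le 2$ term, so by the product rule the derivative of $U$ is a sum of at most $T$ terms, each of rank $O(1)$ on the $n$-qubit oracle register, sandwiched between unitaries.

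Averaging over the Haar measure of the oracle states, I would apply a Poincar\'e inequality on the product of spheres. This bounds the variance of any observable $f(U)$ by the expected squared gradient. The intended consequence is structural: either $U$ is nearly independent of the oracle states, or it is well approximated by an operator whose dependence has low rank. In the first case $U$ is essentially an oracle-free computation, which the $\QPSPACE$ oracle breaks by brute force: it enumerates keys and tests the Choi state. In the second case the low rank should leave a detectable signature. I would try a test on a few queries to $U\otimes U$ or $U$ applied to halves of maximally entangled states, for instance one that estimates a statistic such as $\Exp|\mathrm{tr}(U P)|^2$ for a suitable low-rank or Pauli family $P$. For Haar $U$ this quantity is $O(1/d)$, whereas low-rank structure forces it to be noticeably larger, or at least makes it distinguishable using $\QPSPACE$.

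\textbf{Main obstacle.} The hardest step is turning local, low-rank derivatives into a global, efficiently testable deviation from Haar. The difficulty is that the derivatives live on the oracle register, while the distinguisher sees only the input/output behavior of $U$ with unbounded ancillas traced out. My first attempt would be a concentration argument: combining Poincar\'e with a bound on the total gradient norm, $\sum$ of derivatives $\le O(T)$, to show that $U_k$ concentrates around its oracle average $\Exp_\psi U_k$. That average is an oracle-independent, $\QPSPACE$-computable channel, and the distinguisher would compare $U$ against it. The risk is that the concentration is only in a weak norm. I would need to check that it survives the non-adaptive forward-query test, controlling the error by $\poly(T)/\sqrt{d_{\text{oracle}}}$.
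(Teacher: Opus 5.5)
Your outline (low-rank derivatives, Poincar\'e on a product of spheres, concentration around an oracle-averaged object that $\QPSPACE$ plus key enumeration can test against) is the paper's. However, the step that makes it work is left open, and it is exactly the one you flag as a risk.

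The distinguisher sees only the reduced Choi state $\rho_\Phi=\Tr_B\proj{\Xi_\Phi}$ of a possibly non-unitary, ancilla-using implementation. So the thing to average is this state, not the operator $U_k$. For mixed $\rho_\Phi$ your plan breaks twice.

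First, your gradient bound is unjustified. We have $\partial\rho_\Phi=\Tr_B(\ketbra{\partial\Xi_\Phi}{\Xi_\Phi}+\ketbra{\Xi_\Phi}{\partial\Xi_\Phi})$. Bounding this through $\norm{\partial\Xi_\Phi}$ gives a squared sum over the tangent basis of order $2^d$ (from workspace queries on $\ket{x,0^{d+1}}$), which cancels the Poincar\'e factor $1/(2^{d+1}-1)$.

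Second, a Hilbert--Schmidt bound $\EE\normtwo{\rho_\Phi-\mu}^2=O(q^2\,2^{-d})$ gives nothing usable in trace norm or for a swap test when $\rho_\Phi$ is highly mixed. For example, the one-query channel $\rho\mapsto\sum_x\bra{x}\rho\ket{x}\proj{\phi_x}$ has Choi states that are Hilbert--Schmidt concentrated yet at trace distance close to $1$ from their average.

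The missing idea is a dichotomy enforced by the distinguisher itself:
\begin{itemize}
\item First run a purity test on two copies of $J(\mathcal V)$. This already wins if the average impurity over keys is $\lambda^{-O(1)}$.
\item Otherwise the purification splits as $(V_\Phi\otimes\Id)\ket\Omega\ket{\chi_\Phi}+\ket{e_\Phi}$ with $\norm{e_\Phi}\le\sqrt{\eta(\Phi)}$. Sandwiching each query derivative between the rank-$d_A$ projections $W_{<t}(\Id\otimes\proj{0})W_{<t}^\dagger$ and $W_{>t}^\dagger(\Id\otimes\proj{\chi_\Phi})W_{>t}$ gives gradient energy at most $72q_d^2(1+2^d\eta(\Phi))$.
\item Near-purity then turns the Hilbert--Schmidt bound into a trace-norm bound, which the swap test can use.
\end{itemize}
To get this for one fixed oracle, you also need the impurity to be Lipschitz in the oracle states (so it concentrates under Haar measure), plus Borel--Cantelli over $\lambda$ and over the countably many candidates. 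With this dichotomy, your speculative second branch via $\EE|\Tr(UP)|^2$ is unnecessary, and as written it is unsupported.

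Three further pieces fail as stated.

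\textbf{Small lengths.} Your error term $\poly(T)/\sqrt{d_{\mathrm{oracle}}}$ is vacuous at constant or logarithmic lengths, and a candidate may query there. For instance, it can apply a length-$1$ swap gate directly to several triples of input qubits. The implementation is unitary, yet its Choi state has small overlap with its average over those states. So you cannot compare against the fully averaged channel. Instead, learn the gates of length $d\le\tau=O(\log\lambda)$ by process tomography and average only over $d>\tau$.

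\textbf{Efficient preparation.} The partial average must be efficiently preparable inside the quantum OR test. The paper does this with phase-invariant $2q$-designs; phase invariance matters because coherent access sees the global phase of each state.

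\textbf{The oracle itself.} With only the reflection $\Id-2\proj{\psi_{k,x}}$, the generator cannot prepare $\ket{\psi_{k,x}}$ with polynomially many queries, so the PRFSG half fails. An arbitrary preparation unitary with its inverse is also unsafe: relative to a Haar-random completion, that unitary would itself be a PRU. The paper uses the swap $S_\phi$ between $\ket{0^{d+1}}$ and $\ket{\phi,1}$. It prepares the state in one query, and it equals $\Id-2\proj{u}$ with $u=(\ket{\phi,1}-\ket{0^{d+1}})/\sqrt2$, which gives the low-rank derivative.
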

This separation holds for a strong notion of PRFSGs---adaptively secure,
quantum-accessible PRFSGs---and a weak notion of PRUs---non-adaptively secure,
forward-only PRUs. Consequently, the same oracle separates any weaker notion of PRFSGs from any stronger notion of PRUs; for example, it separates classically accessible PRFSGs from adaptively secure PRUs with inverse and conjugate access \cite{C:Zhandry25}.
Furthermore, the impossibility holds even for candidate implementations that are not themselves unitary: they may be implemented by arbitrary efficient quantum channels using polynomially many ancillary qubits and intermediate measurements.
That is, our separation even rules out PRU implementations with negligible errors.

Our separation may also be related to the distinction between state
synthesis and unitary synthesis. Arbitrary quantum states can be
prepared efficiently using one query to a suitable classical oracle
\cite{SODA:Rosenthal24}, whereas an analogous one-query
cannot implement arbitrary unitaries \cite{STOC:LomMaWri24}. While the connection is unclear, these results similarly highlight a distinction between preparing quantum states
and implementing unitary transformations.

Our proof analyzes the derivatives of the oracle PRU construction by interpreting it as a map from oracles (or the states defining oracles) to the space of quantum operations. 
In this interpretation, the observation that ``the state generation can be implemented on a low-dimensional subspace'' is captured by the low rank of the gradient, which in turn allows us to show that the norm of the gradient is relatively small on average. 
Intuitively, the smallness of the derivative makes the output of the map---the resulting quantum operator---insensitive to perturbations, suggesting a concentration phenomenon.
A Poincaré inequality formally converts this gradient bound into concentration of the resulting quantum operation.
Such concentration is incompatible with the behavior of a random unitary, yielding the separation.
To handle non-unitary implementations, we establish a dichotomy: for a suitable input state, either the candidate produces a noticeably mixed output, which already distinguishes it from a unitary, or its output falls into the same concentration regime as above.

\subsection{Technical overview}
We use the common-Haar function-like state (CHFS) oracle together with the $\QPSPACE$ oracle \cite{TCC:AnaGulLin24,BEHMV26,EPRINT:GLMY25}.
In the overview, we use the CHFS oracle with a fixed length, say, $d=\lfloor\sqrt{\lambda}\rfloor$ for simplicity.
The CHFS oracle is defined by a family of $d$-qubit pure states $\Phi=(\ket{\phi_x})_{x\in \bit^d}$.
Given this, the CHFS oracle query is processed by
\[
S_d^\Phi = \sum_x \ketbra x \otimes S_{\phi_x}
\]
in superposition, 
where $S_\phi$ swaps between
$\ket{0^{d+1}}$ and $\ket{\phi,1}$ and is given by
\[
    S_\phi
    \defeq
    \Id
    -
    \proj{0^{d+1}}
    -
    \proj{\phi,1}
    +
    \ketbra{0^{d+1}}{\phi,1}
    +
    \ketbra{\phi,1}{0^{d+1}}
\]
which is inspired by the reflection/swap Haar state oracle~\cite{C:GolZha25,EC:BMMMY25}.

Let $\mathcal G^{S_d^\Phi}$ be a candidate PRU.
We focus on simulating the application of $\mathcal G^{S_d^\Phi}$ on a specific state \emph{without using} $S_d^\Phi$. In the full proof, we use process tomography for logarithmically small $d$ in the simulation. 
Given the simulation, the quantum OR test \cite{SODA:HarLinMon17} over the key together with a strong complexity oracle, say $\QPSPACE$, shows the insecurity of $\mathcal G^{S_d^\Phi}$ following, e.g., \cite{EC:CheColSat25}.
We omit the key in the algorithm as it is less relevant to the simulation.

\paragraph{Identity replacement and failures.}
We briefly explain the attempts to separate PRFSGs and PRUs in \cite{BEHMV26,EPRINT:GLMY25}. 
The main idea was to replace the oracle query $S_d^\Phi$ by the identity $I$, i.e., to consider $\mathcal G^{I}$ and then apply this PRU candidate on a Haar random state or the first half of the maximally entangled state $\ket{\Omega}$.

We consider the random-state input for intuition for now.
Without an ancillary register, an oracle-independent unitary maps random states into random states. 
Also note that the oracle query $S_d^\Phi$ only perturbs a low-dimensional subspace spanned by $\ket{x,0^{d+1}}$ and $\ket{x,\phi_x,1}$, thus for random state input each query does not change the state much. Therefore, replacing the oracle by the identity does not change the final output state much.

It turns out that ancillary registers break precisely this argument.
With a sufficiently large ancilla, an oracle-independent unitary can move the entire input state into the ancilla.
For example, for the query register $Q$ and the ancillary register $A$, it may map
\[
\ket{\psi}\ket{0}_A
\longmapsto
\ket{0^d,0^{d+1}}_Q\ket{\psi}_A
\begin{cases}
\xmapsto{S_d^\Phi}
\ket{0^d,\phi_{0^d},1}_Q\ket{\psi}_A\\
\xmapsto{~I~}
\ket{0^d,0^{d+1}}_Q\ket{\psi}_A
\end{cases}
\]
where we omit the unrelated registers.
Thus, replacing $S_d^\Phi$ by $I$ can drastically change an intermediate state.
This alone does not imply a large difference in the induced operation, since the discrepancy may eventually disappear into the ancillary workspace.
However, it breaks the original argument, as the two intermediate states remain far even after averaging over a random input.

Worse, there is a more severe counterexample: replacing the CHFS oracle
by the identity may result in a completely different procedure.
Fix $x\neq 0^d$. We use the CHFS oracle first on input $0^d$ and then on input $x$, uncomputing the input register after the queries.
We then apply an arbitrarily chosen unitary $V$, controlled on whether the
$d$-qubit state register in the ancilla is $\ket{0^d}$:
\[
\ket{\psi}\ket{0^{d+1}}_A
\begin{cases}
    \xmapsto{S_d^\Phi}
    \ket\psi\ket{\phi_{0^d},1}_A
    \simmapsto{S_d^\Phi}
    \ket\psi\ket{\phi_{0^d},1}_A
    \simmapsto{C_{0^d}V}
    \ket\psi\ket{\phi_{0^d},1}_A
    \\
    \xmapsto{~I~}
    \ket\psi\ket{0^d,0}_A\,\,
    \xmapsto{~I~}
    \ket\psi\ket{0^d,0}_A\,\,
    \xmapsto{C_{0^d}V}
    V\ket\psi\ket{0^d,0}_A
\end{cases}
\]
where in the above equation the first approximation is due to the almost orthogonality of $\ket{\phi_{0^d}}$ and $\ket{\phi_x}$ and the second is due to the almost orthogonality between $\ket{\phi_{0^d}}$ and $\ket{0^d}$.
This gives a more serious obstruction: the purity is almost unaffected, while replacing the CHFS oracle by $I$ can change the induced operation by an arbitrary unitary $V$. Note that the difference in the final qubit is also a problem.



\paragraph{Random CHFS replacement and derivatives.}
Our starting point is simple: we replace the original CHFS oracle $S_d^\Phi$
by another CHFS oracle $S_d^{\Phi'}$ defined by an independent family of random
states $\Phi'=(\ket{\phi'_x})_{x\in\bit^d}$.
This choice circumvents the second counterexample, as the same near-orthogonality
properties still hold for the new CHFS oracle.
Still, the first problem persists: we need a new proof strategy to compare
the two cases.

We use the following perspective.
Consider the output state
\[
\ket{\rho_\Phi}=(\mathcal G^{S_d^\Phi}\otimes I)\ket\Omega
\]
where, for simplicity, we suppose that $\mathcal G^{S_d^\Phi}$
is implemented by a unitary circuit with no ancilla, and identify it with the resulting unitary. The general case will be discussed later.
We then interpret this state as the output of the map
\[
(\text{product space of states})
\longrightarrow
(\text{space of output states}),
\qquad
\Phi
\longmapsto
\ket{\rho_{\Phi}}.
\]
This is obviously continuous, and moreover is a differentiable function! Here a derivative comes in, which exhibits the low-dimensionality of the CHFS oracle in a different form.

We use dots to denote differentials; e.g., $\dot\phi$ and $\dot S_\phi$ denote derivatives of $\phi$ and $S_\phi$,
respectively.
For ease of exposition, we suppress tangent coordinates and other
mathematical details, and also use dot notation for gradients.
To see how the differential viewpoint captures the low-rank intuition, recall
\[
S_\phi
\defeq
\Id
-
\proj{0^{d+1}}
-
\proj{\phi,1}
+
\ketbra{0^{d+1}}{\phi,1}
+
\ketbra{\phi,1}{0^{d+1}}.
\]
A direct calculation shows that its derivative is
\[
\dot S_\phi
=
-\ketbra{\dot\phi,1}{\phi,1}
-\ketbra{\phi,1}{\dot\phi,1}
+\ketbra{0^{d+1}}{\dot\phi,1}
+\ketbra{\dot\phi,1}{0^{d+1}}
\]
where the identity $I$ vanishes. 
In particular, the derivative has a constant rank, $\rank(\dot S_\phi)=O(1)$.
Accordingly, the derivative of the CHFS oracle becomes
\[
\dot S_d^\Phi = \sum_{x \in \bit^d}
\ketbra{x}\otimes \dot S_{\phi_x},
\]
which has low rank, $O(2^d)$, compared with the query register dimension $O(2^{2d})$.

Write a $q$-query oracle algorithm $\mathcal G^{S_d^\Phi}$ as
\[
W^\Phi
=
U_q S_d^\Phi U_{q-1} S_d^\Phi
\cdots
U_1 S_d^\Phi U_0
\]
by deferring all intermediate measurements.
Differentiating it gives
\begin{align*}
   \dot 
W^\Phi = {} &U_q \dot S_{d}^{\Phi} U_{q-1}S_d^\Phi
\cdots U_1S_d^\Phi U_0
\\
&+
U_q S_d^\Phi U_{q-1}\dot S_{d}^{\Phi}
\cdots U_1S_d^\Phi U_0
\\
&+\cdots+
U_q S_d^\Phi U_{q-1}S_d^\Phi
\cdots U_1\dot S_{d}^{\Phi}U_0.
\end{align*}
Thus, the derivative of the oracle algorithm has rank at most an
$O(q/2^d)$ fraction of the ambient dimension.
We exploit this low relative rank to bound the differential of the
output state obtained by applying the algorithm to half of the
maximally entangled state $\ket{\Omega}$. 
Intuitively, this controls how much the output state
$\ket{\rho_{\Phi}}$ changes as we perturb the oracle
states $\Phi$.

What we actually show is the following squared norm bound 
\[
\left\|
\ketbra{\dot\rho_{\Phi}}{\rho_{\Phi}}
+
\ketbra{\rho_{\Phi}}{\dot\rho_{\Phi}}
\right\|^2
\lesssim q^2.
\]
Note that this is a derivative of the density matrix $\ketbra{\rho_{\Phi}}=:\rho_{\Phi}$ where $\ket{\dot\rho_\Phi}
=
(\dot W^\Phi\otimes I)\ket{\Omega}.$
This turns out to suffice for our purpose.
The Poincar\'e inequality then roughly says that
\[
\EE_{\Phi_d}
\normtwo{\rho_\Phi-\EE_{\Phi}[\rho_\Phi]}^2
\leq
\frac{\EE_\Phi\!\left[\|\ketbra{\dot\rho_{\Phi}}{\rho_{\Phi}}+\ketbra{\rho_{\Phi}}{\dot\rho_{\Phi}}\|^2\right]}
{2^{d+1}-1}
=O(q^2 2^{-d}).
\]
Thus, the output state concentrates around the averaged state
$\EE_\Phi[\rho_\Phi]$, which can be effectively simulated.

\paragraph{Dealing with ancilla and non-unitarity.}
We now explain why this approach can handle ancillary qubits; it is convenient
to discuss the non-unitary case at the same time.
Recall that $W^\Phi
=
U_q S_d^\Phi U_{q-1} S_d^\Phi
\cdots
U_1 S_d^\Phi U_0$
represents the unitary part of the algorithm $\mathcal G^{S_d^\Phi}$.
Let $A$, $R$, and $B$ denote the algorithm register, the reference register
of the maximally entangled state, and the ancillary register, respectively.
We write
\[
\ket{\Xi_\Phi}_{ABR}
\defeq
(W^\Phi\otimes\Id_R)
\bigl(\ket{\Omega}_{AR}\ket{0}_B\bigr),
\qquad
\rho_\Phi
\defeq
\Tr_B\proj{\Xi_\Phi}.
\]
The squared norm that we need to bound is therefore
\[
\normtwo{
\Tr_B\left(
\ketbra{\dot\Xi_\Phi}{\Xi_\Phi}
+
\ketbra{\Xi_\Phi}{\dot\Xi_\Phi}
\right)
}^2.
\]
Here $\ket{\dot\Xi_\Phi}$ is generated by the low-rank derivative
$\dot W^\Phi$.

Suppose first that $\rho_\Phi$ is nearly pure, otherwise the purity test can distinguish $\mathcal G$ from random unitaries.
We observe that this purity condition gives the decomposition
\[
\ket{\Xi_\Phi}_{ABR}
=
\ket{\Xi_\Phi^{\mathrm{id}}}_{AR}
\otimes
\ket{\chi_\Phi}_B
+
\ket{e_\Phi},
\]
for a small error $\ket{e_\Phi}$.
Define
\[
\ket{\eta_\Phi}_{AR}
\defeq
(\Id_{AR}\otimes\bra{\chi_\Phi}_B)
\ket{\dot\Xi_\Phi}_{ABR}.
\]
Then the derivative of the reduced state decomposes as
\[
\ketbra{\eta_\Phi}{\Xi_\Phi^{\mathrm{id}}}
+
\ketbra{\Xi_\Phi^{\mathrm{id}}}{\eta_\Phi}
+
E_\Phi, \qquad E_\Phi
\defeq
\Tr_B\!\left(
\ketbra{\dot\Xi_\Phi}{e_\Phi}
+
\ketbra{e_\Phi}{\dot\Xi_\Phi}
\right).
\]
The first two terms have the same form as in the pure case, while
the last term is small due to the smallness of
$\ket{e_\Phi}$ together with our derivative bound.

\ifnum\fullpage=0
\paragraph{AI use disclosure.} The author first learned about the Poinca\'re inequality through ChatGPT 5.4 Pro. The mathematical ideas and proofs in this work were developed and completed by the author. ChatGPT 6 Astra was then used to prepare an initial written draft of the full proofs. The author subsequently rewrote and edited the manuscript. The author takes full responsibility for all mathematical claims and the content of this work.
\fi

\section{Preliminaries}
\label{sec:preliminaries}

\paragraph{Notation.}
All functions and algorithms are parameterized by
the security parameter $\lambda\in\N$.
We write $[m]=\{1,\ldots,m\}$.
For a distribution $\mathcal D$ and a finite set $S$, we write
$x\gets\mathcal D$ and $x\gets S$ for sampling from $\mathcal D$ and uniform sampling from $S$, respectively.

We identify each quantum register with a finite-dimensional
Hilbert space.
We write $\cL(\cH)$ for the space of linear operators on $\cH$
and $\Herm(\cH)$ for its real vector space of Hermitian operators.
On $\Herm(\cH)$, the Hilbert--Schmidt inner product is
$(X,Y)_{\rm HS}=\Tr(XY)$.

A quantum state on a register $A$ is represented
by a density matrix $\rho\geq0$ with $\Tr(\rho)=1$.
A pure state is represented by a unit vector $\ket\psi$, and we
write $\ketbra\psi$ for its density matrix.

A quantum channel $\mathcal E$ from $A$ to $A'$ is a completely
positive, trace-preserving linear map; $\mathcal E(\rho)$ denotes
its output on input $\rho$.
We write $\mathsf U(A)$ for the unitaries on $A$.
A unitary $U$ induces the channel $\mathcal U(\rho)=U\rho U^\dagger$.
The identity unitary and channel on $A$ are denoted by $\Id_A$
and $\mathcal I_A$, respectively.

\subsection{Distances and tomography}
For a vector $v$, we write $\norm v$ for its Euclidean norm.
For an operator $X$, the trace, Hilbert--Schmidt, and operator
norms are
\[
\normone X=\Tr\sqrt{X^\dagger X},
\qquad
\normtwo X=\sqrt{\Tr(X^\dagger X)},
\qquad
\normop X=\sup_{\norm v=1}\norm{Xv}.
\]
We use $\normtwo X\leq\normone X$ and
$\normone{YXZ}\leq\normop Y\normone X\normop Z$
for operators of compatible dimensions.

The trace distance between $\rho$ and $\sigma$ is
$\normone{\rho-\sigma}/2$.
For any two-outcome measurement, the absolute difference
in acceptance probabilities on $\rho$ and $\sigma$ is at most
their trace distance, and some measurement achieves this bound.
Applying the same channel to two states cannot increase their
trace distance. 
For pure states $\ket\psi,\ket\phi$,
\[
\frac{
\normone{\proj\psi-\proj\phi}}{2}
=\sqrt{1-\abs{\braket{\psi}{\phi}}^2}=
\frac{\normtwo{\proj\psi-\proj\phi}}{\sqrt 2}.
\]
For two (possibly unnormalized) states $\ket u,\ket e\in A\otimes B$, we also have the inequality
$\normtwo{\Tr_B(\ketbra{u}{e})}\leq\norm{\ket u}\norm{\ket e}$.

For channels $\mathcal E,\mathcal F$ from $A$ to $A'$, their
diamond-norm distance is
\[
\normdiamond{\mathcal E-\mathcal F}
=\sup_{C,\rho_{AC}}
\normone{
(\mathcal E\otimes\mathcal I_C)(\rho_{AC})
-(\mathcal F\otimes\mathcal I_C)(\rho_{AC})
},
\]
where the supremum is over all finite-dimensional $C$ and all states $\rho_{AC}$ on $A\otimes C$.
Replacing gates in a quantum computation changes its output state,
in trace norm, by at most the sum of the diamond-norm distances
of the replaced gates.

We use the following process tomography result from \cite{HKOT23}.

\begin{theorem}
\label{thm:process-tomography}
Given oracle access to a unitary $Z$ in dimension $D$ and
$\zeta,\delta\in(0,1)$, an algorithm outputs a classical description
of a unitary $\widetilde Z$ such that
$\normdiamond{\mathcal Z-\widetilde{\mathcal Z}}\leq\zeta$
with probability at least $1-\delta$.
It uses $O(D^2\zeta^{-1}\log(2/\delta))$ queries and time
polynomial in $D,1/\zeta,\log(2/\delta)$.
\end{theorem}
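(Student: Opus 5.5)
The statement is the process tomography theorem of \cite{HKOT23}. We would prove it by reducing unitary learning to state tomography through the Choi state, and then correcting the estimate so that it is unitary.

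\textbf{Choi-state tomography.} Prepare the maximally entangled state $\ket{\Omega}$ on $\C^D\otimes\C^D$ and apply one query of $Z$ to the first half. This gives the pure Choi state
\[
\ket{v_Z}=(Z\otimes \Id)\ket{\Omega}.
\]
This state determines $Z$ up to a global phase, and a global phase is irrelevant for the channel $\mathcal Z$. The state lives in dimension $D^2$, so standard pure-state tomography yields an estimate $\ket{\hat v}$ with $\abs{\braket{\hat v}{v_Z}}^2\geq 1-\epsilon$. The cost is $O(D^2\epsilon^{-1}\log(1/\delta))$ copies, with running time polynomial in $D,1/\epsilon,\log(1/\delta)$. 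One can use the estimator based on random measurements, or the symmetric-subspace estimator followed by a median-of-means boost to reach confidence $1-\delta$.

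\textbf{Rounding to a unitary.} Reshape $\ket{\hat v}$ into a $D\times D$ matrix $\hat M$ with $\ket{\hat v}=(\hat M\otimes\Id)\ket{\Omega}$. Take its polar decomposition, or an SVD, and let $\widetilde Z$ be the closest unitary. Infidelity $\epsilon$ between the Choi states means $\normtwo{\hat M-e^{i\theta}Z}^2/D=O(\epsilon)$ for a suitable phase $\theta$. Projecting to the nearest unitary at most doubles this distance in normalized Frobenius norm. Hence $\abs{\Tr(\widetilde Z^\dagger Z)}/D\geq 1-O(\epsilon)$.

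\textbf{From average to worst case: the main obstacle.} Choi-state fidelity controls only the average-case (entanglement) fidelity. The diamond norm is a worst-case quantity, and naively it costs a factor of $D$. Tomography at precision $\epsilon\approx \zeta^2/D$ would then cost $D^3/\zeta^2$ queries, which loses the claimed $D^2/\zeta$ scaling.

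To avoid this loss we would follow the adaptive refinement strategy of \cite{HKOT23}:
\begin{enumerate}
\item Learn a rough estimate $\widetilde Z_1$ as above.
\item Replace the oracle by the near-identity unitary $\widetilde Z_1^\dagger Z$, so that the remaining error is small.
\item Learn the error unitary with a Heisenberg-scaling procedure, applying it repeatedly and using phase estimation on its eigenphases, and iterate.
\end{enumerate}
Repetition amplifies the deviation of $\widetilde Z_1^\dagger Z$ from the identity, so each round halves the operator-norm error at geometrically growing query cost. Summing the rounds gives total cost $O(D^2\zeta^{-1}\log(1/\delta))$.

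Operator-norm closeness up to phase then gives diamond-norm closeness, via $\normdiamond{\mathcal Z-\widetilde{\mathcal Z}}\leq 2\min_\theta\normop{Z-e^{i\theta}\widetilde Z}$. A union bound over the $O(\log(1/\zeta))$ rounds, with per-round failure probability $\delta/\log(1/\zeta)$, gives overall confidence $1-\delta$. All classical processing consists of SVDs and eigendecompositions of $D\times D$ matrices, which are polynomial time.

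The step we expect to need the most care is this Heisenberg-limited refinement, namely controlling the eigenphase estimation error uniformly over all $D$ eigenvectors. Since the paper only uses the result for $D=2^{O(d)}$ with $d$ logarithmic, the simpler Choi-state route with its $D^3/\zeta^2$ cost would already suffice for the later separation argument.
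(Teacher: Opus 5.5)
The paper gives no proof of this statement. It imports it from \cite{HKOT23}, so citing it is exactly what the paper does. As a reconstruction of that proof, however, your main route has a genuine gap at the hand-off from step~1 to the refinement.

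A Heisenberg-type bootstrap only works if the current estimate is close to $Z$, up to a global phase, in \emph{operator} norm. That is, all eigenphases of $\widetilde Z_1^\dagger Z$ must be small. Only then does the power $(\widetilde Z_1^\dagger Z)^m$ stay in an arc where the principal $m$-th root of an estimate is well defined and Lipschitz. That is what converts constant error on the power into $O(1/m)$ error on $\widetilde Z_1^\dagger Z$.

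Your step~1 does not provide this. With $O(D^2)$ queries, Choi-state tomography plus polar rounding gives only $\min_\theta\normtwo{\widetilde Z_1-e^{i\theta}Z}^2=O(D)$. This allows a constant fraction of the eigenphases of $\widetilde Z_1^\dagger Z$ to be of order one, even $\pi$. Powering wraps these around, and root extraction returns them with $\Theta(1)$ errors. Re-learning the power with the same average-case method again controls only the normalized Frobenius error, so these directions are never repaired. ``Phase estimation on its eigenphases'' does not fix this either: the eigenvectors are unknown, and phase estimation alone does not output a description of the unitary.

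The missing ingredient is a base learner that reaches \emph{constant operator-norm} (equivalently, diamond-norm) accuracy with only $O(D^2)$ queries. This is a separate, nontrivial component of \cite{HKOT23}, whose analysis must control the error in every direction rather than on average. Once it is available, the bootstrap feeds it the powers $(\widetilde Z_j^\dagger Z)^{2^j}$ and extracts $\widetilde Z_{j+1}$ from a principal root. Getting that accuracy from your step~1 instead costs $\Theta(D^3)$ queries per round, which is exactly the factor-$D$ loss you identified. A minor point: splitting $\delta$ uniformly over the $O(\log(1/\zeta))$ rounds adds a $\log\log(1/\zeta)$ term. Giving round $j$ failure probability $\delta 2^{-(J-j+1)}$, with $J$ the last round, keeps the total at $O(D^2\zeta^{-1}\log(2/\delta))$.

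Your closing fallback, on the other hand, is correct, and it is all the paper actually uses. Choi-state tomography to infidelity $\epsilon=\Theta(\zeta^2/D)$, polar rounding, and $\normdiamond{\mathcal Z-\widetilde{\mathcal Z}}\le2\min_\theta\normop{Z-e^{i\theta}\widetilde Z}\le2\min_\theta\normtwo{Z-e^{i\theta}\widetilde Z}$ together yield a $\zeta$-accurate estimate. This uses $O(D^3\zeta^{-2}\log(1/\delta))$ queries and polynomial time. In \cref{subsec:simulating-g} the theorem is applied only to the controlled gates for $S_d^\Phi$ with $d\le\tau=O(\log\lambda)$, so $D=2^{2d+2}=\poly(\lambda)$. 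With $\zeta=1/(400q)$ and $\delta=2^{-2\lambda}$, this cost remains polynomial. It does not, however, establish the $O(D^2\zeta^{-1}\log(2/\delta))$ bound as stated, which genuinely relies on the operator-norm base learner above.
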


\subsection{Haar randomness and concentration}
Haar-random pure states and unitaries are sampled with respect
to the normalized Haar measure in the relevant dimension.
We sometimes write $\psi\gets\Haar$ for sampling from the Haar distribution.

The following concentration bound follows from \cite{Mec19,Kre21}.

\begin{lemma}
\label{cor:haar-concentration}
Let $U_1,\ldots,U_m$ be independent Haar-random unitaries
of dimensions $d_1,\ldots,d_m\geq d$.
Let $g(U)$ be the acceptance probability of a quantum algorithm
making at most $q\geq1$ queries to $\bigoplus_{j=1}^m U_j$.
Then, for every $t>0$,
\[
\Prb[g(U)\geq\EE g(U)+t]
\leq\exp\left(-\frac{(d-2)t^2}{24q^2}\right).
\]
The same bound holds for $\Prb[g(U)\leq\EE g(U)-t]$.
\end{lemma}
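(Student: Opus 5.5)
The plan is to derive the statement from Lipschitz concentration for products of unitary groups (Meckes, as used by Kretschmer), applied to the single block-diagonal unitary $U=\bigoplus_{j=1}^m U_j$. The two ingredients are a Lipschitz bound for $g$ and a Sobolev/concentration constant for the product group.

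\emph{Lipschitz bound.} View $(U_1,\ldots,U_m)$ as a point of $\mathsf U(d_1)\times\cdots\times\mathsf U(d_m)$ with the $\ell_2$-sum of Hilbert--Schmidt metrics. The acceptance probability is $g(U)=\Tr(\Pi\,\sigma_U)$, where the final state $\sigma_U$ is produced by a circuit with $q$ oracle calls. We may assume the queries are controlled or conjugated uses of $U$, since other query models only change constants. By the hybrid argument stated in the preliminaries (replacing gates costs at most the sum of diamond distances), we get
\[
\abs{g(U)-g(V)}\le \sum_{i=1}^q \normdiamond{\mathcal U-\mathcal V}\le 2q\normop{U-V}.
\]
The operator norm of the direct sum satisfies $\normop{U-V}=\max_j\normop{U_j-V_j}$, which is at most $\bigl(\sum_j\normtwo{U_j-V_j}^2\bigr)^{1/2}$. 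Hence $g$ is $L$-Lipschitz with $L=2q$, and crucially $L$ does not depend on $m$.

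\emph{Concentration on the product.} Meckes' theorem states that Haar measure on $\mathsf U(n)$ satisfies concentration for $L$-Lipschitz $f$ of the form $\Prb[f\ge\EE f+t]\le\exp(-(n-2)t^2/(24L^2))$. One route is through the log-Sobolev inequality on $\mathsf{SU}(n)$ together with the coupling $U=e^{i\theta}V$. Log-Sobolev inequalities tensorize with the worst constant. So on the product of groups with $d_j\ge d$, the same inequality holds with $n$ replaced by $d$, independently of $m$. Herbst's argument then gives
\[
\Prb[g\ge\EE g+t]\le\exp\!\left(-\frac{(d-2)t^2}{24\cdot 4q^2}\right).
\]
To recover the constant $24$ exactly, I would follow Kretschmer's more careful bound. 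There one uses $\normdiamond{\mathcal U-\mathcal V}\le\normtwo{U-V}$ up to the right normalization, or takes the Lipschitz constant of $g$ to be $q$ by bounding the pure-state trace distance by $\norm{(U-V)\ket\psi}$ directly, which gives $L=q$. The lower-tail bound follows by applying the same argument to $1-g$.

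\emph{Main obstacle.} The subtle point is that the unitary groups themselves do not tensorize. The concentration constant involves the $\mathsf U(1)$ phase direction, which only concentrates at rate $n$ after the Meckes--Meckes coupling. I would handle this by citing the statement for products from Meckes' book (Theorem 5.17 there covers products of classical groups) rather than reproving it. That leaves only the Lipschitz computation, where care with the constant $q$ versus $2q$ is needed.
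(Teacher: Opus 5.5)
This is correct and is essentially the paper's route: the paper gives no separate proof, but cites Meckes' concentration theorem for products of unitary groups with $d=\min_j d_j$ (as restated by Kretschmer), combined with the fact that a $q$-query acceptance probability is $q$-Lipschitz in the $\ell_2$-sum of Hilbert--Schmidt norms, which is exactly your pure-state bound $L=q$. Your diamond-norm hybrid also gives $L=q$ once you include the factor $1/2$ between acceptance-probability differences and $\normone{\cdot}$, so the intermediate $L=2q$ (and the resulting constant $96$) does not actually arise.
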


We also use the Chernoff-Hoeffding bound.
\begin{lemma}
\label{lem:bernoulli-chernoff}
Let $X_1,\ldots,X_r$ be independent Bernoulli variables with
$\EE X_i=p$. For $0<p<a<1$,
\[
\Prb\left[\frac1r\sum_{i=1}^rX_i\geq a\right]
\leq
\exp\left(-r\left[
a\log\frac ap+(1-a)\log\frac{1-a}{1-p}
\right]\right).
\]
\end{lemma}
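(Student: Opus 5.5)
The statement is the Chernoff--Hoeffding bound in relative-entropy form, and I would prove it by the standard exponential-moment (Chernoff) method. Set $S=\sum_{i=1}^r X_i$. For any $s>0$, Markov's inequality applied to the nonnegative variable $e^{sS}$ gives
\[
\Prb[S\geq ar]\leq e^{-sar}\,\EE e^{sS}.
\]
Independence lets the moment generating function factor:
\[
\EE e^{sS}=\prod_{i=1}^r\EE e^{sX_i}=(1-p+pe^s)^r .
\]
Hence
\[
\Prb\Big[\tfrac1r S\geq a\Big]\leq \exp\Big(-r\big[sa-\log(1-p+pe^s)\big]\Big),
\]
and the task reduces to maximizing $f(s)=sa-\log(1-p+pe^s)$ over $s>0$.

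Differentiating, the stationary point satisfies
\[
a=\frac{pe^s}{1-p+pe^s},
\]
which gives
\[
e^{s^*}=\frac{a(1-p)}{p(1-a)}.
\]
Because $a>p$, we have $a(1-p)>p(1-a)$, so $s^*>0$ and the point is admissible.

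Substituting $s^*$ back, first compute
\[
1-p+pe^{s^*}=\frac{1-p}{1-a}.
\]
Then
\[
f(s^*)=a\log\frac{a(1-p)}{p(1-a)}-\log\frac{1-p}{1-a}=a\log\frac ap+(1-a)\log\frac{1-a}{1-p},
\]
which is exactly the exponent in the statement. This completes the proof.

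The main place to be careful is checking the admissibility $s^*>0$, which uses $p<a$, and the condition $a<1$ so that the logarithms are finite. The rest is routine algebra.
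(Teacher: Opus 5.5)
Your proof is correct. The paper does not prove this lemma; it quotes it as the standard Chernoff--Hoeffding bound, and your exponential-moment argument with $e^{s^*}=a(1-p)/(p(1-a))$ is the standard derivation (you also never need $s^*$ to be a maximizer, since the bound holds for every $s>0$ and plugging in $s^*>0$ suffices).
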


\subsection{Choi states and purity tests}
Let $A$ have dimension $d_A$, and let $R$ be an ancillary
register of the same dimension.
The maximally entangled state and the Choi
state of a channel $\mathcal E$ on $A$ are
\[
\ket\Omega_{AR}
\defeq\frac1{\sqrt{d_A}}\sum_{a=1}^{d_A}\ket a_A\ket a_R,
\qquad
J(\mathcal E)\defeq
(\mathcal E\otimes\mathcal I_R)(\proj\Omega).
\]
Thus $J(\mathcal E)$ is prepared by applying $\mathcal E$
to the $A$ register of $\ket\Omega_{AR}$.
If $\mathcal V$ is induced by a unitary $V$, then
$J(\mathcal V)$ is the pure state with state vector
$(V\otimes\Id_R)\ket\Omega$.

We use the swap test and the purity test.

\begin{lemma}[Swap and purity tests]
\label{lem:swap-test}
The swap test on $\rho\otimes\sigma$ passes with probability
$(1+\Tr(\rho\sigma))/2$.
On two independent copies of $\rho$, it is the
purity test and fails with probability
$(1-\Tr(\rho^2))/2$.
\end{lemma}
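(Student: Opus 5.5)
We prove the two claims of \cref{lem:swap-test} by a direct computation with the swap operator.

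\textbf{Setup.} Let $F$ denote the swap operator on two copies of the register, $F\ket{a}\ket{b}=\ket{b}\ket{a}$. Recall the swap test:
\begin{itemize}
\item[(i)] prepare an ancilla qubit in $\ket{+}$;
\item[(ii)] apply controlled-$F$ on $\rho\otimes\sigma$;
\item[(iii)] measure the ancilla in the Hadamard basis and pass on outcome $\ket{+}$.
\end{itemize}

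\textbf{Reduction to the symmetric projector.} Writing out the circuit, the probability of passing is
\[
\Tr\Bigl(\tfrac{\Id+F}{2}\,(\rho\otimes\sigma)\Bigr),
\]
since $(\Id+F)/2$ is the projector onto the symmetric subspace. One can check this first for pure product inputs by expanding the ancilla amplitudes. It then extends to mixed inputs by linearity in $\rho\otimes\sigma$, because the acceptance probability is a linear functional of the input state.

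\textbf{The swap trick.} The key step is the identity $\Tr(F(\rho\otimes\sigma))=\Tr(\rho\sigma)$. We verify it in the computational basis:
\[
\Tr\bigl(F(\rho\otimes\sigma)\bigr)=\sum_{a,b}\bra{a}\bra{b}F(\rho\otimes\sigma)\ket{a}\ket{b}=\sum_{a,b}\rho_{ba}\sigma_{ab}=\Tr(\rho\sigma).
\]
Combining with $\Tr(\rho\otimes\sigma)=1$, the pass probability is $(1+\Tr(\rho\sigma))/2$.

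\textbf{Purity test.} Specializing to $\sigma=\rho$, the pass probability becomes $(1+\Tr(\rho^2))/2$. The failure probability is therefore
\[
1-\frac{1+\Tr(\rho^2)}{2}=\frac{1-\Tr(\rho^2)}{2}.
\]

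There is no real obstacle here. The only point needing care is the bookkeeping of the controlled-swap amplitudes, which leads to the projector $(\Id+F)/2$.
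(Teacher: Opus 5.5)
Your proof is correct. The acceptance probability $\Tr\bigl(\tfrac{\Id+F}{2}(\rho\otimes\sigma)\bigr)$, combined with the swap trick $\Tr(F(\rho\otimes\sigma))=\Tr(\rho\sigma)$, is the standard argument; the paper states this lemma as a preliminary without proof, so there is nothing further to compare.
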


We will use the following lemmas for almost pure states.
\ifnum\submission=1
The proof of these lemmas can be found in \cref{app:prelproofs}.

\begin{lemma}[Purification near a pure state]
\label{lem:elementary-quantum-estimates}
Let $\ket\Xi_{AB}$ be a purification of $\rho_A$.
If $\normone{\rho-\proj\psi}\leq\eps\leq2$ for a pure state
$\ket\psi_A$, then some unit vector $\ket\eta_B$ satisfies
$\norm{\ket\Xi-\ket\psi\ket\eta}\leq\sqrt\eps$.
\end{lemma}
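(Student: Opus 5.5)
The plan is to go through the Schmidt decomposition of $\ket\Xi_{AB}$ and the fidelity between $\rho_A$ and $\proj\psi$. Write
\[
\ket\Xi=\sum_i\sqrt{p_i}\,\ket{a_i}_A\ket{b_i}_B,
\]
so that $\rho=\sum_ip_i\proj{a_i}$. Consider the (possibly unnormalized) vector $\ket{\tilde\eta}_B\defeq(\bra\psi_A\otimes\Id_B)\ket\Xi$. Its squared norm is
\[
\norm{\ket{\tilde\eta}}^2=\bra\psi\rho\ket\psi\eqqcolon F.
\]
If $F>0$, take the unit vector $\ket\eta=\ket{\tilde\eta}/\sqrt F$; if $F=0$, take any unit vector.

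Next, compute the distance directly:
\[
\norm{\ket\Xi-\ket\psi\ket\eta}^2=2-2\,\mathrm{Re}\,\bigl(\bra\psi\bra\eta\bigr)\ket\Xi .
\]
With our choice of $\ket\eta$, the inner product is
\[
\bra\eta\,(\bra\psi\otimes\Id)\ket\Xi=\braket{\eta}{\tilde\eta}=\sqrt F.
\]
Hence the squared distance equals $2-2\sqrt F$. (This is the standard Uhlmann-type choice, and it is optimal.)

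It remains to lower-bound $F$ in terms of $\eps$. Apply the two-outcome measurement $\{\proj\psi,\Id-\proj\psi\}$. By the operational characterization of trace distance stated in the preliminaries,
\[
1-F=\Tr(\proj\psi\,\proj\psi)-\Tr(\proj\psi\,\rho)\le\tfrac12\normone{\rho-\proj\psi}\le\eps/2 ,
\]
so $F\ge 1-\eps/2$; the hypothesis $\eps\le2$ ensures this is nonnegative. Then $\sqrt F\ge F\ge1-\eps/2$ because $F\in[0,1]$, and therefore
\[
\norm{\ket\Xi-\ket\psi\ket\eta}^2=2-2\sqrt F\le 2-2(1-\eps/2)=\eps .
\]
Taking square roots gives the claim.

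There is no serious obstacle. The only points needing care are the degenerate case $F=0$, which forces $\eps\ge2$ and makes the bound $\sqrt 2\le\sqrt\eps$ trivially true for any unit $\ket\eta$, and making sure the real part of the inner product is exactly $\sqrt F$. The latter holds by construction, so no phase adjustment is needed.
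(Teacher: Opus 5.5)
Your proposal is correct and follows the paper's proof essentially step for step: it uses the same vector $(\bra\psi\otimes\Id_B)\ket\Xi$, the same bound $1-F\leq\eps/2$ from the measurement $\{\proj\psi,\Id-\proj\psi\}$, the same identity $\norm{\ket\Xi-\ket\psi\ket\eta}^2=2-2\sqrt F$, and the same treatment of the degenerate case $F=0$ (which forces $\eps=2$). The Schmidt decomposition you write down at the start is never used and can be dropped.
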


\begin{lemma}[Trace norm near pure states]
\label{lem:rank-one-trace-norm-comparison}
Let $\rho,\sigma$ be density operators and $P,Q$ rank-one projectors
on the same space. If $\epsilon_\rho=\normone{\rho-P}$ and
$\epsilon_\sigma=\normone{\sigma-Q}$, then $
\normone{\rho-\sigma}
\leq(1+\sqrt2)(\epsilon_\rho+\epsilon_\sigma)
+\sqrt2\normtwo{\rho-\sigma}.$
\end{lemma}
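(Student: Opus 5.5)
The plan is to pass to the Hilbert--Schmidt norm, where everything is explicit for rank-one projectors, and to pay the additive price of $\epsilon_\rho+\epsilon_\sigma$ once on each side via the triangle inequality. Write $P=\proj p$ and $Q=\proj q$. Then
\[
\normone{\rho-\sigma}\leq\normone{\rho-P}+\normone{P-Q}+\normone{Q-\sigma}=\epsilon_\rho+\epsilon_\sigma+\normone{P-Q}.
\]
So it suffices to bound $\normone{P-Q}$ in terms of $\normtwo{\rho-\sigma}$ and the $\epsilon$'s.

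For the two rank-one projectors, the identity in the preliminaries gives
\[
\normone{P-Q}=2\sqrt{1-\abs{\braket pq}^2}=\sqrt2\,\normtwo{P-Q}.
\]
Next apply the triangle inequality for the Hilbert--Schmidt norm, together with $\normtwo X\leq\normone X$:
\[
\normtwo{P-Q}\leq\normtwo{P-\rho}+\normtwo{\rho-\sigma}+\normtwo{\sigma-Q}\leq\epsilon_\rho+\epsilon_\sigma+\normtwo{\rho-\sigma}.
\]

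Combining the two displays gives
\[
\normone{\rho-\sigma}\leq\epsilon_\rho+\epsilon_\sigma+\sqrt2\bigl(\epsilon_\rho+\epsilon_\sigma+\normtwo{\rho-\sigma}\bigr)=(1+\sqrt2)(\epsilon_\rho+\epsilon_\sigma)+\sqrt2\normtwo{\rho-\sigma},
\]
which is exactly the claimed bound.

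There is no real obstacle here. The only point needing care is that the constant $\sqrt2$ linking the trace and Hilbert--Schmidt norms is valid only for the difference of two pure states. For that reason the norm comparison must be applied to $P-Q$ and not directly to $\rho-\sigma$, which is why the argument is routed through $P$ and $Q$.
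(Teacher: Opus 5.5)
Your proof is correct and matches the paper's argument step for step. You apply the trace-norm triangle inequality through $P$ and $Q$, use $\normone{P-Q}=\sqrt2\normtwo{P-Q}$ for the difference of pure states, and bound $\normtwo{P-Q}\leq\epsilon_\rho+\normtwo{\rho-\sigma}+\epsilon_\sigma$ via the Hilbert--Schmidt triangle inequality together with $\normtwo{X}\leq\normone{X}$.
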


For a channel $\mathcal E$ on $A$, define its Choi error and
purity defect by
\[
e(\mathcal E)
\defeq
\inf_{V\in\mathsf U(A)}
\normone{J(\mathcal E)-J(\mathcal V)},
\qquad
\Delta(\mathcal E)
\defeq
1-\Tr\bigl(J(\mathcal E)^2\bigr).
\]
Given two forward calls to $\mathcal E$, we can test the purity of its Choi state: prepare two copies
of $J(\mathcal E)$ and apply the purity test.
It fails with probability $\Delta(\mathcal E)/2$;
for a unitary channel, it never fails.
The following lemma relates this probability to the Choi error.

\begin{lemma}[Purity and Choi error]
\label{lem:purity-to-unitary-choi}
Every channel $\mathcal E$ from $A$ to itself satisfies
\[
\Delta(\mathcal E)
\leq e(\mathcal E)
\leq 5\sqrt{\Delta(\mathcal E)}.
\]
\end{lemma}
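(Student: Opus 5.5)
We prove the two inequalities separately, writing $\rho=J(\mathcal E)$ for the Choi state and $\ket{v}=(V\otimes\Id_R)\ket\Omega$ for the (pure) Choi state of a unitary $V$.

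\emph{Lower bound $\Delta(\mathcal E)\le e(\mathcal E)$.} Fix any unitary $V$ and set $P=\proj{v}$. Since $\rho$ is a density operator, $\normop{\rho}\le 1$, so
\[
\Tr(\rho^2)\;\ge\;\Tr(\rho P)\;-\;\normone{\rho-P}\normop{\rho},
\]
using $\Tr(\rho P)-\Tr(\rho^2)=\Tr(\rho(P-\rho))\le\normop{\rho}\normone{P-\rho}$.

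Next we bound $\Tr(\rho P)$ from below. Since $\Tr(P\,\cdot)$ is a two-outcome measurement with value $1$ on $P$, we get $1-\Tr(\rho P)\le\normone{\rho-P}/2$. Combining the two estimates,
\[
\Delta(\mathcal E)=1-\Tr(\rho^2)\le\tfrac32\normone{\rho-P}.
\]
This carries a factor $3/2$, so we sharpen it by diagonalizing $\rho$ with eigenvalues $\lambda_i$, where $\lambda_1$ is the largest.
\begin{itemize}[nosep]
\item On one side, $1-\Tr\rho^2=\sum_i\lambda_i(1-\lambda_i)\le 1-\lambda_1$.
\item On the other side, for a rank-one projector $P$, the eigenvalues of $P-\rho$ interlace those of $-\rho$ (rank-one perturbation). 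Hence the negative part of $\rho-P$ has trace at least $1-\lambda_1$ minus nothing, and since $\Tr(\rho-P)=0$ its positive part matches. So $\normone{\rho-P}\ge 2(1-\lambda_1)$.
\end{itemize}
Therefore $\Delta\le 1-\lambda_1\le\normone{\rho-P}/2\le\normone{\rho-P}$. Taking the infimum over $V$ gives $\Delta(\mathcal E)\le e(\mathcal E)$.

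\emph{Upper bound $e(\mathcal E)\le 5\sqrt{\Delta(\mathcal E)}$.} The plan has three steps.

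\emph{Step 1: $\rho$ is close to its top eigenvector.} Let $\ket{w}$ be a top eigenvector of $\rho$, with eigenvalue $\lambda_1$. Since $\lambda_1\ge\Tr\rho^2$, we have $1-\lambda_1\le\Delta$. Then
\[
\normone{\rho-\proj w}=2(1-\lambda_1)\le 2\Delta .
\]

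\emph{Step 2: $\ket w$ is close to the Choi vector of a unitary.} Trace preservation gives $\Tr_A\rho=\Id_R/d_A$. Hence $\Tr_A\proj w$ is within trace distance $2\Delta$ of $\Id/d_A$, i.e., $\ket w$ is nearly maximally entangled. Write $\ket w=(M\otimes\Id)\ket\Omega$ with $\Tr(M^\dagger M)=d_A$, so that $\Tr_A\proj w=\overline{M^\dagger M}/d_A$. Take the polar decomposition $M=V|M|$.

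We compare $\ket w$ with $\ket v=(V\otimes\Id)\ket\Omega$:
\[
|\braket{v}{w}|=\frac{\Tr|M|}{d_A}.
\]
Using $(\sqrt x-1)^2\le|x-1|$ entrywise on the eigenvalues $x$ of $|M|^2$, we get
\[
1-\frac{\Tr|M|}{d_A}=\frac1{2d_A}\Tr\bigl(|M|-\Id\bigr)^2\le\frac1{2d_A}\normone{M^\dagger M-\Id}\le\Delta .
\]
Consequently
\[
\normone{\proj w-\proj v}=2\sqrt{1-|\braket vw|^2}\le 2\sqrt{2\Delta}.
\]

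\emph{Step 3: combine.} By the triangle inequality,
\[
e(\mathcal E)\le 2\Delta+2\sqrt{2\Delta}\le(2+2\sqrt2)\sqrt\Delta\le 5\sqrt\Delta,
\]
where we used $\Delta\le1$.

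The main point to check carefully is Step 2, namely the root-fidelity bound via the polar decomposition. The remaining steps are routine eigenvalue estimates.
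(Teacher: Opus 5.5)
\paragraph{Upper bound.} Your upper bound is correct and is the paper's argument in operator language. The polar decomposition $M=V|M|$ plays the role of the paper's Schmidt decomposition of the top eigenvector. Your quantity $1-\Tr|M|/d_A$ is exactly half of the paper's $\norm{\ket\gamma-(V\otimes\Id_R)\ket\Omega}^2=\sum_j(\sqrt{s_j}-1/\sqrt{d_A})^2$, and both are bounded by the same estimate $(\sqrt x-1)^2\le|x-1|$.

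\paragraph{Gap in the lower bound.} The gap is in the lower bound, in the part you called routine. Your first bullet, $1-\Tr\rho^2=\sum_i\lambda_i(1-\lambda_i)\le 1-\lambda_1$, is false. The correct inequality runs the other way, $\Tr\rho^2=\sum_i\lambda_i^2\le\lambda_1$, and this is exactly what you use in Step~1 of your upper bound. For instance, eigenvalues $(1/2,1/4,1/4)$ give $5/8>1/2$. The intermediate conclusion $\Delta\le\normone{\rho-P}/2$ also cannot be rescued, even for Choi states. Take the qubit channel $\rho\mapsto(1-\epsilon)\rho+\epsilon Z\rho Z$ with $0<\epsilon<1/2$. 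Its Choi state is a mixture of two orthogonal maximally entangled states with weights $1-\epsilon$ and $\epsilon$, so $\Delta=2\epsilon(1-\epsilon)$ while $e(\mathcal E)=2\epsilon$. This violates $\Delta\le e/2$, and it also shows that the constant $1$ in the lemma is asymptotically tight.

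\paragraph{Repair.} The fix is short. Your second bullet, $\normone{\rho-P}\ge 2(1-\lambda_1)$, is true; more directly, $\normone{\rho-P}\ge 2\Tr((\Id-P)\rho)\ge 2(1-\lambda_1)$. So replace the first bullet by $1-\Tr\rho^2\le 1-\lambda_1^2\le 2(1-\lambda_1)$. This gives $\Delta\le\normone{\rho-P}$ for every unitary $V$, hence $\Delta\le e(\mathcal E)$.

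Alternatively, your initial $3/2$ estimate already closes if you use that $P-\rho$ is traceless and $\rho\ge 0$: then $\Tr(\rho(P-\rho))\le\normop{\rho}\Tr(P-\rho)_+=\normone{\rho-P}/2$.

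The paper takes a third route: $\Tr\rho^2\ge\langle v|\rho|v\rangle^2\ge(1-\normone{\rho-P}/2)^2$, which gives $\Delta\le e-e^2/4$.
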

\else

\begin{lemma}[Purification near a pure state]
\label{lem:elementary-quantum-estimates}
Let $\ket\Xi_{AB}$ be a purification of $\rho_A$.
If $\normone{\rho-\proj\psi}\leq\eps\leq2$ for a pure state
$\ket\psi_A$, then some unit vector $\ket\eta_B$ satisfies
$\norm{\ket\Xi-\ket\psi\ket\eta}\leq\sqrt\eps$.
\end{lemma}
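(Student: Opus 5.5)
We choose $\ket\eta_B$ to be the normalized vector obtained by projecting $\ket\Xi$ onto $\ket\psi_A$, and bound the remaining distance by a fidelity estimate.

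Write $p\defeq\bra\psi\rho\ket\psi$, and decompose the purification as
\[
\ket\Xi=\ket\psi_A\otimes\ket{\tilde\eta}_B+\ket{r},
\qquad
\ket{\tilde\eta}_B\defeq(\bra\psi_A\otimes\Id_B)\ket\Xi,
\]
where $\ket r$ lies in the subspace $(\Id-\proj\psi)_A\otimes\Id_B$. The two pieces are orthogonal, and the norms are $\norm{\ket{\tilde\eta}}^2=\bra\psi\rho\ket\psi=p$ and $\norm{\ket r}^2=1-p$. If $p>0$, set $\ket\eta\defeq\ket{\tilde\eta}/\sqrt p$; if $p=0$, take $\ket\eta$ to be any unit vector. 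In either case $\braket{\Xi}{\psi,\eta}=\sqrt p$ is real and nonnegative, so
\[
\norm{\ket\Xi-\ket\psi\ket\eta}^2=2-2\sqrt p .
\]

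It remains to relate $p$ to $\eps$. By the variational characterization of the trace norm, using the observable $\proj\psi$ with $\normop{\proj\psi}=1$,
\[
1-p=\Tr\bigl(\proj\psi(\proj\psi-\rho)\bigr)\le\normop{\proj\psi}\,\normone{\rho-\proj\psi}\le\eps .
\]
This step is valid because $\proj\psi-\rho$ has trace zero. It gives $p\ge1-\eps$, and the hypothesis $\eps\le2$ is not even needed here.

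The remaining step is the elementary inequality $2-2\sqrt p\le\eps$ for $p\ge1-\eps$. If we only use $1-\sqrt p\le1-p$, we get $2-2\sqrt p\le2(1-p)\le2\eps$, which loses a factor of $2$. To recover the sharp constant we use the stronger bound $1-p\le\eps/2$. This holds because $\proj\psi-\rho$ is traceless and Hermitian, so its positive and negative parts have equal trace, each equal to $\normone{\proj\psi-\rho}/2$. Then
\[
1-p=\Tr\bigl(\proj\psi(\proj\psi-\rho)\bigr)\le\Tr\bigl((\proj\psi-\rho)_+\bigr)=\normone{\rho-\proj\psi}/2\le\eps/2 .
\]
Consequently
\[
\norm{\ket\Xi-\ket\psi\ket\eta}^2=2(1-\sqrt p)\le2(1-p)\le\eps ,
\]
using $\sqrt p\ge p$ for $p\in[0,1]$. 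This proves the claim; the assumption $\eps\le2$ only ensures the bound is non-vacuous.
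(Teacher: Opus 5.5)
Your proof is correct and is essentially the paper's argument: the same choice $\ket\eta\propto(\bra\psi\otimes\Id_B)\ket\Xi$, the same identity $\norm{\ket\Xi-\ket\psi\ket\eta}^2=2-2\sqrt p$, and the same bound $1-p\le\normone{\rho-\proj\psi}/2$. The paper phrases that bound via the measurement $\{\proj\psi,\Id-\proj\psi\}$, while you derive it from the positive part of a traceless Hermitian operator; beyond that, your preliminary bound $1-p\le\eps$ is superfluous, and you fold the $p=0$ case into the same inequality chain rather than treating it separately.
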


\begin{proof}
Let $w=(\bra\psi\otimes\Id_B)\ket\Xi$ and
$p=\norm w^2=\Tr(\proj\psi\rho)$.
The measurement $\{\proj\psi,\Id-\proj\psi\}$ accepts
$\proj\psi$ with probability one and $\rho$ with probability $p$.
Hence $1-p\leq\normone{\rho-\proj\psi}/2\leq\eps/2$.
If $p>0$, let $\ket\eta=w/\sqrt p$; then
$\norm{\ket\Xi-\ket\psi\ket\eta}^2=2-2\sqrt p\leq\eps$.
If $p=0$, then $\eps=2$ and any unit vector $\ket\eta$ suffices.
\end{proof}

\begin{lemma}[Trace norm near pure states]
\label{lem:rank-one-trace-norm-comparison}
Let $\rho,\sigma$ be density operators and $P,Q$ rank-one projectors
on the same space. If $\epsilon_\rho=\normone{\rho-P}$ and
$\epsilon_\sigma=\normone{\sigma-Q}$, then $
\normone{\rho-\sigma}
\leq(1+\sqrt2)(\epsilon_\rho+\epsilon_\sigma)
+\sqrt2\normtwo{\rho-\sigma}.$
\end{lemma}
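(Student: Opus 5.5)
Write $P=\proj p$, $Q=\proj q$ for unit vectors $p,q$. The plan is to run the triangle inequality through the two rank-one projectors,
\[
\normone{\rho-\sigma}\leq\epsilon_\rho+\epsilon_\sigma+\normone{P-Q},
\]
and then bound $\normone{P-Q}$ by the Hilbert--Schmidt distance. For pure states the preliminaries give the exact identity $\normone{P-Q}=\sqrt2\,\normtwo{P-Q}$.

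The next step replaces $\normtwo{P-Q}$ by $\normtwo{\rho-\sigma}$. By the triangle inequality in Hilbert--Schmidt norm,
\[
\normtwo{P-Q}\leq\normtwo{P-\rho}+\normtwo{\rho-\sigma}+\normtwo{\sigma-Q}.
\]
Using $\normtwo X\leq\normone X$, the outer terms are at most $\epsilon_\rho$ and $\epsilon_\sigma$. Combining the two displays gives
\[
\normone{\rho-\sigma}\leq(\epsilon_\rho+\epsilon_\sigma)+\sqrt2(\epsilon_\rho+\epsilon_\sigma)+\sqrt2\normtwo{\rho-\sigma},
\]
which is exactly the claimed constant $(1+\sqrt2)$.

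The only step needing care is the pure-state identity $\normone{P-Q}=\sqrt2\normtwo{P-Q}$. This holds because $P-Q$ is a traceless rank-at-most-two Hermitian operator, so its eigenvalues are $\pm\mu$. Hence $\normone{P-Q}=2\mu$ and $\normtwo{P-Q}=\sqrt2\mu$. This identity is already recorded in the preliminaries, so I expect no real obstacle; the argument is routine once one routes through $P$ and $Q$.
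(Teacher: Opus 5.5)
Your proposal is correct and is the paper's proof. It routes the triangle inequality through $P$ and $Q$, uses $\normone{P-Q}=\sqrt2\normtwo{P-Q}$, and then bounds $\normtwo{P-Q}\leq\epsilon_\rho+\normtwo{\rho-\sigma}+\epsilon_\sigma$ via $\normtwo{X}\leq\normone{X}$, with your eigenvalue argument ($P-Q$ traceless of rank at most two) correctly justifying the identity the paper takes from the preliminaries.
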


\begin{proof}
The triangle inequality and
$\normone{P-Q}=\sqrt2\normtwo{P-Q}$ give
\[
\normone{\rho-\sigma}
\leq\epsilon_\rho+\epsilon_\sigma+\sqrt2\normtwo{P-Q}.
\]
Also, $\normtwo{P-Q}\leq
\epsilon_\rho+\normtwo{\rho-\sigma}+\epsilon_\sigma$.
Combining the two inequalities proves the claim.
\end{proof}

For a channel $\mathcal E$ on $A$, define its Choi error and
purity defect by
\[
e(\mathcal E)
\defeq
\inf_{V\in\mathsf U(A)}
\normone{J(\mathcal E)-J(\mathcal V)},
\qquad
\Delta(\mathcal E)
\defeq
1-\Tr\bigl(J(\mathcal E)^2\bigr).
\]
Given two forward calls to $\mathcal E$, we can test the purity of its Choi state: prepare two copies
of $J(\mathcal E)$ and apply the purity test.
It fails with probability $\Delta(\mathcal E)/2$;
for a unitary channel, it never fails.
The following lemma relates this probability to the Choi error.

\begin{lemma}[Purity and Choi error]
\label{lem:purity-to-unitary-choi}
Every channel $\mathcal E$ from $A$ to itself satisfies
\[
\Delta(\mathcal E)
\leq e(\mathcal E)
\leq 5\sqrt{\Delta(\mathcal E)}.
\]
\end{lemma}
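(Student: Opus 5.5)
We prove the two inequalities separately, writing $\rho\defeq J(\mathcal E)$, a density operator on $AR$.

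\textbf{Lower bound $\Delta(\mathcal E)\leq e(\mathcal E)$.} Fix any unitary $V$ and set $\ket v\defeq(V\otimes\Id_R)\ket\Omega$, so that $J(\mathcal V)=\proj v$. Since $\rho\geq0$ and $\Tr\rho=1$, we have $\Tr(\rho^2)\geq\lambda_{\max}(\rho)^2$. The key step is to compare $\lambda_{\max}(\rho)$ with the fidelity to $\ket v$. The two-outcome measurement $\{\proj v,\Id-\proj v\}$ accepts $\proj v$ with certainty and accepts $\rho$ with probability $\bra v\rho\ket v$. Hence
\[
1-\bra v\rho\ket v\leq\tfrac12\normone{\rho-\proj v},
\]
and $\lambda_{\max}(\rho)\geq p\defeq\bra v\rho\ket v$. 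Therefore
\[
\Delta=1-\Tr\rho^2\leq1-p^2=(1-p)(1+p)\leq2(1-p)\leq\normone{\rho-\proj v}.
\]
Taking the infimum over $V$ gives $\Delta(\mathcal E)\leq e(\mathcal E)$. This direction is routine.

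\textbf{Upper bound $e(\mathcal E)\leq5\sqrt{\Delta}$.} This is the main step. We may assume $\Delta\leq1/25$, since otherwise $5\sqrt\Delta\geq1$. Still, $e\leq2$ always holds, so the case $1/25<\Delta$ with $e$ close to $2$ needs a little care. The cleanest fix is to arrange the constants so the main argument covers all $\Delta\leq4/25$, where $5\sqrt\Delta\geq2$ takes over.

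First, let $\lambda=\lambda_{\max}(\rho)$ with top eigenvector $\ket\psi$. Then $\Tr\rho^2\leq\lambda$, so $1-\lambda\leq\Delta$. Moreover
\[
\normone{\rho-\proj\psi}=(1-\lambda)+(1-\lambda)=2(1-\lambda)\leq2\Delta.
\]

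Second, we use the Choi constraint. Trace preservation gives $\Tr_A\rho=\Id_R/d_A$. Write $\ket\psi=(X\otimes\Id_R)\ket\Omega$ for an operator $X$ on $A$ with $\Tr(X^\dagger X)=d_A$. Its reduced state on $R$ is $\overline{X^\dagger X}/d_A$ (up to transpose), and it is close to $\Id/d_A$:
\[
\normone{X^\dagger X/d_A-\Id/d_A}\leq\normone{\rho-\proj\psi}\leq2\Delta.
\]

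Third, take the polar decomposition $X=V|X|$ with $V$ unitary. Then
\[
\braket{v}{\psi}=\Tr(|X|)/d_A,\qquad\text{where } \ket v=(V\otimes\Id)\ket\Omega.
\]
Writing $|X|=\sqrt{X^\dagger X}$ with eigenvalues $s_i^2$ of $X^\dagger X$, we have
\[
1-\tfrac1{d_A}\sum_i s_i\leq\tfrac1{d_A}\sum_i|1-s_i^2|\leq2\Delta,
\]
using $|1-s|\leq|1-s^2|$ for $s\geq0$. Hence $|\braket v\psi|\geq1-2\Delta$, and
\[
\normone{\proj\psi-\proj v}=2\sqrt{1-|\braket v\psi|^2}\leq2\sqrt{4\Delta}=4\sqrt\Delta.
\]

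Finally, the triangle inequality gives
\[
e\leq2\Delta+4\sqrt\Delta\leq5\sqrt\Delta
\]
whenever $\Delta\leq1/4$. For $\Delta>1/4$ the bound is trivial since $5\sqrt\Delta>2\geq e$.

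The only delicate points are the transpose and conjugation conventions in the reduced state of $(X\otimes\Id)\ket\Omega$, which do not affect the trace norm, and applying monotonicity of the trace norm under partial trace.
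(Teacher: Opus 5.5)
Your proof is correct and follows essentially the paper's argument. The lower bound is the same: the measurement onto $J(\mathcal V)$ together with $\Tr(\rho^2)\ge\bra{v}\rho\ket{v}^2$. The upper bound has the same structure: top eigenvector, the Choi marginal $\Tr_A\rho=\Id_R/d_A$ with monotonicity of the partial trace, an aligning unitary, and the triangle inequality. Your polar-decomposition unitary for $X$ is exactly the paper's Schmidt-basis alignment. The only difference is that your overlap estimate is a factor of two cruder than the paper's $\norm{\ket\gamma-(V\otimes\Id_R)\ket\Omega}^2\le 2\Delta(\mathcal E)$, which is why you need the case split at $\Delta=1/4$. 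That split is sound because $e\le 2$, even though your opening remark about assuming $\Delta\le 1/25$ was a false start.
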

\begin{proof}
Let $\rho=J(\mathcal E)$.
Choose a unitary channel $\mathcal V$ attaining the infimum in the definition of $e(\mathcal E)$.
Since $J(\mathcal V)$ is pure, the measurement
$\{J(\mathcal V),\Id-J(\mathcal V)\}$ accepts $J(\mathcal V)$
with probability one. Hence
$\Tr(J(\mathcal V)\rho)\geq
1-\normone{\rho-J(\mathcal V)}/2
=1-e(\mathcal E)/2$.
Since $\Tr(\rho^2)\geq\Tr(J(\mathcal V)\rho)^2$,
we have $\Delta(\mathcal E)\leq e(\mathcal E)$.

Conversely, let $p$ be the largest eigenvalue of $\rho$ and
$\ket\gamma$ a corresponding eigenvector.
Since $p\geq\Tr(\rho^2)=1-\Delta(\mathcal E)$,
$\normone{\rho-\proj\gamma}=2(1-p)\leq2\Delta(\mathcal E)$.
As $\Tr_A\rho=\Id_R/d_A$ and partial trace cannot increase
trace distance,
\[
\normone{\Tr_A\proj\gamma-\Id_R/d_A}
\leq\normone{\proj\gamma-\rho}\leq2\Delta(\mathcal E).
\]
Write $\ket\gamma$ in Schmidt form and choose a unitary $V$ such that
\[
\ket\gamma=\sum_{j=1}^{d_A}\sqrt{s_j}\ket{a_j}_A\ket{r_j}_R,
\qquad
(V\otimes\Id_R)\ket\Omega
=\frac1{\sqrt{d_A}}\sum_{j=1}^{d_A}\ket{a_j}_A\ket{r_j}_R.
\]
Then
\[
\norm{\ket\gamma-(V\otimes\Id_R)\ket\Omega}^2
=\sum_j\left(\sqrt{s_j}-\frac1{\sqrt{d_A}}\right)^2
\leq\sum_j\abs{s_j-1/d_A}\leq2\Delta(\mathcal E).
\]
The pure-state distance bound and the triangle inequality give
$e(\mathcal E)\leq2\Delta(\mathcal E)+2\sqrt{2\Delta(\mathcal E)}
\leq5\sqrt{\Delta(\mathcal E)}$,
where the last inequality uses $0\leq\Delta(\mathcal E)\leq1$.
\end{proof}
\fi

\subsection{PRFSGs and PRUs}
The main targets of this paper are the pseudorandom function-like state generators (PRFSGs) and pseudorandom unitaries (PRUs)~\cite{TCC:AGQY22,C:JiLiuSon18}. By default we use the definitions in the oracle world.

\begin{definition}[PRFSG]
\label{def:prfsg}
Let $\kappa,m,n$ be such that
$\kappa,m=\omega(\log\lambda)$.
We say an oracle QPT algorithm $\Gen^{\mathcal O}$ is a quantum-accessible, adaptively secure
\emph{pseudorandom function-like state generator} (PRFSG) if the following hold:
\begin{itemize}
    \item $\Gen^{\mathcal O}$ maps $(k,x)\in\{0,1\}^{\kappa(\lambda)} \times\{0,1\}^{m(\lambda)}$ to a fresh $n(\lambda)$-qubit state in a new register;
    \item For every oracle QPT adversary $\mathcal A$, 
    \[
        \abs{
        \Prb_{k\gets\{0,1\}^{\kappa(\lambda)}}
        [\mathcal A^{\mathcal O,\Gen_k^{\mathcal O}(\cdot)}=1]
        -
        \Prb_{G_{\rm Haar}}
        [\mathcal A^{\mathcal O,G_{\rm Haar}(\cdot)}=1]
        }
        \leq\negl(\lambda)
    \]
    where $G_{\rm Haar}$ samples an independent Haar state $\ket{\psi_x}$ for each $x$ at the beginning of the game. On each query, it applies the isometry $\ket{x}_X\mapsto\ket{x}_X\ket{\psi_x}_Y$. The algorithm $\mathcal A$ is allowed to query the oracle coherently and adaptively.
\end{itemize}
\end{definition}

For the pseudorandom unitaries, we allow the construction to be a channel, i.e., allow some errors as long as they are indistinguishable from random unitaries.
In the main theorem, our adversary will query PRU candidates non-adaptively only in the forward direction.

\begin{definition}[PRU]
\label{def:uniform-inverseless-pru-construction}
Let $n=\omega(\log\lambda)$. 
We say an oracle QPT algorithm $\mathcal G^{\mathcal O}$ is a \emph{pseudorandom unitary} (PRU) if the following hold:
\begin{itemize}
    \item For each $k\in\{0,1\}^\lambda$, $\mathcal G_k^{\mathcal O}$ is a channel from $n(\lambda)$ qubits to $n(\lambda)$ qubits.
    \item For every oracle QPT adversary $\mathcal A$, 
    \[
        \abs{
        \Prb_{k\gets\{0,1\}^{\lambda}}
        [\mathcal A^{\mathcal O,\mathcal G_k^{\mathcal O}}=1]
        -
        \Prb_{U\gets\mathsf U(2^{n(\lambda)})}
        [\mathcal A^{\mathcal O,\mathcal U}=1]
        }
        \leq\negl(\lambda).
    \]
\end{itemize}
\end{definition}

\subsection{The quantum OR test}
\label{subsec:imported-tools}
We use the unitary $\QPSPACE$ oracle defined as follows.
\begin{definition}
\label{def:qpspace-oracle}
The unitary oracle $\QPSPACE$ takes a quantum register $A$,
a classical Turing machine $M$, a time bound $t$ encoded
in binary, and an abort qubit.
If $M$ outputs a valid unitary circuit on $A$ within $t$ steps,
the oracle applies that circuit to $A$.
Otherwise, it flips the abort qubit.
The oracle preserves the description registers and acts
coherently on superpositions of $(M,t)$.
\end{definition}

Recall the quantum OR test \cite{SODA:HarLinMon17,EC:BMMMY25}.

\begin{lemma}[Quantum OR test]
\label{lem:quantum-or}
Let $\{M_i\}_{i=1}^N$ be two-outcome measurements on the same state
$\xi$, and let $0<a<1/2$ and $b>0$. There is a quantum OR test acting on one copy of $\xi$
with the following guarantees. If some $M_i$ accepts with probability
at least $1-a$, the test accepts with probability at least
$(1-a)^2/7$. If every $M_i$ accepts with probability at most $b$,
the test accepts with probability at most $4Nb$.
\end{lemma}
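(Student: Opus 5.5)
The plan is to prove this the way Harrow, Lin, and Montanaro do, as restated in \cite{EC:BMMMY25}. The test appends a small ancilla and performs the $N$ measurements one after another. Before each step, a fresh random choice decides whether $M_i$ (lifted to act jointly with the ancilla) is applied or skipped. The test accepts as soon as some step accepts.

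Let $\Pi_i$ be the accepting projector of $M_i$, after deferring any non-projective parts via Naimark dilation. The skipping is implemented by a controlled version of the projector, controlled on a fresh uniformly random qubit. In expectation this averages each step's disturbance with the identity and dephases the post-measurement state. This dephasing is what makes the completeness analysis work.

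\textbf{Soundness.} Suppose every $M_i$ accepts $\xi$ with probability at most $b$. The test rejects exactly when the whole sequence of reject projectors $\Id-\Pi_i$ (or their skip-averaged versions) passes. I will invoke a quantum union bound for sequential projective measurements, in the form of Gao or Sen. It states that for any state $\xi$,
\[
1-\Tr\bigl(Q_N\cdots Q_1\,\xi\,Q_1\cdots Q_N\bigr)\leq 4\sum_{i=1}^N\Tr\bigl((\Id-Q_i)\xi\bigr).
\]
Here $Q_i$ is the reject projector of the $i$-th step. The skip randomisation only lowers each term $\Tr((\Id-Q_i)\xi)$, so each term is at most $b$. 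This gives acceptance probability at most $4Nb$.

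\textbf{Completeness.} Suppose $\Pi_{i^\ast}$ accepts $\xi$ with probability at least $1-a$. Consider the event that every step before $i^\ast$ rejects. Conditioned on that event, I want to show that the state reaching step $i^\ast$ still has large overlap with $\Pi_{i^\ast}$, unless the test has already accepted earlier. The idea is a gentle-measurement bookkeeping that tracks two quantities: the component of the state in $\mathrm{range}(\Pi_{i^\ast})$, and the accumulated acceptance probability.

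Because each earlier step is applied only with probability $1/2$, any step that removes a large part of the good component must itself accept with comparable probability. That contributes directly to the acceptance probability. Steps that remove little cannot move the good component far.

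Quantitatively, the plan is to lower-bound
\[
\Pr[\text{accept before } i^\ast]+\Pr[\text{reach } i^\ast]\cdot\Tr\bigl(\Pi_{i^\ast}\xi'\bigr)
\]
by $\bigl(\sqrt{1-a}-\text{loss}\bigr)^2$. I will then show that the loss can be absorbed into a constant factor, which yields $(1-a)^2/7$ after optimising the constants. The hypothesis $a<1/2$ is used to keep $1-a$ bounded away from the loss.

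The main obstacle is this completeness step. Naive sequential measurement can steer the state away from $\mathrm{range}(\Pi_{i^\ast})$ without ever accepting, and the random skipping is exactly what prevents this. I would first follow the HLM proof of their Corollary~11 verbatim, adapting its constant to $1/7$. If the bookkeeping becomes messy, the fallback is to cite \cite{SODA:HarLinMon17,EC:BMMMY25} directly, since the lemma is imported rather than new.
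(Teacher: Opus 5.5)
Your soundness half is fine: for each fixed skip pattern, the all-reject branch is a product of the performed projectors $\Id-\Pi_i$. Gao's bound then gives acceptance at most $4Nb$, and averaging over the coins preserves this.

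The gap is in completeness, and it is more than unfinished bookkeeping. The test you describe (fixed order, each step independently skipped with probability $1/2$, halt on the first acceptance) does not satisfy the completeness guarantee. Here is a counterexample.

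\begin{itemize}
\item Take $\xi=\proj{0}$ on $\C^2$ and write $\ket\theta=\cos\theta\ket0+\sin\theta\ket1$.
\item Set $s=\pi/(2(N-1))$ and $\theta_j=js$. For $j<N$, let $M_j$ accept with projector $\Id-\proj{\theta_j}$.
\item Let $M_N$ accept with $\proj0$. This accepts $\xi$ with probability $1$, so the completeness hypothesis holds.
\end{itemize}

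Conditioned on no acceptance so far, the state is $\ket{\theta_\ell}$, where $\ell$ is the last performed index (with $\theta_0=0$). A performed step $j<N$ then accepts with probability $\sin^2((j-\ell)s)\le(j-\ell)^2s^2$.

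The gaps $j-\ell$ are truncated geometric with second moment at most $6$. About $(N-1)/2$ steps are performed, so the steps $j<N$ contribute at most $3(N-1)s^2=3\pi^2/(4(N-1))$ in total.

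The last step accepts with probability $\tfrac12\sin^2((N-1-\ell)s)$. Here $N-1-\ell$ has bounded second moment, so this contributes $O(N^{-2})$. The event $\ell=0$ adds at most $2^{-N}$.

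So your test accepts with probability $O(1/N)$, far below $(1-a)^2/7$.

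This is precisely the Zeno drift you say the skipping prevents. Skipping only replaces unit steps by geometric gaps. The accumulated rotation is still first order, $\sum(j-\ell)s\approx\pi/2$, while the accumulated acceptance is second order, $\sum(j-\ell)^2s^2=O(1/N)$. Hence the claim that steps which individually remove little cannot move the good component far is false, and there is no $N$-independent loss to absorb into a constant.

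A correct self-contained proof needs a genuinely non-local anti-Zeno mechanism. Two options are:
\begin{itemize}
\item a uniformly random order of the measurements, combined with a gentle-random-measurement bound bounding the disturbance by the square root of the total acceptance probability (She and Yuen);
\item a Marriott--Watrous or phase-estimation-type procedure.
\end{itemize}
So following the cited proof verbatim cannot mean the test you wrote down.

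The paper itself gives no proof of this lemma; it simply imports it from \cite{SODA:HarLinMon17,EC:BMMMY25}. Your fallback of citing it is therefore exactly the paper's route. The reproof you sketch, however, is not valid as stated.
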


\ifnum\submission=1
We allow the measurements to make $\QPSPACE$ oracle queries. It is known that the quantum-or lemma can be executed in $\QPSPACE$ even when the measurements queries $\QPSPACE$, see \cite[Appendix A]{EC:CheColSat25}. See \cref{cor:qpspace-quantum-or} for a slightly more expanded discussion.

\else
We allow the measurements to make $\QPSPACE$ oracle queries.
The following corollary is implicit in the previous works, see e.g., \cite[Appendix A]{EC:CheColSat25}.

\begin{corollary}[QPSPACE implementation]
\label{cor:qpspace-quantum-or}
In \cref{lem:quantum-or}, suppose that $M_i$ is described by unitary QPT circuits with $\QPSPACE$ access that are generated uniformly from $i$ and a common polynomial-length
classical input. 
Suppose that $\xi$ consists of
polynomially many qubits and $\log N=\poly(\lambda)$.
Then the quantum OR test has a uniform QPT implementation
with access to $\QPSPACE$.
\end{corollary}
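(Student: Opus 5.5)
The approach is to follow the standard route of \cite[Appendix A]{EC:CheColSat25}: first reduce the quantum OR test of \cref{lem:quantum-or} to a uniformly describable unitary circuit, then dispatch the expensive part to the $\QPSPACE$ oracle. Recall how the test of \cite{SODA:HarLinMon17} is built. It uses an ancilla register indexing $i\in[N]$, together with projectors $\Pi_i$ obtained by running the unitary circuit $C_i$ implementing $M_i$ (including its purification workspace), projecting onto acceptance, and uncomputing. The test alternates between two reflections: a reflection about the uniform superposition over $i$ (tensored with $\xi$), and a controlled reflection $\sum_i \ketbra{i}\otimes(2\Pi_i-\Id)$, with a random number of repetitions. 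This is the Marriott--Watrous style procedure; equivalently one uses the single-projection variant of \cite{EC:BMMMY25}. The acceptance guarantees $(1-a)^2/7$ and $4Nb$ are properties of this abstract procedure alone, so we only need a faithful implementation.

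\textbf{Step 1.} Show that every gate of the test is a uniform polynomial-space unitary circuit on polynomially many qubits. The index register has $\log N=\poly(\lambda)$ qubits, so the uniform superposition and its reflection take polynomial size. The controlled reflection $\sum_i\ketbra i\otimes(2\Pi_i-\Id)$ is also uniform in the common classical input: on input $i$, a Turing machine computes the description of $C_i$, and coherent control on $i$ is achieved by having the single machine take $i$ from the register. The repetition count is sampled classically.

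\textbf{Step 2.} Handle the fact that the $C_i$ themselves query $\QPSPACE$. The QPT algorithm cannot compile "$C_i$ with oracle gates" directly, since the oracle is not a circuit it knows. Instead, we use that $\QPSPACE$ is closed under this composition: a $\QPSPACE$ gate inside $C_i$ is a unitary computable by a polynomial-space machine description. So a polynomial-space Turing machine $M^\ast$ can output (with exponential time bound $t$, written in polynomially many bits) a circuit that replaces each oracle gate by the corresponding unitary circuit. The plan is to nest one level, which suffices because queries inside $C_i$ have polynomially bounded descriptions. Thus the entire controlled reflection, or even the whole test for a fixed repetition count, is one $\QPSPACE$ query on input $(M^\ast,t)$. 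The abort flag never triggers because $t$ is chosen large enough.

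\textbf{Step 3.} The main obstacle is the recursion in Step 2. Each oracle gate inside $C_i$ may itself invoke machines running in exponential time, and we must argue that the flattened circuit is still generated within a time bound representable in polynomially many bits, with unitary action matching the oracle exactly. I expect to handle this by bounding the inner machines' time bounds $t'$ by $2^{\poly(\lambda)}$ (they are written in the polynomial-size circuit $C_i$) and taking $t=\poly\cdot 2^{\poly}$. With that bound in hand, correctness follows immediately from \cref{lem:quantum-or}.
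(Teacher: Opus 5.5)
\paragraph{Where the argument breaks.} Your Steps~2--3 are sound and match the ingredient the paper takes from \cite[Lemma~5.2]{EC:BMMMY25}. Because the $M_i$ query only $\QPSPACE$, one outer machine can enumerate every index $i$ and every inner description $(M',t')$, simulate the inner machines for at most $2^{\poly(\lambda)}$ steps, and output the multiplexed circuit $\sum_{M',t'}\proj{M',t'}\otimes U_{M',t'}$ together with the abort flips. You should say explicitly that inner queries may be on superposed descriptions, but handling this is routine.

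The gap is in Step~1. The corollary concerns the particular test of \cref{lem:quantum-or}, whose guarantees $(1-a)^2/7$ and $4Nb$ are proved in \cite{SODA:HarLinMon17} for their specific procedure. The one nontrivial ingredient the paper takes from \cite[Appendix~A]{EC:CheColSat25} is that \emph{this} test has a uniformly generated, polynomial-width unitary implementation. You instead build an amplitude-amplification loop: reflections about $\ket u\otimes\Id$ and $\sum_i\proj i\otimes(2\Pi_i-\Id)$, repeated a random number of times. You then assert that the lemma's constants are ``properties of this abstract procedure.'' Nothing supports this, and it is not automatic. With reflections the acceptance probability grows quadratically in the iteration count, so for $t$ uniform on $[0,T)$ the natural no-case bound is $\tfrac{4T^2-1}{3}\,b$. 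You need $T=\Theta(\sqrt N)$ to get $O(Nb)$; a Marriott--Watrous-length run of $\Theta(N)$ iterations would give only $O(N^2b)$. You never specify this distribution. Hence ``correctness follows immediately from \cref{lem:quantum-or}'' does not apply to the circuit you build. Meanwhile, the test the lemma does apply to is never shown to be a polynomial-width unitary.

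\paragraph{How to repair it.} Either of two fixes closes the gap.

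\emph{Option 1.} Take the unitary OR circuit of \cite[Appendix~A]{EC:CheColSat25} as the object of Step~1. Your flattening argument then becomes exactly the paper's proof.

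\emph{Option 2.} Keep the Grover-type loop but analyze it yourself. The compression of $\Pi=\sum_i\proj i\otimes\Pi_i$ onto the range of $\proj u\otimes\Id$ is $\proj u\otimes\frac1N\sum_i\Pi_i$. Jordan's lemma plus the standard random-iteration bound with $T=\Theta(\sqrt N)$ then gives:
\begin{itemize}
\item constant acceptance in the yes case, since $\xi$ has constant weight on eigenvalues of $\frac1N\sum_i\Pi_i$ of order at least $1/N$;
\item acceptance $O(Nb)$ in the no case.
\end{itemize}
This proves a variant of \cref{lem:quantum-or} with different constants. It suffices for \cref{lem:haar-case,lem:candidate-case}, but it is not the corollary as stated.
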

\begin{proof}
The unitary OR construction of \cite[Appendix~A]{EC:CheColSat25} can be executed using $\QPSPACE$.
The proof of \cite[Lemma~5.2]{EC:BMMMY25} extends this implementation to binary measurements given uniformly by polynomial-size circuits with $\QPSPACE$ access.
Since the family $\{M_i\}_{i=1}^N$ has such implementations, $\xi$ has polynomially many qubits, and $\log N=\poly(\lambda)$, this result gives the claimed
QPT implementation with access to $\QPSPACE$.
\end{proof}

\fi

\section{Hilbert-valued Functions and Poincar\'e Inequalities}
\label{sec:poincare}
This section introduces the mathematical background on the basic differential analysis of functions from product spheres to Hilbert spaces, and the Poincar\'e inequality for those functions.

Throughout this paper, every function differentiated is $C^{}$\footnote{The number $1$ here simultaneously denotes $C^1$ and the footnote index. A $C^1$ function is a differentiable function with a continuous derivative.}.
All derivatives are taken with respect to real
parameters and, for vector- or matrix-valued maps, are computed entrywise
in fixed bases.

\subsection{Functions on spheres and their derivatives}
We first review the notions and results about the functions on spheres.
Let $\Sphere(\C^D):=\{\psi\in\C^D:\norm\psi=1\}$ be the complex unit sphere.
We regard it as the $(2D-1)$-dimensional real unit sphere in $\R^{2D}$,
with real inner product $\operatorname{Re}\langle u,v\rangle$.
Its tangent space at $\psi\in\Sphere(\C^D)$ is
\[
    T_\psi\Sphere(\C^D)
    =
    \{v\in\C^D:\operatorname{Re}\langle\psi,v\rangle=0\}.
\]
If $e_1,\ldots,e_{D-1}$ is a complex orthonormal basis of 
the complex orthogonal complement $\psi^\perp:=
    \{v\in\C^D:\langle\psi,v\rangle=0\}$ of $\psi$,
then
\begin{align}
    \{i\psi,e_1,ie_1,\ldots,e_{D-1},ie_{D-1}\}
    \label{eq:real-tangent-basis}
\end{align}
forms a real orthonormal basis of $T_\psi\Sphere(\C^D)$.

We consider a function $f:\Sphere(\C^D)\to\R$.
At $\psi\in\Sphere(\C^D)$, the derivative of $f$ along a direction
$\phi\in T_\psi\Sphere(\C^D)$ is defined by
\[
    \frac{\partial f}{\partial\phi}(\psi)
    :=
    \left.\frac{\dd}{\dd t}f(\psi(t))\right|_{t=0},
\]
where $\psi(t)$ is any differentiable curve on $\Sphere(\C^D)$ with
$\psi(0)=\psi$ and
$\left.\frac{\dd}{\dd t}\psi(t)\right|_{t=0}=\phi$.
This quantity depends only on $\phi$, not on the choice of the curve.

For the basis chosen in \cref{eq:real-tangent-basis}, the
gradient of $f$ at $\psi$ is represented by
\[
    \nabla f(\psi)
    =
    \left(
        \frac{\partial f}{\partial i\psi}(\psi),
        \frac{\partial f}{\partial e_1}(\psi),
        \frac{\partial f}{\partial ie_1}(\psi),
        \ldots,
        \frac{\partial f}{\partial e_{D-1}}(\psi),
        \frac{\partial f}{\partial ie_{D-1}}(\psi)
    \right).
\]
We are mainly interested in the squared gradient norm
\[
    \|\nabla f(\psi)\|^2
    =
    \left|\frac{\partial f}{\partial i\psi}(\psi)\right|^2
    +
    \sum_{j=1}^{D-1}
    \left(
        \left|\frac{\partial f}{\partial e_j}(\psi)\right|^2
        +
        \left|\frac{\partial f}{\partial ie_j}(\psi)\right|^2
    \right),
\]
which is independent of the choice of the real orthonormal tangent basis.

We recall the Poincar\'e inequality on the complex sphere.
\begin{theorem}
[Poincar\'e inequality on the complex sphere]
\label{thm:complex-sphere-poincare}
Let $f:\Sphere(\C^D)\to\R$ be a $C^1$ function, and let
$\psi$ be uniformly distributed on $\Sphere(\C^D)$. Then
\[
    \Exp_{\psi}\left|f(\psi)-\Exp_{\psi} f(\psi)\right|^2
    =
    \Var(f(\psi))
    \leq
    \frac{1}{2D-1}\Exp_{\psi}\|\nabla f(\psi)\|^2.
\]
\end{theorem}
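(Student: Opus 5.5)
The plan is to identify $\Sphere(\C^D)$ isometrically with the real unit sphere $S^{n-1}\subset\R^{n}$, $n=2D$, using the real inner product $\operatorname{Re}\langle u,v\rangle$. Under this identification the tangent space and the gradient defined above (via the real orthonormal basis \cref{eq:real-tangent-basis}) coincide with the Riemannian tangent space and spherical gradient of $S^{n-1}$. The uniform measure on $\Sphere(\C^D)$ is the normalized surface measure on $S^{n-1}$. So it suffices to prove the real statement
\[
\Var(f)\le \frac{1}{n-1}\Exp\|\nabla_S f\|^2,
\]
and this gives the claimed constant $1/(2D-1)$.

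For the real statement I would use the spectral decomposition of the Laplace--Beltrami operator $\Delta_S$ on $S^{n-1}$. The space $L^2(S^{n-1})$ decomposes orthogonally into spaces $\mathcal H_k$ of spherical harmonics of degree $k\ge0$, with $-\Delta_S Y=k(k+n-2)Y$ for $Y\in\mathcal H_k$. Here $\mathcal H_0$ is the space of constants. The first nonzero eigenvalue is therefore $k=1$, namely $n-1$.

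Now take a smooth $f$ with expansion $f=\sum_k f_k$. Green's identity on the closed manifold gives
\[
\Exp\|\nabla_S f\|^2=\Exp[f\,(-\Delta_S f)]=\sum_k k(k+n-2)\Exp f_k^2,
\]
while $\Var f=\sum_{k\ge1}\Exp f_k^2$. Since $k(k+n-2)\ge n-1$ for all $k\ge1$, comparing the two sums term by term yields the inequality.

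To extend from smooth $f$ to $C^1$ $f$, I would approximate $f$ in $C^1$ norm by smooth functions, for instance by convolution over the rotation group $SO(n)$ with a smooth approximate identity, or by polynomial (Stone--Weierstrass type) approximation in $C^1$. Both sides of the inequality are continuous under $C^1$ convergence, so the bound passes to the limit.

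The main obstacle is justifying the spectral facts: that the spherical harmonics are complete in $L^2$, and the value of the eigenvalues. If one prefers to avoid this, an alternative is to cite the Bakry--\'Emery criterion. The sphere $S^{n-1}$ has Ricci curvature $(n-2)g$. Lichnerowicz's theorem then gives spectral gap $\ge\frac{n-1}{n-2}(n-2)=n-1$, which is the same constant and is sharp, attained by linear functions. Either route is standard, and I would present the harmonic-expansion argument while citing completeness as a known fact.
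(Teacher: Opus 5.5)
Your proposal is correct. The paper does not actually prove this statement: it is recalled as a known fact, the spectral gap of the round sphere, and the paper only argues the extensions to Hilbert-valued maps and to products of spheres. Your argument is the standard justification behind that citation, and each step checks out:
\begin{itemize}
\item With the real inner product $\operatorname{Re}\langle u,v\rangle$, the basis $\{i\psi,e_1,ie_1,\ldots,e_{D-1},ie_{D-1}\}$ is an orthonormal frame of $T_\psi S^{2D-1}$, so the paper's $\|\nabla f\|^2$ is exactly the Riemannian gradient norm on $S^{2D-1}\subset\mathbb{R}^{2D}$.
\item The unitarily invariant measure on $\Sphere(\C^D)$ is the normalized surface measure on $S^{2D-1}$.
\item The first nonzero Laplace eigenvalue $k(k+n-2)$ at $k=1$, $n=2D$, equals $2D-1$.
\end{itemize}

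A point worth making explicit is that $f$ is not assumed phase-invariant; the paper stresses that the direction $i\psi$ genuinely contributes. The relevant gap is therefore that of $S^{2D-1}$, and it is attained by the linear functions $\operatorname{Re}\langle v,\psi\rangle$, as you say.

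One small caveat concerns your fallback. The Lichnerowicz/Bakry--\'Emery route requires $\dim S^{2D-1}\geq 2$, i.e.\ $D\geq 2$. For $D=1$ the circle has zero Ricci curvature and your formula $\frac{n-1}{n-2}(n-2)$ degenerates. The harmonic expansion still covers that case directly, since the eigenvalues on $S^1$ are $k^2$ and the gap is $1=2D-1$. In any case the paper's product-sphere section assumes $D_j\geq 2$.

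The extension from smooth to $C^1$ functions by rotation-averaging (or by homogeneous extension and mollification in $\mathbb{R}^{2D}$) is fine. Both sides of the inequality are continuous under uniform convergence of $f$ and $\nabla f$.
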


\subsection{Hilbert-valued functions on complex spheres}
We next extend the preceding definitions to functions taking values in a
finite-dimensional real Hilbert space $\mathcal K$.
Recall that a finite-dimensional real Hilbert space is a real inner-product space;
we write $\langle\cdot,\cdot\rangle_{\mathcal K}$ for the inner product of $\mathcal K$ and
$\|\cdot\|_{\mathcal K}$ for the induced norm.

Let $b_1,\ldots,b_m$ be an orthonormal basis of $\mathcal K$.
Then, for every $x,y\in\mathcal K$,
\[
x=
    \sum_{r=1}^m
    \langle b_r,x\rangle_{\mathcal K} b_r,
    \quad
\langle x,y\rangle_{\mathcal K}
    =
    \sum_{r=1}^m
    \langle b_r,x\rangle_{\mathcal K}
    \langle b_r,y\rangle_{\mathcal K},
    \quad
\|x\|_{\mathcal K}^2
    =
    \sum_{r=1}^m
    \left|\langle b_r,x\rangle_{\mathcal K}\right|^2.
\]
A map $F:\Sphere(\C^D)\to\mathcal K$ is said to be $C^1$ if its
coordinates in an orthonormal basis are $C^1$.
Expectations of $\mathcal K$-valued random variables are taken
coordinatewise.

For $\phi\in T_\psi\Sphere(\C^D)$, define its directional derivative by
\[
    \frac{\partial F}{\partial\phi}(\psi)
    :=
    \left.\frac{\dd}{\dd t}F(\psi(t))\right|_{t=0},
\]
where $\psi(t)$ is any differentiable curve on $\Sphere(\C^D)$ with $\psi(0)=\psi$ and $\left.\frac{\dd}{\dd t}\psi(t)\right|_{t=0}=\phi$. The coordinatewise differentiation is
\[
    \left\langle
        b_r,
        \frac{\partial F}{\partial\phi}(\psi)
    \right\rangle_{\mathcal K}
    =
    \frac{\partial F_r}{\partial\phi}(\psi) \qquad \text{where} \quad
    F_r(\psi):=\langle b_r,F(\psi)\rangle_{\mathcal K}.
\]

For the real orthonormal tangent basis in
\cref{eq:real-tangent-basis}, write
\[
    \nabla F(\psi)
    =
    \left(
        \frac{\partial F}{\partial i\psi}(\psi),
        \frac{\partial F}{\partial e_1}(\psi),
        \frac{\partial F}{\partial ie_1}(\psi),
        \ldots,
        \frac{\partial F}{\partial e_{D-1}}(\psi),
        \frac{\partial F}{\partial ie_{D-1}}(\psi)
    \right)
\]
and define the squared gradient norm\footnote{This is the squared Hilbert-Schmidt norm of the differential and is denoted by $\|\dd F_\psi\|_{\mathrm{HS}}^2$ in the literature.} by
\[
    \|\nabla F(\psi)\|_{\mathcal K}^2
    :=
    \left\|
        \frac{\partial F}{\partial i\psi}(\psi)
    \right\|_{\mathcal K}^2
    +
    \sum_{j=1}^{D-1}
    \left(
        \left\|
            \frac{\partial F}{\partial e_j}(\psi)
        \right\|_{\mathcal K}^2
        +
        \left\|
            \frac{\partial F}{\partial ie_j}(\psi)
        \right\|_{\mathcal K}^2
    \right).
\]
Again, the gradient norm is independent of the choice of basis.

For a $\mathcal K$-valued random variable $X$, we define its variance by
\[
    \Var(X):=\Exp\|X-\Exp X\|_{\mathcal K}^2.
\]
The Poincar\'e inequality can be naturally extended to this setting.
\begin{corollary}[Hilbert-valued Poincar\'e inequality]
\label{cor:hilbert-complex-sphere-poincare}
Let $F:\Sphere(\C^D)\to\mathcal K$ be a $C^1$ function, and let
$\psi$ be uniformly distributed on $\Sphere(\C^D)$. Then
\[
    \Exp_{\psi}\|F(\psi)-\Exp_{\psi} F(\psi)\|_{\mathcal K}^2
    =\Var(F(\psi))
    \leq
    \frac{1}{2D-1}
    \Exp_{\psi}\|\nabla F(\psi)\|_{\mathcal K}^2.
\]
\end{corollary}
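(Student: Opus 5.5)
The plan is to reduce the statement coordinatewise to the scalar Poincar\'e inequality of \cref{thm:complex-sphere-poincare}. Fix an orthonormal basis $b_1,\ldots,b_m$ of $\mathcal K$ and write $F_r(\psi)=\langle b_r,F(\psi)\rangle_{\mathcal K}$. Since $F$ is $C^1$, each $F_r:\Sphere(\C^D)\to\R$ is a real-valued $C^1$ function, so the theorem applies to each $F_r$ separately.

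First I would expand the left-hand side in coordinates. Expectations are taken coordinatewise, so $\langle b_r,\Exp F(\psi)\rangle_{\mathcal K}=\Exp F_r(\psi)$. Parseval's identity then gives
\[
\Exp_\psi\|F(\psi)-\Exp F(\psi)\|_{\mathcal K}^2=\sum_{r=1}^m\Exp_\psi|F_r(\psi)-\Exp F_r(\psi)|^2=\sum_{r=1}^m\Var(F_r(\psi)).
\]
Applying \cref{thm:complex-sphere-poincare} to each term bounds this sum by $\frac{1}{2D-1}\sum_r\Exp_\psi\|\nabla F_r(\psi)\|^2$.

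Next I would identify $\sum_r\|\nabla F_r(\psi)\|^2$ with $\|\nabla F(\psi)\|_{\mathcal K}^2$. Let $v$ range over the real tangent basis \cref{eq:real-tangent-basis}. By the coordinatewise differentiation identity, $\langle b_r,\frac{\partial F}{\partial v}(\psi)\rangle_{\mathcal K}=\frac{\partial F_r}{\partial v}(\psi)$. Parseval therefore gives
\[
\left\|\frac{\partial F}{\partial v}(\psi)\right\|_{\mathcal K}^2=\sum_r\left|\frac{\partial F_r}{\partial v}(\psi)\right|^2.
\]
Summing over the $2D-1$ basis directions $v$ and exchanging the two finite sums yields exactly the defining expression for $\|\nabla F(\psi)\|_{\mathcal K}^2$. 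Linearity of expectation then finishes the proof.

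There is no real obstacle. The only point needing care is that the same tangent basis is used for every $F_r$ and for $F$. This is harmless because all the gradient norms are basis-independent, or alternatively because we may simply fix the basis \cref{eq:real-tangent-basis} at each $\psi$ throughout.
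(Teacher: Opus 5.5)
Your proposal is correct and follows the paper's proof essentially step for step. You expand the variance coordinatewise via Parseval, apply \cref{thm:complex-sphere-poincare} to each $F_r$, and then use the coordinatewise differentiation identity together with Parseval on each directional derivative to recombine $\sum_r\|\nabla F_r(\psi)\|^2$ into $\|\nabla F(\psi)\|_{\mathcal K}^2$.
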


\begin{proof}
Let $b_1,\ldots,b_m$ be an orthonormal basis of $\mathcal K$. Write
$F_r(\psi):=\langle b_r,F(\psi)\rangle_{\mathcal K}$, which is a
real-valued function on $\Sphere(\C^D)$.
By Parseval's identity,
\[
    \Exp_{\psi}\|F(\psi)-\Exp_{\psi}F(\psi)\|_{\mathcal K}^2
    =
    \sum_{r=1}^m
    \Exp_{\psi}
    \left|F_r(\psi)-\Exp_{\psi}F_r(\psi)\right|^2
\]
Applying \cref{thm:complex-sphere-poincare} to each $F_r$ gives
\begin{align}
    \Exp_{\psi}\|F(\psi)-\Exp_{\psi}F(\psi)\|_{\mathcal K}^2
    \leq
    \frac{1}{2D-1}
    \sum_{r=1}^m
    \Exp_{\psi}
    \|\nabla F_r(\psi)\|^2.
    \label{eq:hilbert-poincare-coordinate}
\end{align}
For any real orthonormal tangent basis $\{v_a\}_a$, Parseval's identity gives
\[
    \sum_{r=1}^m
    \left|
        \frac{\partial F_r}{\partial v_a}(\psi)
    \right|^2=
    \sum_{r=1}^m
    \left|
        \left\langle
            b_r,
            \frac{\partial F}{\partial v_a}(\psi)
        \right\rangle_{\mathcal K}
    \right|^2=
    \left\|
        \frac{\partial F}{\partial v_a}(\psi)
    \right\|_{\mathcal K}^2.
\]
Therefore, 
Summing the preceding identity over $a$ and substituting into
\cref{eq:hilbert-poincare-coordinate} proves the claim.
\end{proof}

\subsection{Hilbert-valued functions on products of complex spheres}

We now extend the preceding results to products of complex spheres.
Fix the dimensions $D_1,\ldots,D_L\geq 2$ and define the product sphere
\[
    \mathcal M
    :=
    \prod_{j=1}^L\Sphere(\C^{D_j}),
\]
equipped with the product of the uniform measures on the individual spheres.
We write a point in $\mathcal M$ as
$\Psi=(\psi_1,\ldots,\psi_L)$.

For each $j$, let
$\{v_{j,a}\}_{a=1}^{2D_j-1}$ be a real orthonormal basis of
$T_{\psi_j}\Sphere(\C^{D_j})$.
For a function $F:\mathcal M\to\mathcal K$, let
$\partial_{j,a}F(\Psi)$ denote the directional derivative obtained by
varying only $\psi_j$ in the direction $v_{j,a}$.
We write
\[
    \nabla_jF(\Psi)
    :=
    \left(
        \partial_{j,1}F(\Psi),
        \ldots,
        \partial_{j,2D_j-1}F(\Psi)
    \right)
\]
and define
\[
    \|\nabla_jF(\Psi)\|_{\mathcal K}^2
    :=
    \sum_{a=1}^{2D_j-1}
    \|\partial_{j,a}F(\Psi)\|_{\mathcal K}^2.
\]
As before, this quantity is independent of the choice of the real
orthonormal tangent basis.

We first establish the Poincar\'e inequality for real-valued functions on
the product space adapting the argument of \cite[Proposition~4.3.1]{BGL14}.

\begin{lemma}[Poincar\'e inequality on products of complex spheres]
\label{lem:scalar-product-sphere-poincare}
Let $f:\mathcal M\to\R$ be a $C^1$ function, where
$\Psi=(\psi_1,\ldots,\psi_L)$ is distributed according to the product
of the uniform measures on $\Sphere(\C^{D_1}),\ldots,\Sphere(\C^{D_L})$.
Then
\begin{align}
    \Var(f(\Psi))
    \leq
    \sum_{j=1}^L
    \frac{1}{2D_j-1}
    \Exp_{\Psi}\|\nabla_jf(\Psi)\|^2.
    \label{eq:scalar-product-sphere-poincare}
\end{align}
\end{lemma}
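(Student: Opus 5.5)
The plan is the standard tensorization (Efron--Stein) argument, combined with the single-sphere Poincar\'e inequality of \cref{thm:complex-sphere-poincare} applied one coordinate at a time. We may assume $\Exp f(\Psi)^2<\infty$; this holds automatically since $\mathcal M$ is compact and $f$ is continuous.

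\textbf{Martingale decomposition.} For $j=0,\ldots,L$ let $\mathcal F_j$ be the $\sigma$-algebra generated by $\psi_1,\ldots,\psi_j$, and set $f_j\defeq\Exp[f\mid\mathcal F_j]$. Concretely, $f_j$ is obtained by integrating $f$ over $\psi_{j+1},\ldots,\psi_L$ against the uniform measures. Then $f_0=\Exp f$ and $f_L=f$. The increments $f_j-f_{j-1}$ are orthogonal in $L^2$, so
\[
\Var(f)=\sum_{j=1}^L\Exp\bigl[(f_j-f_{j-1})^2\bigr].
\]

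\textbf{Bounding each increment.} Fix $\psi_1,\ldots,\psi_{j-1}$ and view $f_j$ as a function of $\psi_j$ alone. Since the measure is a product measure, $f_{j-1}$ is exactly the average of this function over $\psi_j$ uniform on $\Sphere(\C^{D_j})$. Therefore $\Exp[(f_j-f_{j-1})^2\mid\mathcal F_{j-1}]$ is the variance in $\psi_j$ of $f_j$. By \cref{thm:complex-sphere-poincare} this variance is at most
\[
\frac1{2D_j-1}\Exp_{\psi_j}\|\nabla_j f_j\|^2 .
\]
Here $f_j$ is $C^1$ in $\psi_j$, because $f$ is $C^1$ on a compact manifold and we may differentiate under the integral over the remaining coordinates.

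\textbf{Removing the conditional expectation.} Differentiating under the integral gives
\[
\partial_{j,a}f_j=\Exp_{\psi_{j+1},\ldots,\psi_L}[\partial_{j,a}f].
\]
Indeed, the tangent basis at $\psi_j$ does not depend on the later coordinates, and the derivative is uniformly bounded, which justifies the exchange. Applying Jensen/Cauchy--Schwarz to each component $a$ and summing over $a$ yields $\|\nabla_jf_j\|^2\le\Exp_{\psi_{j+1},\ldots,\psi_L}\|\nabla_jf\|^2$. Taking the full expectation and summing over $j$ gives \cref{eq:scalar-product-sphere-poincare}.

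\textbf{Main obstacle.} The only delicate point is this last step: justifying differentiation under the integral sign and checking that conditioning does not enlarge the gradient norm. Both follow from compactness and continuity of $\nabla f$ (dominated convergence) together with convexity of $\|\cdot\|^2$. The rest of the argument is routine bookkeeping.
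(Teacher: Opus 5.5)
Your proposal is correct and follows essentially the same route as the paper. The paper proves the case $L=2$ via the law of total variance. It applies \cref{thm:complex-sphere-poincare} to $\psi_2\mapsto f(\psi_1,\psi_2)$ and to the partial average $g(\psi_1)=\Exp_{\psi_2}f$, bounds $\|\nabla g\|^2$ by Jensen after exchanging derivative and expectation, and then says to iterate. Your Doob-martingale decomposition over $\mathcal F_j$ is exactly that iteration written out for general $L$. You also justify differentiation under the integral sign, which the paper simply asserts.
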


\begin{proof}
We first consider the case $L=2$, and write
$f=f(\psi_1,\psi_2)$. 
Define $g(\psi_1):=\Exp_{\psi_2}f(\psi_1,\psi_2).$
The law of total variance gives
\begin{align}
    \Var_{\psi_1,\psi_2}(f)
    =
    \Exp_{\psi_1}\Var_{\psi_2}(f\mid\psi_1)
    +
    \Var_{\psi_1}(g).
    \label{eq:two-sphere-variance-decomposition}
\end{align}
For every fixed $\psi_1$, applying
\cref{thm:complex-sphere-poincare} to the function
$\psi_2\mapsto f(\psi_1,\psi_2)$ gives
\begin{align}
    \Exp_{\psi_1}\Var_{\psi_2}(f\mid\psi_1)
    \leq
    \frac{1}{2D_2-1}
    \Exp_{\psi_1,\psi_2}
    \|\nabla_2f(\psi_1,\psi_2)\|^2.
    \label{eq:two-sphere-second-coordinate}
\end{align}
On the other hand, applying
\cref{thm:complex-sphere-poincare} to $g$ gives
\begin{align}
    \Var_{\psi_1}(g)
    \leq
    \frac{1}{2D_1-1}
    \Exp_{\psi_1}\|\nabla g(\psi_1)\|^2.
    \label{eq:two-sphere-first-coordinate}
\end{align}
For every tangent direction $v$ at $\psi_1$, differentiation commutes
with the expectation over $\psi_2$, so
\[
    \frac{\partial g}{\partial v}(\psi_1)
    =
    \Exp_{\psi_2}
    \frac{\partial f}{\partial v}(\psi_1,\psi_2).
\]
Therefore, Jensen's inequality gives
\[
    \|\nabla g(\psi_1)\|^2
    \leq
    \Exp_{\psi_2}
    \|\nabla_1f(\psi_1,\psi_2)\|^2.
\]
Combining this with
\cref{eq:two-sphere-variance-decomposition,eq:two-sphere-second-coordinate,eq:two-sphere-first-coordinate},
we obtain
\[
    \Var_{\psi_1,\psi_2}(f)
    \leq
    \frac{1}{2D_1-1}
    \Exp_{\psi_1,\psi_2}\|\nabla_1f\|^2
    +
    \frac{1}{2D_2-1}
    \Exp_{\psi_1,\psi_2}\|\nabla_2f\|^2.
\]
Iterating the same argument over the $L$ coordinates proves
\cref{eq:scalar-product-sphere-poincare}.
\end{proof}

The same inequality extends to Hilbert-valued functions.

\begin{theorem}[Hilbert-valued product-sphere Poincar\'e inequality]
\label{thm:hilbert-product-poincare}
For every $C^1$ function $F:\mathcal M\to\mathcal K$,
\begin{align}
    \Exp_{\Psi}\|F(\Psi)-\Exp_{\Psi}F(\Psi)\|_{\mathcal K}^2
    \leq
    \sum_{j=1}^L
    \frac{1}{2D_j-1}
    \Exp_{\Psi}\|\nabla_jF(\Psi)\|_{\mathcal K}^2.
    \label{eq:hilbert-product-poincare}
\end{align}
\end{theorem}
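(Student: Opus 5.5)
The plan is to reduce to the scalar product-sphere inequality, \cref{lem:scalar-product-sphere-poincare}, by expanding in coordinates, exactly as \cref{cor:hilbert-complex-sphere-poincare} was obtained from \cref{thm:complex-sphere-poincare}.

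First, fix an orthonormal basis $b_1,\ldots,b_m$ of $\mathcal K$ and set $F_r(\Psi):=\langle b_r,F(\Psi)\rangle_{\mathcal K}$. Each $F_r:\mathcal M\to\R$ is $C^1$ by the definition of $C^1$ for $\mathcal K$-valued maps. Since expectations are taken coordinatewise, $\langle b_r,\Exp_\Psi F(\Psi)\rangle_{\mathcal K}=\Exp_\Psi F_r(\Psi)$. Parseval's identity then gives
\[
\Exp_{\Psi}\|F(\Psi)-\Exp_{\Psi}F(\Psi)\|_{\mathcal K}^2
=\sum_{r=1}^m\Var(F_r(\Psi)).
\]

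Second, we apply \cref{lem:scalar-product-sphere-poincare} to each $F_r$. This bounds the right-hand side by
\[
\sum_{r=1}^m\sum_{j=1}^L\frac{1}{2D_j-1}\Exp_\Psi\|\nabla_jF_r(\Psi)\|^2 .
\]
We then exchange the finite sums over $r$ and $j$, and use linearity of expectation.

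Third, we identify the gradient terms. For each $j$ and each tangent basis vector $v_{j,a}$, the directional derivative commutes with taking the $r$-th coordinate, so $\partial_{j,a}F_r(\Psi)=\langle b_r,\partial_{j,a}F(\Psi)\rangle_{\mathcal K}$. Parseval's identity gives $\sum_r|\partial_{j,a}F_r(\Psi)|^2=\|\partial_{j,a}F(\Psi)\|_{\mathcal K}^2$. Summing over $a=1,\ldots,2D_j-1$ yields
\[
\sum_r\|\nabla_jF_r(\Psi)\|^2=\|\nabla_jF(\Psi)\|_{\mathcal K}^2,
\]
and substituting this into the bound from the second step gives \cref{eq:hilbert-product-poincare}.

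There is no genuine obstacle here. The only points needing care are bookkeeping: that coordinatewise differentiation and expectation are compatible with the basis expansion (both are linear and the basis is finite), and that the per-sphere gradient norms are basis-independent, so the same tangent bases can be used for every $F_r$.
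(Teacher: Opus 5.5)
Your proof is correct, but it is organized differently from the paper's. The paper does not invoke \cref{lem:scalar-product-sphere-poincare} here. Instead it reruns the tensorization argument directly in $\mathcal K$:
\begin{itemize}
\item for $L=2$ it uses the Hilbert-valued law of total variance $\Var(F)=\Exp_{\psi_1}\Var_{\psi_2}(F\mid\psi_1)+\Var_{\psi_1}(\Exp_{\psi_2}F)$;
\item it applies the single-sphere Hilbert-valued bound (\cref{cor:hilbert-complex-sphere-poincare}) to each of the two terms;
\item it controls $\|\nabla_1\Exp_{\psi_2}F\|_{\mathcal K}^2\le\Exp_{\psi_2}\|\nabla_1F\|_{\mathcal K}^2$ by commuting differentiation with expectation and applying Jensen's inequality;
\item it then iterates over $L$.
\end{itemize}
So the paper vectorizes first (on a single sphere) and tensorizes second. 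You tensorize first (the scalar product lemma) and vectorize second, with Parseval doing all the work at the end.

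Your ordering is shorter and needs only linearity of coordinatewise differentiation and expectation. The delicate steps (differentiating under the integral sign, Jensen for the squared norm, induction over $L$) are then handled once, inside the scalar lemma, rather than repeated at the Hilbert-valued level. The paper's ordering keeps the tensorization step coordinate-free and mirrors the scalar proof verbatim, at the cost of redoing those steps in $\mathcal K$. Since the two reductions commute, both give exactly the same weights $(2D_j-1)^{-1}$.
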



\begin{proof}
We use the variance-decomposition argument in the proof of
\cref{lem:scalar-product-sphere-poincare}, applying
\cref{cor:hilbert-complex-sphere-poincare} in place of
\cref{thm:complex-sphere-poincare}.

For $L=2$, the Hilbert-valued variance satisfies
\[
\Var_{\psi_1,\psi_2}(F)
=
\Exp_{\psi_1}\Var_{\psi_2}(F\mid\psi_1)
+
\Var_{\psi_1}(\Exp_{\psi_2}F).
\]
Apply \cref{cor:hilbert-complex-sphere-poincare} to
$\psi_2\mapsto F(\psi_1,\psi_2)$ for each fixed $\psi_1$,
and to $\psi_1\mapsto\Exp_{\psi_2}F(\psi_1,\psi_2)$.
Differentiation commutes with expectation, and Jensen's inequality gives
\[
\|\nabla_1\Exp_{\psi_2}F\|_{\mathcal K}^2
\leq\Exp_{\psi_2}\|\nabla_1F\|_{\mathcal K}^2.
\]
Combining these bounds proves
\cref{eq:hilbert-product-poincare} for $L=2$.
Iterating the same argument proves it for arbitrary $L$.
\end{proof}

\section{CHFS Oracle and Derivatives}
\label{sec:chfs-resampling}

We define the unitarized, or swap, common Haar-random function-like state
(CHFS) oracle and study its derivatives in the context of oracle
algorithms.  
We allow controlled access to the oracle by default and
suppress the control qubit throughout.

We recall the Leibniz rules for any directional
derivative $\partial$. Differentiation of vector- and operator-valued
functions is understood entrywise, and
\[
    \partial\ketbra{u}{v}
    =
    \ketbra{\partial u}{v}
    +
    \ketbra{u}{\partial v},\qquad
    \partial(AB)
    =
    (\partial A)B+A(\partial B),
    \partial(A^\dagger)
    =
    (\partial A)^\dagger,
\]
where $\ket{\partial u} = \partial\ket{u}$.
In particular, if $\ket{\xi}_{AB}$ is a differentiable purification of
$\rho=\Tr_B\proj{\xi}$, then $\partial\rho
    =
    \Tr_B\left(
        \ketbra{\partial\xi}{\xi}
        +
        \ketbra{\xi}{\partial\xi}
    \right).$
    
\subsection{The CHFS oracle}
\label{sec:unitary-chfs-oracle}

The CHFS oracle is parameterized by a family of states
$\Phi=(\Phi_d)_{d\in\N}$.  For every input $x\in \bit^d$ of length $d\in\N$, we assign a $d$-qubit pure state. Formally, let
\[
    \Phi_d
    =
    \bigl(\ket{\phi_{x}}\bigr)_{x\in\{0,1\}^d},
    \qquad
    \phi_{x}\in\Sphere(\C^{2^d}),
\]
where all states $\ket{\phi_{x}}$ are sampled independently from Haar
measure as an initialization, and are fixed throughout the oracle execution.
For $\phi\in\Sphere(\C^{2^d})$, we define the swap unitary associated with it by
\begin{align}
    S_\phi
    \defeq
    \Id
    -
    \proj{0^{d+1}}
    -
    \proj{\phi,1}
    +
    \ketbra{0^{d+1}}{\phi,1}
    +
    \ketbra{\phi,1}{0^{d+1}}
    \label{eq:one-label-chfs-unitary}
\end{align}

The CHFS oracle $S^{\Phi}$ parameterized by $\Phi=(\Phi_d)_{d\in\N}$ consists of the family of oracles $(S_d^{\Phi})_{d \in \N}$ where $S_d^{\Phi}$ acts on $(2d+1)$ qubits by
\[
    S_d^{\Phi}
    \defeq
    \sum_{x\in\{0,1\}^d}
    \proj x\otimes S_{\phi_{x}}.
\]

We regard $S_d^\Phi$ as an operator-valued function of the state tuple
\[
    \Phi=(\phi_x)_{x\in\{0,1\}^d}
    \in
    \prod_{x\in\{0,1\}^d}\Sphere(\C^{2^d}).
\]
Accordingly, the tangent space becomes
\[
    T_\Phi
    \left(
        \prod_{x\in\{0,1\}^d}\Sphere(\C^{2^d})
    \right)
    =
    \bigoplus_{x\in\{0,1\}^d}
    T_{\phi_x}\Sphere(\C^{2^d}).
\]
More precisely, for $\delta\in T_{\phi_x}\Sphere(\C^{2^d})$, let $\phi_x(s)$ be any
differentiable curve on $\Sphere(\C^{2^d})$ such that
$\phi_x(0)=\phi_x$ and
$\left.\frac{\dd}{\dd s}\phi_x(s)\right|_{s=0}=\delta$.
Let $\Phi(s)$ be obtained from $\Phi$ by replacing only $\phi_x$ with
$\phi_x(s)$. We define
\[
    \partial_{x,\delta}S_d^\Phi
    :=
    \left.
    \frac{\dd}{\dd s}S_d^{\Phi(s)}
    \right|_{s=0}.
\]
Note that the derivative $\partial_{x,\delta}S_d^\Phi$ is supported only on the
input-$x$ block
since
\[
    S_d^\Phi
    =
    \sum_{y\in\{0,1\}^d}
    \proj y\otimes S_{\phi_y}.
\]

Note also that because of the coherent access, the phase of each state vector $\ket\phi$ must be considered.
The phase direction is $i\phi$ and will contribute to the derivative.

\subsection{CHFS oracle derivatives}
\label{subsec:chfs-layer-derivatives}
This section studies the derivatives of the CHFS oracles and presents two inequalities.
This section focuses on a fixed input length $d$.
For every $x\in\{0,1\}^d$, we define for simplicity that
\[
    \ket{A_x}
    \defeq
    \ket x\ket{0^{d+1}},\quad
    \ket{B_x}
    \defeq
    \ket x\ket{\phi_x,1},\quad
    \ket{u_x}
    \defeq
    \frac{\ket{B_x}-\ket{A_x}}{\sqrt2}
\]
so that
\begin{align}
\label{eqn:Sphid}
    S^{\Phi}_d = I - 2\sum_{x\in \bit^d} \ketbra{u_x}.
\end{align}

For each $x$, choose a complex orthonormal basis
$e_{x,1},\ldots,e_{x,2^d-1}$ of $\phi_x^\perp$, which gives a real orthonormal basis
\[
    i\phi_x,
    e_{x,1},ie_{x,1},
    \ldots,
    e_{x,2^d-1},ie_{x,2^d-1}
\]
of $T_{\phi_x}\Sphere(\C^{2^d})$ as in \cref{eq:real-tangent-basis}.
We define the basis vectors
\[
    \ket{t_{x,j}}
    \defeq
    \ket x\ket{e_{x,j}}\ket1,
    \qquad
    j\in[2^d-1].
\]

The CHFS oracle $S_d^\Phi$ can be regarded as a function on $\Phi$, so that on $\phi_x$. The following lemma gives the partial derivatives of $S_d^\Phi$ with respect to $T_{\phi_x}\Sphere(\C^{2^d})$.

\begin{lemma}
\label{lem:explicit-chfs-derivatives}
For every $x\in\{0,1\}^d$ and $j\in[2^d-1]$,
\begin{align*}
    \partial_{x,e_{x,j}}S_d^\Phi
    &=
    -\sqrt2
    \left(
        \ketbra{u_x}{t_{x,j}}
        +
        \ketbra{t_{x,j}}{u_x}
    \right),
    \\
    \partial_{x,ie_{x,j}}S_d^\Phi
    &=
    i\sqrt2
    \left(
        \ketbra{u_x}{t_{x,j}}
        -
        \ketbra{t_{x,j}}{u_x}
    \right),
    \\
    \partial_{x,i\phi_x}S_d^\Phi
    &=
    i
    \left(
        \ketbra{B_x}{A_x}
        -
        \ketbra{A_x}{B_x}
    \right).
\end{align*}
Each derivative is zero outside the input-$x$ block.
\end{lemma}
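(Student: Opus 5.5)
The plan is a direct computation. Since $S_d^\Phi=\sum_y\proj y\otimes S_{\phi_y}$ and varying only $\phi_x$ leaves every block $y\neq x$ unchanged, the derivative is $\proj x\otimes\partial S_{\phi_x}$. This already gives the support claim, and it remains to differentiate the single-label swap $S_\phi$ from \cref{eq:one-label-chfs-unitary}.

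The $\Id$ and $\proj{0^{d+1}}$ terms are constant. By the Leibniz rule, along a tangent direction $\delta$ we get
\[
\partial_\delta S_\phi
=-\ketbra{\delta,1}{\phi,1}-\ketbra{\phi,1}{\delta,1}
+\ketbra{0^{d+1}}{\delta,1}+\ketbra{\delta,1}{0^{d+1}}.
\]
Tensoring with $\ket x$ and writing $\ket{\delta_x}:=\ket x\ket{\delta,1}$, this becomes
\[
\partial_{x,\delta}S_d^\Phi=\ketbra{\delta_x}{A_x-B_x}+\ketbra{A_x-B_x}{\delta_x}.
\]
The only care needed is that the conjugate-linear slot of the bra picks up $\bar c$ when $\delta=c\,e$.

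Next, substitute the three basis directions, using $\ket{A_x}-\ket{B_x}=-\sqrt2\ket{u_x}$.

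For $\delta=e_{x,j}$ we have $\ket{\delta_x}=\ket{t_{x,j}}$, which gives $-\sqrt2(\ketbra{t_{x,j}}{u_x}+\ketbra{u_x}{t_{x,j}})$.

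For $\delta=ie_{x,j}$ the ket picks up $i$ and the bra picks up $-i$, so we get $-\sqrt2(i\ketbra{t_{x,j}}{u_x}-i\ketbra{u_x}{t_{x,j}})=i\sqrt2(\ketbra{u_x}{t_{x,j}}-\ketbra{t_{x,j}}{u_x})$. This matches the stated form.

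For the phase direction $\delta=i\phi_x$ we have $\ket{\delta_x}=i\ket{B_x}$. The expression becomes
\[
i\ketbra{B_x}{A_x-B_x}-i\ketbra{A_x-B_x}{B_x}.
\]
The two $\ketbra{B_x}{B_x}$ terms cancel, leaving $i(\ketbra{B_x}{A_x}-\ketbra{A_x}{B_x})$.

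The computation is routine, and the one place needing justification is that the directional derivative does not depend on the chosen curve. This follows because $S_\phi$ is a polynomial in the real coordinates of $\phi$ (and of $\bar\phi$) extended to all of $\C^{2^d}$, so its derivative along any curve with velocity $\delta$ equals the ambient linear derivative in direction $\delta$.

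The main thing to get right is the bookkeeping of signs and complex conjugation for the imaginary directions. In particular, the cancellation in the phase direction relies on $\braket{\phi_x}{\phi_x}=1$ being used implicitly through the product rule, with no normalization terms appearing.
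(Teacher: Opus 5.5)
Your proof is correct and is essentially the paper's computation. The paper differentiates the rank-one form $S_d^\Phi=\Id-2\sum_x\proj{u_x}$ using $\partial_{x,\delta}\ket{u_x}=\ket x\ket{\delta,1}/\sqrt2$; you instead differentiate the four-term definition and regroup via $\ket{A_x}-\ket{B_x}=-\sqrt2\ket{u_x}$; both routes reach the same intermediate formula $-\sqrt2\bigl(\ketbra{x,\delta,1}{u_x}+\ketbra{u_x}{x,\delta,1}\bigr)$ before substituting the three directions. One small correction to your closing remark: the two $\ketbra{B_x}{B_x}$ terms in the phase direction cancel simply because $i+\bar{i}=0$ (the bra is conjugate-linear), and this does not use $\braket{\phi_x}{\phi_x}=1$.
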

\begin{proof}
Fix $x$ and suppress the subscript $x$.
Since $\ket A$ is independent of $\phi$, while
$\ket B=\ket x\ket{\phi,1}$, we have
$\partial_{x,\delta}\ket u
    =
    \frac{1}{\sqrt2}\ket x\ket{\delta,1}$
for every $\delta\in T_\phi\Sphere(\C^{2^d})$.
By \cref{eqn:Sphid},
\[
    \partial_{x,\delta}S_d^\Phi
    =
    -2\left(
        \ketbra{\partial_{x,\delta}u}{u}
        +
        \ketbra{u}{\partial_{x,\delta}u}
    \right) = -\sqrt{2}\left(
    \ketbra{x,\delta,1}{u}+\ketbra{u}{x,\delta,1}
    \right).
\]

Plugging $\delta=e_j$ and $\delta=ie_j$ give the first and second derivatives.
Finally,
$\partial_{x,i\phi}\ket u=i\ket B/\sqrt2$, and hence
\[
    \partial_{x,i\phi}S_d^\Phi
    =
    -i\sqrt2
    \left(
        \ketbra {B} {u}-\ketbra {u} {B}
    \right)
    =
    i\left(
        \ketbra {B} {A}-\ketbra {A} {B}
    \right),
\]
where we used $\ket u=(\ket B-\ket A)/\sqrt2$.
Each derivative is supported only on the input-$x$ block because only
the $x$-th summand of $S_d^\Phi$ depends on $\phi_x$.
\end{proof}

For each $x$, we define the relevant space for the CHFS oracles
\[
    E_x
    \defeq
    \operatorname{span}\{A_x,B_x\},
    \qquad
    T_x
    \defeq
    \operatorname{span}\{t_{x,1},\ldots,t_{x,2^d-1}\},
\]
which are orthogonal to each other. 
Let $\Pi_{E_x}$ and $\Pi_{T_x}$ denote the corresponding orthogonal
projections. Note that
\[
\Pi_{E_x}+\Pi_{T_x} \le \ketbra x \otimes I, \quad \sum_{x\in \bit^d} \Pi_{E_x} +
\sum_{x\in \bit^d} \Pi_{T_x} \le I
\]
so that the space $E_x\oplus T_x$ are mutually orthogonal for distinct $x$.

Observe that they are relatively low rank compared to the ambient space. This gives the following upper bounds of the CHFS derivatives. 
\ifnum\submission=1
The proofs are deferred to \cref{app:chfsproofs}.

\begin{lemma}[Two-sided CHFS derivative sum]
\label{lem:two-sided-chfs-sum}
Let $P,Q$ be orthogonal projections of rank
at most $r$ on the CHFS query register together with an arbitrary ancillary register. Then
\[
    \sum_{x\in\{0,1\}^d}
    \sum_{\delta}
    \normtwo{
        Q\bigl(
            \partial_{x,\delta}S_d^\Phi
            \otimes\Id
        \bigr)P
    }^2
    \leq
    9r,
\]
where, for each $x$, $\delta$ ranges over the
basis $\{i\phi_x,
    e_{x,1},ie_{x,1},
    \ldots,
    e_{x,2^d-1},ie_{x,2^d-1}\}$.
\end{lemma}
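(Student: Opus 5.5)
We bound the sum directly from the explicit derivative formulas of \cref{lem:explicit-chfs-derivatives}, using only that each derivative is a sum of at most two rank-one operators whose "fixed" vectors lie in the low-rank spaces $E_x$. Tensoring with $\Id$ on the ancilla, each derivative becomes a sum of terms of the form $\ketbra{a}{b}\otimes\Id$. For such a term,
\[
\normtwo{Q(\ketbra ab\otimes\Id)P}^2=\Tr\bigl[(\ketbra aa\otimes\Id)Q(\ketbra aa\otimes\Id)\cdot\ldots\bigr]\leq\Tr\bigl[Q(\ketbra aa\otimes\Id)\bigr],
\]
and more precisely it is bounded by both $\Tr[Q(\ketbra aa\otimes\Id)]$ and $\Tr[P(\ketbra bb\otimes\Id)]$, since $\normtwo{XY}\leq\normop X\normtwo Y$ and the operator norm of a projection-times-unit-rank-one factor is at most one. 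Then $(x+y)^2\leq2x^2+2y^2$ splits each derivative into its two rank-one pieces.

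\textbf{Tangential directions.} For $\delta=e_{x,j}$ and $\delta=ie_{x,j}$, the derivative is $\sqrt2(\ketbra{u_x}{t_{x,j}}\pm\ketbra{t_{x,j}}{u_x})$ up to phases. The two real directions $e$ and $ie$ can be combined: summing over both gives exactly
\[
2\normtwo{Q(\ketbra{u_x}{t_{x,j}}\otimes\Id)P}^2+2\normtwo{Q(\ketbra{t_{x,j}}{u_x}\otimes\Id)P}^2,
\]
times the prefactor $2$, because cross terms cancel by the parallelogram law. For the first term, we bound by $\Tr[P(\ketbra{t_{x,j}}\otimes\Id)]$; summing over $x,j$ gives at most $\Tr[P\sum_x\Pi_{T_x}\otimes\Id]\leq\Tr P\leq r$. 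For the second term, we bound by the $Q$-side $\Tr[Q(\ketbra{t_{x,j}}\otimes\Id)]$, and summing over $x,j$ again gives at most $r$. This yields a total of $4r$ from tangential directions.

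\textbf{Phase direction.} Here $\partial_{x,i\phi_x}S=i(\ketbra{B_x}{A_x}-\ketbra{A_x}{B_x})$. Using $(a+b)^2\leq2a^2+2b^2$ and bounding the two pieces by $\Tr[P(\ketbra{A_x}\otimes\Id)]$ and $\Tr[P(\ketbra{B_x}\otimes\Id)]$, the sum over $x$ is at most $2\Tr[P\sum_x\Pi_{E_x}\otimes\Id]\leq2r$. Altogether we get $4r+2r\leq9r$, leaving slack. One should double-check the constant in the tangential step; if the cancellation fails, the crude bound with $(a+b)^2\leq2(a^2+b^2)$ gives $2\cdot2\cdot(r+r)=8r$ for the tangential directions plus $2r$ for the phase direction, i.e.\ $10r$. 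In that case the orthogonality $\langle t,u\rangle=0$ would need to be used, since $\normtwo{X+Y}^2=\normtwo X^2+\normtwo Y^2+2\operatorname{Re}\Tr(X^\dagger Y)$ and the cross terms across $\delta=e,ie$ cancel exactly.

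\textbf{Main obstacle.} The main obstacle is this bookkeeping: ensuring that the summation over $j$ for each fixed side uses the orthogonal family $\{t_{x,j}\}$, so that the sum collapses to a projection bounded by the rank, rather than summing $\normtwo{\cdot}^2$ over $2^d$ vectors on the $u_x$ side. Each rank-one piece must therefore be charged to whichever side carries $t_{x,j}$.
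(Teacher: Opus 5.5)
Your route is the paper's: the parallelogram identity for the pair $\delta=e_{x,j},ie_{x,j}$, charging $\ketbra{u_x}{t_{x,j}}$ to the $P$ side and $\ketbra{t_{x,j}}{u_x}$ to the $Q$ side, and collapsing the sum via $\sum_{x,j}\proj{t_{x,j}}\le\Id$. However, the constants are miscounted, and as written the argument does not reach $9r$.

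You correctly obtain, for each pair,
\[
4\normtwo{Q(\ketbra{u_x}{t_{x,j}}\otimes\Id)P}^2+4\normtwo{Q(\ketbra{t_{x,j}}{u_x}\otimes\Id)P}^2 .
\]
Summing over $x,j$ gives $4\Tr P+4\Tr Q\le 8r$, not $4r$; you dropped the prefactor $2$. This $8r$ is essentially tight. For $r\le 2^d$, take $P=Q$ to be the projection onto $\operatorname{span}\{u_x,t_{x,1},\ldots,t_{x,r-1}\}$ tensored with a fixed ancilla state; then the tangential sum equals $8(r-1)$.

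Your fallback paragraph is also off. The cancellation of cross terms between $e_{x,j}$ and $ie_{x,j}$ is just the identity $\normtwo{X+Y}^2+\normtwo{X-Y}^2=2\normtwo{X}^2+2\normtwo{Y}^2$. It holds unconditionally, so it cannot "fail", and no orthogonality of $t_{x,j}$ and $u_x$ is involved. The crude bound without it would give $16r$, not $8r$.

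With the tangential part correctly at $8r$, your phase-direction estimate pushes the total past the claim. You split $i(\ketbra{B_x}{A_x}-\ketbra{A_x}{B_x})$ by $(a+b)^2\le 2a^2+2b^2$ and charge both pieces to $P$, which gives $2\Tr P$ and a total of $10r$.

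The missing step is to bound the phase derivative without splitting it. Since $A_x\perp B_x$,
\[
(\partial_{x,i\phi_x}S_d^\Phi)^\dagger(\partial_{x,i\phi_x}S_d^\Phi)=\Pi_{E_x},
\]
so using $Q\le\Id$,
\[
\normtwo{Q(\partial_{x,i\phi_x}S_d^\Phi\otimes\Id)P}^2\le\Tr\bigl(P(\Pi_{E_x}\otimes\Id)P\bigr).
\]
Summing over the mutually orthogonal $E_x$ gives at most $\Tr P$. The total is then $5\Tr P+4\Tr Q\le 9r$, which is exactly the paper's count.
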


\begin{lemma}[One-sided CHFS derivative sum]
\label{lem:one-sided-chfs-sum}
Let $I$ be any register. For every vector $\xi$ on the CHFS query
register together with $I$,
\[
    \sum_{x\in\{0,1\}^d}
    \sum_{\delta}
    \norm{
        \bigl(
            \partial_{x,\delta}S_d^\Phi
            \otimes\Id_I
        \bigr)\xi
    }^2
    \leq
    5\cdot 2^d\norm\xi^2,
\]
where, for each $x$, $\delta$ ranges over the
basis $\{i\phi_x,
    e_{x,1},ie_{x,1},
    \ldots,
    e_{x,2^d-1},ie_{x,2^d-1}\}$.
\end{lemma}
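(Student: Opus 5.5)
Only the one-sided statement (\cref{lem:one-sided-chfs-sum}) is proved here; the two-sided one is handled separately. The plan is to expand each derivative from \cref{lem:explicit-chfs-derivatives}, group by input label $x$, and use that the blocks $\ketbra x\otimes\Id$ are mutually orthogonal. Write $\xi=\sum_x \xi_x$ with $\xi_x=(\ketbra x\otimes\Id\otimes\Id_I)\xi$. Each $\partial_{x,\delta}S_d^\Phi$ is supported on the input-$x$ block, so $(\partial_{x,\delta}S_d^\Phi\otimes\Id_I)\xi=(\partial_{x,\delta}S_d^\Phi\otimes\Id_I)\xi_x$. Since $\sum_x\norm{\xi_x}^2=\norm\xi^2$, it suffices to show, for each fixed $x$,
\[
\sum_\delta \norm{(\partial_{x,\delta}S_d^\Phi\otimes\Id_I)\xi_x}^2\le 5\cdot 2^d\norm{\xi_x}^2 .
\]

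Fix $x$ and introduce the partial inner products $a\defeq(\bra{u_x}\otimes\Id_I)\xi_x$ and $b_j\defeq(\bra{t_{x,j}}\otimes\Id_I)\xi_x$, which are vectors on $I$. The vectors $u_x$ and $t_{x,j}$ for $j\in[2^d-1]$ are orthonormal: $u_x\in E_x$, $t_{x,j}\in T_x$, and $E_x\perp T_x$. Therefore
\[
\norm a^2+\sum_j\norm{b_j}^2\le\norm{\xi_x}^2 .
\]

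For the direction $\delta=e_{x,j}$, \cref{lem:explicit-chfs-derivatives} gives the output $-\sqrt2(\ket{u_x}\otimes b_j+\ket{t_{x,j}}\otimes a)$. Its two terms are orthogonal, so its squared norm is $2(\norm{b_j}^2+\norm a^2)$. The direction $ie_{x,j}$ gives the same value. Summing over $j\in[2^d-1]$ gives
\[
4\sum_j\norm{b_j}^2+4(2^d-1)\norm a^2\le 4\cdot 2^d\norm{\xi_x}^2 .
\]
This uses the bound above together with $\sum_j\norm{b_j}^2\le\norm{\xi_x}^2$; alternatively one can bound by $4(2^d-1)\bigl(\norm a^2+\sum_j\norm{b_j}^2\bigr)+4\norm{\xi_x}^2$.

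For the phase direction $\delta=i\phi_x$, the operator $i(\ketbra{B_x}{A_x}-\ketbra{A_x}{B_x})$ is a partial isometry on the two-dimensional orthonormal span $E_x$. Its operator norm is therefore at most $1$, and it contributes at most $\norm{\xi_x}^2$.

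Adding the two contributions gives at most $(4\cdot 2^d+1)\norm{\xi_x}^2\le5\cdot2^d\norm{\xi_x}^2$. Summing over $x$ finishes the proof. The only delicate point is the orthonormality of $\{u_x,t_{x,j}\}$. This holds because $B_x$ has last qubit $1$ and first component $\phi_x$, while $t_{x,j}$ has last qubit $1$ and first component $e_{x,j}\perp\phi_x$, and $A_x$ has last qubit $0$. It is also where the factor $2^d$ enters, since $\norm a^2$ is counted once for each of the $2(2^d-1)$ tangent directions.
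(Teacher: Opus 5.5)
Your proof is correct and is essentially the paper's argument. The paper computes the block identity $\sum_\delta(\partial_{x,\delta}S_d^\Phi)^\dagger(\partial_{x,\delta}S_d^\Phi)=4(2^d-1)\proj{u_x}+4\Pi_{T_x}+\Pi_{E_x}$ and bounds it by $5\cdot 2^d\,\Id$ via the mutual orthogonality of the spaces $E_x\oplus T_x$; your calculation with $a$ and $b_j$ is exactly this quadratic form evaluated on $\xi_x$, with your operator-norm bound for the phase direction standing in for $(\partial_{i\phi_x}S_d^\Phi)^\dagger(\partial_{i\phi_x}S_d^\Phi)=\Pi_{E_x}$.
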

\else
\begin{lemma}[Two-sided CHFS derivative sum]
\label{lem:two-sided-chfs-sum}
Let $P,Q$ be orthogonal projections of rank
at most $r$ on the CHFS query register together with an arbitrary ancillary register. Then
\[
    \sum_{x\in\{0,1\}^d}
    \sum_{\delta}
    \normtwo{
        Q\bigl(
            \partial_{x,\delta}S_d^\Phi
            \otimes\Id
        \bigr)P
    }^2
    \leq
    9r,
\]
where, for each $x$, $\delta$ ranges over the
basis $\{i\phi_x,
    e_{x,1},ie_{x,1},
    \ldots,
    e_{x,2^d-1},ie_{x,2^d-1}\}$.
\end{lemma}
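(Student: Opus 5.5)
The plan is to exploit the explicit rank structure from \cref{lem:explicit-chfs-derivatives}. Each partial derivative is an operator that moves vectors between the two mutually orthogonal low-dimensional spaces $E_x$ and $T_x$ attached to the input-$x$ block, or within $E_x$. The idea is to bound each Hilbert--Schmidt norm by a norm of the form $\normtwo{(\Pi_{T_x}\otimes\Id)P}^2$, $\normtwo{Q(\Pi_{T_x}\otimes\Id)}^2$ or $\normtwo{(\Pi_{E_x}\otimes\Id)P}^2$. We then sum over $x$ using the mutual orthogonality of the spaces $E_x\oplus T_x$, which gives $\sum_x\normtwo{(\Pi_{T_x}\otimes\Id)P}^2\leq\normtwo{P}^2\leq r$, and similarly for $E_x$ and for $Q$.

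\textbf{Tangent directions $e_{x,j}$ and $ie_{x,j}$.} Write
\[
X_{x,j}\defeq Q\bigl(\ketbra{u_x}{t_{x,j}}\otimes\Id\bigr)P,
\qquad
Y_{x,j}\defeq Q\bigl(\ketbra{t_{x,j}}{u_x}\otimes\Id\bigr)P .
\]
By \cref{lem:explicit-chfs-derivatives}, the two directions give $-\sqrt2(X_{x,j}+Y_{x,j})$ and $i\sqrt2(X_{x,j}-Y_{x,j})$. The parallelogram law then yields exactly
\[
\normtwo{Q(\partial_{x,e_{x,j}}S_d^\Phi\otimes\Id)P}^2+\normtwo{Q(\partial_{x,ie_{x,j}}S_d^\Phi\otimes\Id)P}^2
=4\bigl(\normtwo{X_{x,j}}^2+\normtwo{Y_{x,j}}^2\bigr).
\]
For $X_{x,j}$ we drop $Q$, since $\normop Q\leq1$, and use the factorization $\ketbra{u_x}{t_{x,j}}=\ketbra{u_x}{t_{x,j}}\,\proj{t_{x,j}}$ with $\norm{u_x}=1$. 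This gives $\normtwo{X_{x,j}}^2\leq\normtwo{(\proj{t_{x,j}}\otimes\Id)P}^2$. Summing over $j$ and using the orthonormality of the $t_{x,j}$, the sum is at most $\normtwo{(\Pi_{T_x}\otimes\Id)P}^2$. Symmetrically, dropping $P$ gives $\sum_j\normtwo{Y_{x,j}}^2\leq\normtwo{Q(\Pi_{T_x}\otimes\Id)}^2$. Summing over $x$, the $\Pi_{T_x}\otimes\Id$ are mutually orthogonal projections. Hence $\sum_x\normtwo{(\Pi_{T_x}\otimes\Id)P}^2=\Tr\bigl(P(\sum_x\Pi_{T_x}\otimes\Id)P\bigr)\leq\Tr P\leq r$, and likewise for $Q$. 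These directions therefore contribute at most $4(r+r)=8r$.

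\textbf{Phase direction $i\phi_x$.} Here $Z_x\defeq i(\ketbra{B_x}{A_x}-\ketbra{A_x}{B_x})$ is supported on $E_x$ and has operator norm $1$, since it is $i$ times an antisymmetric swap on an orthonormal pair. Writing $Z_x=Z_x\Pi_{E_x}$, we get
\[
\normtwo{Q(Z_x\otimes\Id)P}^2\leq\normtwo{(\Pi_{E_x}\otimes\Id)P}^2 .
\]
Summing over $x$ with orthogonality of the $E_x$ gives at most $r$. Together with the previous part, the total is at most $9r$.

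The only delicate point is to avoid a wasteful split of the phase term into two outer products, which would give $12r$ instead of $9r$. Using the operator-norm bound on the whole block $Z_x$, and the parallelogram identity rather than $(a+b)^2\leq2a^2+2b^2$ for the $e_j/ie_j$ pair, is what achieves the constant $9$. Everything else is a routine Hilbert--Schmidt estimate.
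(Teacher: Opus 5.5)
Your proof is correct and follows the paper's argument essentially step for step. The common steps are: the parallelogram identity for the $e_{x,j}/ie_{x,j}$ pair; dropping $Q$ (respectively $P$) to bound the two outer-product terms by $\Tr(P(\Pi_{T_x}\otimes\Id)P)$ and $\Tr(Q(\Pi_{T_x}\otimes\Id)Q)$; the fact that the phase derivative satisfies $(\partial_{x,i\phi_x}S_d^\Phi)^\dagger(\partial_{x,i\phi_x}S_d^\Phi)=\Pi_{E_x}$; and orthogonality across $x$, which yields $5\Tr P+4\Tr Q\leq 9r$.
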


\begin{proof}
We suppress the subscript $x$ in this proof; the distinct $x$'s are only
considered when we use the summation over $x\in\bit^d$.
By \cref{lem:explicit-chfs-derivatives} and the parallelogram identity, we have
\ifnum\fullpage=0
\begin{align}
    &\normtwo{
        Q\bigl(
            \partial_{e_j}S_d^\Phi\otimes\Id
        \bigr)P
    }^2
    +
    \normtwo{
        Q\bigl(
            \partial_{ie_j}S_d^\Phi\otimes\Id
        \bigr)P
    }^2
    \nonumber\\
    &\qquad=
    4\normtwo{
        Q\bigl(
            \ketbra{u}{t_j}\otimes\Id
        \bigr)P
    }^2
    +
    4\normtwo{
        Q\bigl(
            \ketbra{t_j}{u}\otimes\Id
        \bigr)P
    }^2.
    \label{eq:real-imaginary-parallelogram}
\end{align}
\else
\begin{align}
    \normtwo{
        Q\bigl(
            \partial_{e_j}S_d^\Phi\otimes\Id
        \bigr)P
    }^2
    +
    \normtwo{
        Q\bigl(
            \partial_{ie_j}S_d^\Phi\otimes\Id
        \bigr)P
    }^2=
    4\normtwo{
        Q\bigl(
            \ketbra{u}{t_j}\otimes\Id
        \bigr)P
    }^2
    +
    4\normtwo{
        Q\bigl(
            \ketbra{t_j}{u}\otimes\Id
        \bigr)P
    }^2.
    \label{eq:real-imaginary-parallelogram}
\end{align}
\fi
Since $0\leq Q\leq\Id$, we have
\begin{align*}
    \sum_{j=1}^{2^d-1}
    \normtwo{
        Q\bigl(
            \ketbra{u}{t_j}\otimes\Id
        \bigr)P
    }^2
    &=
    \sum_{j=1}^{2^d-1}
    \Tr\left(
        P
        \bigl(\ketbra{t_j}{u}\otimes\Id\bigr)
        Q
        \bigl(\ketbra{u}{t_j}\otimes\Id\bigr)
        P
    \right)
    \\
    \leq&
    \sum_{j=1}^{2^d-1}
    \Tr\left(
        P\bigl(\proj{t_j}\otimes\Id\bigr)P
    \right)
    =
    \Tr\left(
        P(\Pi_T\otimes\Id)P
    \right).
\end{align*}
Similarly, by invariance of the Hilbert--Schmidt norm under adjoint and
the same calculation with $P$ and $Q$ interchanged,
\[
    \sum_{j=1}^{2^d-1}
    \normtwo{
        Q\bigl(
            \ketbra{t_j}{u}\otimes\Id
        \bigr)P
    }^2
    =
    \sum_{j=1}^{2^d-1}
    \normtwo{
        P\bigl(
            \ketbra{u}{t_j}\otimes\Id
        \bigr)Q
    }^2
    \leq
    \Tr\left(
        Q(\Pi_T\otimes\Id)Q
    \right).
\]
Therefore, summing \cref{eq:real-imaginary-parallelogram} over $j$ gives, for each $x$,
\ifnum\fullpage=0
\begin{align*}
    &\sum_{j=1}^{2^d-1}
    \left(
        \normtwo{
            Q\bigl(
                \partial_{e_j}S_d^\Phi\otimes\Id
            \bigr)P
        }^2
        +
        \normtwo{
            Q\bigl(
                \partial_{ie_j}S_d^\Phi\otimes\Id
            \bigr)P
        }^2
    \right)
    \\
    &\qquad\leq
    4\Tr\left(
        P(\Pi_T\otimes\Id)P
    \right)
    +
    4\Tr\left(
        Q(\Pi_T\otimes\Id)Q
    \right).
\end{align*}
\else
\[
\sum_{j=1}^{2^d-1}
    \left(
        \normtwo{
            Q\bigl(
                \partial_{e_j}S_d^\Phi\otimes\Id
            \bigr)P
        }^2
        +
        \normtwo{
            Q\bigl(
                \partial_{ie_j}S_d^\Phi\otimes\Id
            \bigr)P
        }^2
    \right)
    \leq
    4\Tr\left(
        P(\Pi_T\otimes\Id)P
    \right)
    +
    4\Tr\left(
        Q(\Pi_T\otimes\Id)Q
    \right).
\]
\fi

For the phase direction,
\cref{lem:explicit-chfs-derivatives} gives $\bigl(
        \partial_{i\phi}S_d^\Phi
    \bigr)^\dagger
    \bigl(
        \partial_{i\phi}S_d^\Phi
    \bigr)
    =
    \Pi_E$ and
hence,
\[
    \normtwo{
        Q\bigl(
            \partial_{i\phi}S_d^\Phi\otimes\Id
        \bigr)P
    }^2
    \leq
    \Tr\left(
        P(\Pi_E\otimes\Id)P
    \right).
\]

Restoring the subscript $x$ and summing over $x$, the mutual
orthogonality of the spaces $E_x\oplus T_x$ gives
\[
\sum_{x\in\{0,1\}^d}
    \sum_{\delta}
    \normtwo{
        Q\bigl(
            \partial_{x,\delta}S_d^\Phi\otimes\Id
        \bigr)P
    }^2
    \leq
    5\Tr P+4\Tr Q
    \leq
    9r,
\]
where we used $\sum_x\Pi_{E_x}\leq\Id$,
$\sum_x\Pi_{T_x}\leq\Id$, and
$\Tr(R\Pi R)\leq\Tr R$ for orthogonal projections $R$ and $\Pi$. The final inequality is due to the rank condition.
\end{proof}

\begin{lemma}[One-sided CHFS derivative sum]
\label{lem:one-sided-chfs-sum}
Let $I$ be any register. For every vector $\xi$ on the CHFS query
register together with $I$,
\[
    \sum_{x\in\{0,1\}^d}
    \sum_{\delta}
    \norm{
        \bigl(
            \partial_{x,\delta}S_d^\Phi
            \otimes\Id_I
        \bigr)\xi
    }^2
    \leq
    5\cdot 2^d\norm\xi^2,
\]
where, for each $x$, $\delta$ ranges over the
basis $\{i\phi_x,
    e_{x,1},ie_{x,1},
    \ldots,
    e_{x,2^d-1},ie_{x,2^d-1}\}$.
\end{lemma}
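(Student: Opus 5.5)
The plan is to mirror the proof of \cref{lem:two-sided-chfs-sum}, replacing the Hilbert--Schmidt norm of a sandwiched operator by the vector norm of the derivative applied to $\xi$. We fix $x$ and suppress it, and write $\xi_x\defeq(\proj x\otimes\Id\otimes\Id_I)\xi$. By \cref{lem:explicit-chfs-derivatives}, every derivative $\partial_{x,\delta}S_d^\Phi$ is supported on the input-$x$ block. Hence $\bigl(\partial_{x,\delta}S_d^\Phi\otimes\Id_I\bigr)\xi=\bigl(\partial_{x,\delta}S_d^\Phi\otimes\Id_I\bigr)\xi_x$, and it suffices to prove a per-block bound of the form $\sum_\delta\norm{(\partial_{x,\delta}S_d^\Phi\otimes\Id_I)\xi_x}^2\leq 5\cdot2^d\norm{\xi_x}^2$. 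Summing over $x$ and using $\sum_x\norm{\xi_x}^2=\norm\xi^2$ then gives the claim.

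\textbf{Tangent directions $e_j,ie_j$.} As in \cref{eq:real-imaginary-parallelogram}, the parallelogram identity applied to the explicit formulas gives
\[
\norm{(\partial_{e_j}S\otimes\Id)\xi_x}^2+\norm{(\partial_{ie_j}S\otimes\Id)\xi_x}^2
=4\norm{(\ketbra{u}{t_j}\otimes\Id)\xi_x}^2+4\norm{(\ketbra{t_j}{u}\otimes\Id)\xi_x}^2 .
\]
We then sum over $j\in[2^d-1]$.
\begin{itemize}
\item The first term sums to $4\sum_j\norm{(\bra{t_j}\otimes\Id)\xi_x}^2=4\norm{(\Pi_T\otimes\Id)\xi_x}^2$.
\item The second term is $4\sum_j\norm{(\bra u\otimes\Id)\xi_x}^2=4(2^d-1)\norm{(\bra u\otimes\Id)\xi_x}^2\leq4(2^d-1)\norm{(\Pi_E\otimes\Id)\xi_x}^2$.
\end{itemize}
This second term is where the factor $2^d$ enters: the same rank-one component along $u$ is hit once for every $j$, so no orthogonality argument can remove the factor.

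\textbf{Phase direction.} The identity $(\partial_{i\phi}S)^\dagger(\partial_{i\phi}S)=\Pi_E$ from the proof of \cref{lem:two-sided-chfs-sum} gives $\norm{(\partial_{i\phi}S\otimes\Id)\xi_x}^2=\norm{(\Pi_E\otimes\Id)\xi_x}^2$.

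\textbf{Combining.} The per-block total is at most
\[
4\norm{\Pi_T\xi_x}^2+\bigl(4(2^d-1)+1\bigr)\norm{\Pi_E\xi_x}^2\leq 4\cdot2^d\,\norm{\xi_x}^2+\norm{\xi_x}^2\leq5\cdot2^d\norm{\xi_x}^2 .
\]
Here we used $\Pi_E\perp\Pi_T$ to get $\norm{\Pi_T\xi_x}^2+\norm{\Pi_E\xi_x}^2\leq\norm{\xi_x}^2$, together with $2^d\geq1$.

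The only step needing care is tracking constants so that the total is $5\cdot2^d$ and not larger; the rest consists of identities already established in the proof of \cref{lem:two-sided-chfs-sum}.
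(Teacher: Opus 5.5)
Your proof is correct and follows essentially the same route as the paper. The paper works with the quadratic form: for each $x$ it computes $\sum_\delta(\partial_\delta S)^\dagger(\partial_\delta S)=4(2^d-1)\proj u+4\Pi_T+\Pi_E\leq(4\cdot2^d-3)\Pi_E+4\Pi_T$, then uses the mutual orthogonality of the blocks to get the operator bound $5\cdot2^d\Id$. Your block decomposition $\xi=\sum_x\xi_x$ followed by the per-block vector computation is exactly this argument written out on vectors.
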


\begin{proof}
We suppress the subscript $x$ in this proof; the distinct $x$'s are only
considered when we use the summation over $x\in\bit^d$.
By \cref{lem:explicit-chfs-derivatives}, we have
\[
    \bigl(
        \partial_{e_j}S_d^\Phi
    \bigr)^\dagger
    \bigl(
        \partial_{e_j}S_d^\Phi
    \bigr)
    +
    \bigl(
        \partial_{ie_j}S_d^\Phi
    \bigr)^\dagger
    \bigl(
        \partial_{ie_j}S_d^\Phi
    \bigr)
    =
    4\left(
        \proj u+\proj{t_j}
    \right).
\]
Also, for the phase direction, we have $
    \bigl(
        \partial_{i\phi}S_d^\Phi
    \bigr)^\dagger
    \bigl(
        \partial_{i\phi}S_d^\Phi
    \bigr)
    =
    \Pi_E.$
Therefore, summing over the tangent basis gives, for each $x$,
\ifnum\fullpage=0
\begin{align*}
    \sum_{\delta}
    \bigl(
        \partial_{\delta}S_d^\Phi
    \bigr)^\dagger
    \bigl(
        \partial_{\delta}S_d^\Phi
    \bigr)
    &=
    4(2^d-1)\proj u
    +
    4\Pi_T
    +
    \Pi_E
    \\
    &\leq
    (4\cdot 2^d-3)\Pi_E
    +
    4\Pi_T,
\end{align*}
\else
\[
    \sum_{\delta}
    \bigl(
        \partial_{\delta}S_d^\Phi
    \bigr)^\dagger
    \bigl(
        \partial_{\delta}S_d^\Phi
    \bigr)
    =
    4(2^d-1)\proj u
    +
    4\Pi_T
    +
    \Pi_E
    \leq
    (4\cdot 2^d-3)\Pi_E
    +
    4\Pi_T,
\]
\fi
where we used $\proj u\leq\Pi_E$.

Summing over all $x$, the mutual
orthogonality of the spaces $E_x\oplus T_x$ gives
\[
    \sum_{x\in\{0,1\}^d}
    \sum_{\delta}
    \bigl(
        \partial_{x,\delta}S_d^\Phi
    \bigr)^\dagger
    \bigl(
        \partial_{x,\delta}S_d^\Phi
    \bigr)
    \leq
    5\cdot 2^d\Id.
\]
Hence, expanding the left hand side becomes
\[
\sum_{x\in\{0,1\}^d}
    \sum_{\delta}
    \norm{
        \bigl(
            \partial_{x,\delta}S_d^\Phi
            \otimes\Id_I
        \bigr)\xi
    }^2
    =
    \left\langle
        \xi,
        \left(
            \sum_{x\in\{0,1\}^d}
            \sum_{\delta}
            \bigl(
                \partial_{x,\delta}S_d^\Phi
            \bigr)^\dagger
            \bigl(
                \partial_{x,\delta}S_d^\Phi
            \bigr)
            \otimes\Id_I
        \right)
        \xi
    \right\rangle
\]
which is bounded above by $5\cdot 2^d\norm\xi^2$, proving the lemma.
\end{proof}
\fi

\subsection{Oracle algorithm derivatives}
\label{subsec:channel-valued-chfs-functions}
This section extends the derivative to the oracle algorithms. We assume that in each query, algorithms fix the input length classically. Given the controlled access to the oracle, this can be done by, e.g., querying $S^\Phi_1,\ldots,S^\Phi_{d_{\max}}$ instead of a single coherent query for the maximum input size $d_{\max}$ at the cost of $d_{\max}$ query number blow up. See also \cite{TCC:DFH22}. In our main purpose, the algorithm runs in polynomial time so that $d_{\max}$ is at most polynomial.

\paragraph{Oracle algorithm as a function.}
Consider a PRU candidate algorithm $\mathcal A$ having oracle access to
$S^\Phi$. Let $A$ be the input/output register of $\mathcal A$, and let
$B$ be its ancillary register. After purifying intermediate measurements,
the algorithm can be written as
\begin{align}
W_\Phi
=
U_{q}S_{d_{q}}^\Phi
U_{q-1}
\cdots
U_1S_{d_1}^\Phi U_0,
\label{eq:oracle-algorithm-query-comb}
\end{align}
acting on the register $AB$, where $q$ is the total number
of oracle queries, $d_t$ is the classically fixed input length of the
$t$-th query, and every unitary $U_t$ is independent of $\Phi$.

We now fix a query input length $d$ and study the dependence of the algorithm only
on $\Phi_d$, while keeping all state families $\Phi_{d'}$ for $d'\neq d$ fixed. Let $q_d$ be the number of query positions $t$ satisfying $d_t=d$.
By absorbing all gates and queries at the other input lengths into the
inter-query unitaries and relabeling the length-$d$ query positions, we can write
\begin{align}
W_\Phi
=
U_{q_d}S_d^\Phi
U_{q_d-1}S_d^\Phi
\cdots
U_1S_d^\Phi U_0,
\label{eq:one-layer-query-comb}
\end{align}
where the unitaries $U_t$ may depend on the fixed state families
$\Phi_{d'}$ for $d'\neq d$, but are independent of $\Phi_d$.
Here we reuse the same $U$ for the query-irrelevant unitaries because they are less important in the analysis below.
For the analysis below, we also drop the subscript and write $\Phi$ and $S^{\Phi}$ to denote $\Phi_d$ and $S_d^{\Phi}$, respectively.

We consider the application of $\mathcal A$ to the maximally entangled
state $\ket{\Omega}_{AR}$, where $R\cong A$ is a reference register not
touched by the algorithm. The final states before and after tracing out
$B$ are
\begin{align}
\ket{\Xi_\Phi}_{ABR}
\defeq
(W_\Phi\otimes\Id_R)
\bigl(\ket\Omega_{AR}\ket0_B\bigr),
\qquad
\rho_\Phi
\defeq
\Tr_B\proj{\Xi_\Phi}.
\label{eq:reduced-choi-output}
\end{align}

We regard this application of the oracle algorithm as a function from the
product state space defining the length-$d$ CHFS oracle to the space of
Hermitian operators on $A\otimes R$. More formally, we consider the smooth
function
\[
F_d^{\mathcal A}:\Phi
\longmapsto
\rho_\Phi
\quad\text{from}\quad
\prod_{x\in\{0,1\}^d}\Sphere(\C^{2^d})
\quad\text{to}\quad
\Herm(A\otimes R),
\]
where $\Herm(A\otimes R)$ is regarded as a real Hilbert space with the
Hilbert-Schmidt inner product. We can therefore consider the derivatives
of $F_d^{\mathcal A}$ with respect to the states in $\Phi$.

\paragraph{Dealing with errors.}
Recall from \cref{eq:reduced-choi-output}.
Because of tracing out $B$, $\rho_\Phi$ may be mixed even though $\ket{\Xi_\Phi}$ is pure.
For each $\Phi$, define its distance from the set of
unitary Choi states by
\[
\eta(\Phi)
\defeq
\inf_{V\in\mathsf U(A)}
\normone{\rho_\Phi-
(V\otimes\Id_R)\ketbra\Omega(V\otimes\Id_R)^\dagger}.
\]
By compactness of $\mathsf U(A)$, the infimum must be attained by some $V$.
Choose any minimizer $V_\Phi$ and let
\[
P_\Phi
\defeq
(V_\Phi\otimes\Id_R)\ketbra\Omega
(V_\Phi\otimes\Id_R)^\dagger,
\qquad
\normone{\rho_\Phi-P_\Phi}=\eta(\Phi).
\]

Since $\ket{\Xi_\Phi}$ purifies $\rho_\Phi$,
\cref{lem:elementary-quantum-estimates} gives a pure state
$\ket{\chi_\Phi}_B$ such that the corresponding purification
of $P_\Phi$ approximates $\ket{\Xi_\Phi}$:
\[
    \ket{\Xi_\Phi}
    =
    \ket{\Xi_\Phi^{\mathrm{id}}}
    +
    \ket{e_\Phi},\quad
    \ket{\Xi_\Phi^{\mathrm{id}}}
    \defeq
    \bigl((V_\Phi\otimes\Id_R)\ket\Omega_{AR}\bigr)\ket{\chi_\Phi}_B,
    \quad
    \norm{e_\Phi}
    \leq
    \sqrt{\eta(\Phi)}.
\]

\paragraph{Derivatives.}
Fix $t\in[q_d]$. We occasionally consider the decomposition
\[
W_\Phi
=
W_{>t,\Phi}
S^\Phi
W_{<t,\Phi},
\]
where $W_{<t,\Phi}$ contains everything before the $t$-th query and
$W_{>t,\Phi}$ contains everything after it.
Applying the Leibniz rule of derivatives to \cref{eq:one-layer-query-comb} gives
\[
    \partial_{x,\delta}W_\Phi
    =
    \sum_{t=1}^{q_d}
    W_{>t,\Phi}
    \bigl(\partial_{x,\delta}S^\Phi\bigr)
    W_{<t,\Phi}.
\]
Define the $t$-th query part of the derivative of the
purified output by
\[
    \ket{\Delta_{t,x,\delta}}
    \defeq
    \left(
        W_{>t,\Phi}
        \bigl(\partial_{x,\delta}S^\Phi\bigr)
        W_{<t,\Phi}
        \otimes\Id_R
    \right)
    \bigl(\ket\Omega_{AR}\ket{0}_B\bigr).
\]
The Leibniz rule and linearity of the partial trace give
\[
    \partial_{x,\delta}\rho_\Phi
    =
    \sum_{t=1}^{q_d}
    \Tr_B\left(
        \ketbra{\Delta_{t,x,\delta}}{\Xi_\Phi}
        +
        \ketbra{\Xi_\Phi}{\Delta_{t,x,\delta}}
    \right).
\]

The squared
gradient norm of $F_d^{\mathcal A}$ at $\Phi$ is
\[
    \|\nabla F_d^{\mathcal A}(\Phi)\|_{\mathrm{HS}}^2
    :=
    \sum_{x\in\{0,1\}^d}
    \sum_{\delta}
    \normtwo{\partial_{x,\delta}\rho_\Phi}^2.
\]
The following lemma gives an upper bound on the gradient. 
\ifnum\submission=0
The remainder of this subsection is devoted to the proof of this lemma.

\begin{lemma}
\label{lem:gradient-energy-bound}
For every CHFS state family $\Phi$,
\[
 \|\nabla F_d^{\mathcal A}(\Phi)\|_{\mathrm{HS}}^2=
\sum_{x\in\{0,1\}^d}
\sum_{\delta}
\normtwo{
\partial_{x,\delta}\rho_\Phi
}^2
\leq
72q_d^2
\bigl(
1+2^d\eta(\Phi)
\bigr).
\]
\end{lemma}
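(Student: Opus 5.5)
I split $\ket{\Xi_\Phi}=\ket{\Xi^{\mathrm{id}}_\Phi}+\ket{e_\Phi}$ as in the "Dealing with errors" paragraph. Each $t$-th summand of $\partial_{x,\delta}\rho_\Phi$ then becomes a main term plus an error term:
\[
\Tr_B\bigl(\ketbra{\Delta_{t,x,\delta}}{\Xi^{\mathrm{id}}_\Phi}+\mathrm{h.c.}\bigr)
+\Tr_B\bigl(\ketbra{\Delta_{t,x,\delta}}{e_\Phi}+\mathrm{h.c.}\bigr).
\]
Summing over $t$ and using $\normtwo{\sum_{t=1}^{q_d} X_t}^2\le q_d\sum_t\normtwo{X_t}^2$ together with $(a+b)^2\le 2a^2+2b^2$ reduces the claim to two bounds. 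For each fixed $t$, I want the main part summed over $(x,\delta)$ to be $O(1)$ and the error part summed over $(x,\delta)$ to be $O(2^d\eta(\Phi))$. The constant $72$ should then fall out of the bookkeeping, which I will not optimize now.

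\textbf{Error part.} I use the preliminary inequality $\normtwo{\Tr_B(\ketbra ue)}\le\norm u\norm e$, giving
\[
\normtwo{\Tr_B(\ketbra{\Delta_{t,x,\delta}}{e_\Phi})}^2\le\norm{\Delta_{t,x,\delta}}^2\,\eta(\Phi).
\]
Since $W_{>t,\Phi}$ is unitary, $\norm{\Delta_{t,x,\delta}}=\norm{(\partial_{x,\delta}S^\Phi\otimes\Id)\xi_t}$ with $\xi_t=W_{<t,\Phi}(\ket\Omega\ket0)$ a unit vector. \cref{lem:one-sided-chfs-sum} then gives $\sum_{x,\delta}\norm{\Delta_{t,x,\delta}}^2\le 5\cdot2^d$. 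This yields the $2^d\eta(\Phi)$ term.

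\textbf{Main part.} This is where the low-rank structure enters. Because $\ket{\Xi^{\mathrm{id}}_\Phi}=\ket{\omega}_{AR}\ket{\chi_\Phi}_B$ is a product state with $\ket\omega=(V_\Phi\otimes\Id)\ket\Omega$, the partial trace is explicit:
\[
\Tr_B(\ketbra{\Delta}{\Xi^{\mathrm{id}}})=\ketbra{\tilde\Delta}{\omega},\qquad \ket{\tilde\Delta}=(\Id\otimes\bra{\chi_\Phi})\ket{\Delta}.
\]
Its Hilbert--Schmidt norm is $\norm{\tilde\Delta}=\abs{\bra{\omega}\bra{\chi_\Phi}\,W_{>t,\Phi}(\partial S\otimes\Id)\xi_t}$ maximized over unit vectors of the form $\bra{\omega'}$ on $AR$ — so the norm is still one-sided in $AR$. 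A one-sided bound alone gives only $2^d$, so I need a genuinely two-sided bound via \cref{lem:two-sided-chfs-sum}.

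My first attempt is to write the norm in terms of the rank-one projection $P=\proj{\xi_t}$ on the input side. On the output side I would take $Q=W_{>t,\Phi}^\dagger\bigl(\Id_{AR}\otimes\proj{\chi_\Phi}\bigr)W_{>t,\Phi}$. However, this $Q$ has rank $\dim(AR)$, which is huge, so the direct two-sided lemma fails.

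The way out is to use that $\ket{\tilde\Delta}$ is paired against $\bra\omega$, a fixed vector; the HS norm of $\ketbra{\tilde\Delta}{\omega}$ equals $\norm{\tilde\Delta}$, so the rank problem really does arise. I therefore expect the right quantity is the Hermitian combination: the derivative direction restricted to the image of $\ketbra{\tilde\Delta}{\omega}+\ketbra{\omega}{\tilde\Delta}$. The component of $\tilde\Delta$ orthogonal to $\omega$ contributes to the HS norm, so this cannot be discarded either.

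Instead I would prove $\sum_{x,\delta}\norm{\tilde\Delta_{t,x,\delta}}^2\le 9$ by exploiting an averaging over the reference register. Since $\ket\Omega$ is maximally entangled, $\xi_t$ has reduced state $\Id_R/d_A$ on $R$. Moreover, $\tilde\Delta$ involves the operator $(\Id\otimes\bra{\chi_\Phi})W_{>t}(\partial S)W_{<t}(\Id\otimes\ket0)$ acting on $A$, applied to $\ket\Omega$. Hence
\[
\norm{\tilde\Delta}^2=\frac{1}{d_A}\normtwo{K}^2,\qquad K=(\Id\otimes\bra{\chi_\Phi})W_{>t}(\partial S\otimes\Id)W_{<t}(\Id\otimes\ket0),
\]
so
\[
\normtwo{K}^2=\normtwo{Q'(\partial S\otimes\Id)P'}^2
\]
with $P'$ the projection onto $W_{<t}(A\otimes\ket0)$ and $Q'$ the projection onto $W_{>t}^\dagger(A\otimes\ket{\chi_\Phi})$, both of rank $d_A$. \cref{lem:two-sided-chfs-sum} then gives a sum of at most $9d_A$, i.e.\ $\sum_{x,\delta}\norm{\tilde\Delta}^2\le 9$. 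This normalization trick, which removes the dimension by using the maximally entangled input, is the crux and the step I expect to be most delicate. Together with the adjoint term and the $q_d$ Cauchy--Schwarz factor, this gives the main $O(q_d^2)$ contribution.
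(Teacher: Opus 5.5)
Your proposal is correct and, once the exploratory detour in the main-part paragraph (rank-one $P$, the Hermitian combination) is dropped, it is essentially the paper's proof: the same Cauchy--Schwarz over $t$ and split into main and error terms giving $8\normtwo{X_{t,x,\delta}}^2+8\normtwo{Y_{t,x,\delta}}^2$, the same identity $\normtwo{X_{t,x,\delta}}^2=d_A^{-1}\normtwo{Q_{t,\Phi}(\partial_{x,\delta}S^\Phi)P_{t,\Phi}}^2$ with the two rank-$d_A$ projections fed into \cref{lem:two-sided-chfs-sum}, and \cref{lem:one-sided-chfs-sum} for the error term. Your bookkeeping in fact gives exactly $q_d^2\bigl(72+40\cdot 2^d\eta(\Phi)\bigr)\leq 72q_d^2\bigl(1+2^d\eta(\Phi)\bigr)$, so no further constant optimization is needed.
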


\begin{proof}
By the preceding derivative identity and Cauchy--Schwarz,
\begin{align}
\normtwo{
\partial_{x,\delta}\rho_\Phi
}^2
\leq
q_d\sum_{t=1}^{q_d}
\normtwo{
\Tr_B\left(
\ketbra{\Delta_{t,x,\delta}}{\Xi_\Phi}
+
\ketbra{\Xi_\Phi}{\Delta_{t,x,\delta}}
\right)
}^2.
\label{eq:query-cauchy-schwarz}
\end{align}
Recall $\ket{\Xi_\Phi}
    =
    \ket{\Xi_\Phi^{\mathrm{id}}}
    +
    \ket{e_\Phi}$. To deal with them separately, define
\[
X_{t,x,\delta}
\defeq
\Tr_B\left(
\ketbra{\Delta_{t,x,\delta}}
{\Xi_\Phi^{\mathrm{id}}}
\right),
Y_{t,x,\delta}
\defeq
\Tr_B\left(
\ketbra{\Delta_{t,x,\delta}}{e_\Phi}
\right).
\]
Then, we have $\Tr_B\left(
\ketbra{\Delta_{t,x,\delta}}{\Xi_\Phi}
+
\ketbra{\Xi_\Phi}{\Delta_{t,x,\delta}}
\right)
=
X_{t,x,\delta}
+
X_{t,x,\delta}^\dagger
+
Y_{t,x,\delta}
+
Y_{t,x,\delta}^\dagger,$
and therefore
\begin{align}
\normtwo{
\Tr_B\left(
\ketbra{\Delta_{t,x,\delta}}{\Xi_\Phi}
+
\ketbra{\Xi_\Phi}{\Delta_{t,x,\delta}}
\right)
}^2
\leq
8\normtwo{X_{t,x,\delta}}^2
+
8\normtwo{Y_{t,x,\delta}}^2.
\label{eq:z-product-error-bound}
\end{align}

We first bound the terms $X_{t,x,\delta}$.
Let $\{\ket a\}_{a=1}^{d_A}$ be the orthonormal basis of $A$ used to
define $\ket\Omega_{AR}$. Define
\[
    \ket{s_a}
    \defeq
    W_{>t,\Phi}
    \bigl(\partial_{x,\delta}S^\Phi\bigr)
    W_{<t,\Phi}
    \bigl(\ket a_A\ket0_B\bigr).
\]
Then
\[
    \ket{\Delta_{t,x,\delta}}
    =
    \frac1{\sqrt{d_A}}
    \sum_{a=1}^{d_A}
    \ket{s_a}\ket a_R,\quad
    \ket{\Xi_\Phi^{\mathrm{id}}}
    =
    \frac1{\sqrt{d_A}}
    \sum_{b=1}^{d_A}
    V_\Phi\ket b_A\ket{\chi_\Phi}_B\ket b_R.
\]
Therefore,
\[
    X_{t,x,\delta}
    =
    \frac1{d_A}
    \sum_{a,b=1}^{d_A}
    (\Id_A\otimes\bra{\chi_\Phi})\ket{s_a}
    \bra{ b}V_\Phi^\dagger
    \otimes\ketbra {a} {b}_R.
\]
Since the elements $\{\ketbra {a} {b}_R\}_{a,b}$ can be regarded as orthonormal matrices with respect to the
Hilbert-Schmidt inner product,
\[
    \normtwo{X_{t,x,\delta}}^2
    =
    \frac1{d_A^2}
    \sum_{a,b=1}^{d_A}
    \norm{
        (\Id_A\otimes\bra{\chi_\Phi})\ket{s_a}
    }^2.
\]

Define the projections regarding the initial states and the final states and the operators $W_{<t}$ and $W_{>t}$ by
\ifnum\fullpage=0
\begin{align*}
P_{t,\Phi}
&\defeq
W_{<t,\Phi}
\bigl(\Id_A\otimes\proj{0}_B\bigr)
W_{<t,\Phi}^\dagger,
\\
Q_{t,\Phi}
&\defeq
W_{>t,\Phi}^\dagger
\bigl(
\Id_A\otimes
\ketbra{\chi_\Phi}{\chi_\Phi}_B
\bigr)
W_{>t,\Phi}.
\end{align*}
\else
\[
P_{t,\Phi}
\defeq
W_{<t,\Phi}
\bigl(\Id_A\otimes\proj{0}_B\bigr)
W_{<t,\Phi}^\dagger,\qquad
Q_{t,\Phi}
\defeq
W_{>t,\Phi}^\dagger
\bigl(
\Id_A\otimes
\ketbra{\chi_\Phi}{\chi_\Phi}_B
\bigr)
W_{>t,\Phi}.
\]
\fi
Both $P_{t,\Phi}$ and $Q_{t,\Phi}$ are orthogonal projections of rank
$d_A$.
Then we can write
\begin{align*}
\normtwo{X_{t,x,\delta}}^2
&=
\frac1{d_A^2}
\sum_{a,b}
\norm{
(\Id_A\otimes\bra{\chi_\Phi})\ket{s_a}
}^2
\\
&=
\frac1{d_A}
\sum_a
\norm{
(\Id_A\otimes\bra{\chi_\Phi})
W_{>t,\Phi}
\bigl(\partial_{x,\delta}S^\Phi\bigr)
W_{<t,\Phi}
\bigl(\ket a_A\ket0_B\bigr)
}^2
\\
&=
\frac1{d_A}
\Tr\left(
P_{t,\Phi}
\bigl(\partial_{x,\delta}S^\Phi\bigr)^\dagger
Q_{t,\Phi}
\bigl(\partial_{x,\delta}S^\Phi\bigr)
P_{t,\Phi}
\right)
\\
&=
\frac1{d_A}
\normtwo{
Q_{t,\Phi}
\bigl(\partial_{x,\delta}S^\Phi\bigr)
P_{t,\Phi}
}^2.
\end{align*}
Since $P_{t,\Phi}$ and $Q_{t,\Phi}$ have rank $d_A$,
\cref{lem:two-sided-chfs-sum} gives
\begin{align}
\sum_{x,\delta}
\normtwo{X_{t,x,\delta}}^2
=
\frac1{d_A}
\sum_{x,\delta}
\normtwo{
Q_{t,\Phi}
\bigl(\partial_{x,\delta}S^\Phi\bigr)
P_{t,\Phi}
}^2\leq
9.
\label{eq:product-derivative-total}
\end{align}

For the error term,
$\norm{e_\Phi}\leq\sqrt{\eta(\Phi)}$ gives
\[
\normtwo{Y_{t,x,\delta}}^2
=\normtwo{\Tr_B\left(
\ketbra{\Delta_{t,x,\delta}}{e_\Phi}
\right)}^2
\leq \norm{\Delta_{t,x,\delta}}^2 \norm{e_\Phi}^2
\leq
\eta(\Phi)
\norm{\Delta_{t,x,\delta}}^2
\]
where we use $\normtwo{\Tr_B(\ketbra{x}{y})} \le \norm{x}\norm{y}.$
The pre-query state
\[
(W_{<t,\Phi}\otimes\Id_R)
\bigl(\ket\Omega_{AR}\ket0_B\bigr)
\]
is a unit vector. Including the reference register and all other
non-query registers in the ancillary register of
\cref{lem:one-sided-chfs-sum}, and using invariance of the norm under
$W_{>t,\Phi}$, gives
\begin{align}
\sum_{x,\delta}
\normtwo{Y_{t,x,\delta}}^2
\leq
5\cdot 2^d\eta(\Phi).
\label{eq:error-derivative-total}
\end{align}

Summing \cref{eq:query-cauchy-schwarz} over $x,\delta$, and using
\cref{eq:z-product-error-bound,eq:product-derivative-total,eq:error-derivative-total},
gives
\[
8q_d^2
\bigl(
9+5\cdot 2^d\eta(\Phi)
\bigr)
\leq
72q_d^2
\bigl(
1+2^d\eta(\Phi)
\bigr),
\]
which proves the claim.
\end{proof}

\else
The proof of the following lemma is placed in \cref{app:chfsproofs}. The high level idea of the proof is by bounding the terms related to $\ket{\Xi_\Phi^{\mathrm{id}}}$ using \cref{lem:two-sided-chfs-sum}, and bounding the terms about $\ket{e_\Phi}$ using \cref{lem:one-sided-chfs-sum}.

\begin{lemma}
\label{lem:gradient-energy-bound}
For every CHFS state family $\Phi$,
\[
 \|\nabla F_d^{\mathcal A}(\Phi)\|_{\mathrm{HS}}^2=
\sum_{x\in\{0,1\}^d}
\sum_{\delta}
\normtwo{
\partial_{x,\delta}\rho_\Phi
}^2
\leq
72q_d^2
\bigl(
1+2^d\eta(\Phi)
\bigr).
\]
\end{lemma}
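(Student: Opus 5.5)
We will bound each partial derivative $\partial_{x,\delta}\rho_\Phi$ using the Leibniz expansion over query positions, and then split the output purification into its ideal part and its error part. First, Cauchy--Schwarz over the $q_d$ query positions gives
\[
\normtwo{\partial_{x,\delta}\rho_\Phi}^2\leq q_d\sum_{t=1}^{q_d}\normtwo{\Tr_B\bigl(\ketbra{\Delta_{t,x,\delta}}{\Xi_\Phi}+\ketbra{\Xi_\Phi}{\Delta_{t,x,\delta}}\bigr)}^2 .
\]
After summing over $x,\delta$, it therefore suffices to show that for each fixed $t$ the sum over $x,\delta$ is at most $8(9+5\cdot 2^d\eta(\Phi))$.

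Next, substitute $\ket{\Xi_\Phi}=\ket{\Xi^{\mathrm{id}}_\Phi}+\ket{e_\Phi}$. Set $X=\Tr_B(\ketbra{\Delta}{\Xi^{\mathrm{id}}_\Phi})$ and $Y=\Tr_B(\ketbra{\Delta}{e_\Phi})$. The bracket then equals $X+X^\dagger+Y+Y^\dagger$, and the elementary inequality $\|a+b+c+d\|^2\le 4\sum\|\cdot\|^2$ bounds its squared norm by $8\normtwo X^2+8\normtwo Y^2$.

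The error term $Y$ is the easy one. The preliminaries give $\normtwo{Y}^2\le\norm{e_\Phi}^2\norm{\Delta}^2\le\eta(\Phi)\norm{\Delta}^2$. Now $\norm{\Delta_{t,x,\delta}}$ equals the norm of $\partial_{x,\delta}S^\Phi$ applied to the unit pre-query state, because $W_{>t}$ is unitary. \cref{lem:one-sided-chfs-sum}, with $R$ and the rest of the non-query registers absorbed into the ancilla, then gives $\sum_{x,\delta}\normtwo{Y}^2\le 5\cdot 2^d\eta(\Phi)$.

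The main term $X$ is where the low-rank structure must be used, and it is the key step. Expand $\ket\Omega$ in the basis $\ket a$, and write $\ket{s_a}=W_{>t}(\partial S)W_{<t}\ket a\ket0_B$. Since $\ket{\Xi^{\mathrm{id}}_\Phi}$ is a product with $\ket{\chi_\Phi}_B$, tracing out $B$ contracts $\ket{s_a}$ against $\bra{\chi_\Phi}$. The matrix units $\ketbra ab_R$ are Hilbert--Schmidt orthonormal, and this should yield
\[
\normtwo{X}^2=\frac1{d_A}\sum_a\norm{(\Id\otimes\bra{\chi_\Phi})\ket{s_a}}^2=\frac1{d_A}\normtwo{Q_t(\partial S)P_t}^2 .
\]
Here $P_t=W_{<t}(\Id_A\otimes\proj0_B)W_{<t}^\dagger$ and $Q_t=W_{>t}^\dagger(\Id_A\otimes\proj{\chi_\Phi})W_{>t}$ are projections of rank $d_A$. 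The point of the $1/d_A$ normalization is that it exactly cancels the rank: \cref{lem:two-sided-chfs-sum} with $r=d_A$ gives $\sum_{x,\delta}\normtwo X^2\le 9$, with no dependence on $2^d$.

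Combining the two bounds gives, for each $t$, a sum over $x,\delta$ of at most $8(9+5\cdot2^d\eta)$. Summing over $t$ and multiplying by $q_d$ gives $8q_d^2(9+5\cdot 2^d\eta(\Phi))\le72q_d^2(1+2^d\eta(\Phi))$.

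The delicate point I expect is checking that $Q_t$ is genuinely rank $d_A$ even though $\ket{\chi_\Phi}$ depends on $\Phi$. This causes no problem, because we only need a pointwise bound at fixed $\Phi$ and never differentiate $\chi_\Phi$ itself.
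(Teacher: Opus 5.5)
Your proposal is correct and follows the paper's proof essentially step for step: Cauchy--Schwarz over the $q_d$ query positions, the split $\ket{\Xi_\Phi}=\ket{\Xi^{\mathrm{id}}_\Phi}+\ket{e_\Phi}$ giving the bound $8\normtwo{X}^2+8\normtwo{Y}^2$, the identity $\normtwo{X}^2=\frac1{d_A}\normtwo{Q_t(\partial_{x,\delta}S^\Phi)P_t}^2$ combined with \cref{lem:two-sided-chfs-sum} at $r=d_A$, and \cref{lem:one-sided-chfs-sum} for the error term, yielding $8q_d^2(9+5\cdot2^d\eta(\Phi))$. Your remark that $\chi_\Phi$ never needs to be differentiated, because the decomposition is used only pointwise at a fixed $\Phi$, is also exactly how the paper handles it.
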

\fi

\subsection{CHFS oracle concentration}
\label{subsec:one-layer-resampling}
\label{subsec:simultaneous-large-layer-resampling}

This section presents the concentration phenomena for the output state $\rho_{\Phi}$ using the perspective of $F_d^{\mathcal A}.$
We prove the 1-norm bound instead of the 2-norm bound as the 1-norm is more convenient when discussing the distinguishability.

We first consider the CHFS states at a fixed input length $d$, while
keeping the state families at all other input lengths fixed.
Let $q_d$ be the number of length-$d$ queries made by the algorithm.
All the expectations are over the product Haar measure on the ambient space. In particular, for $\Phi=\Phi_d$, the Haar measure is over $\prod_{x\in\{0,1\}^d}\Sphere(\C^{2^d})$.

\begin{lemma}
\label{thm:one-layer-resampling}
Let $\Phi=\Phi_d$ be sampled from the product Haar measure
and define the expected output $\mu
\defeq
\Exp_\Phi\rho_\Phi$ and the expected error bound $\bar\eta
\defeq
\Exp_\Phi\eta(\Phi)$.
Then, 
for a universal constant $C>0$,
\[
\Exp_\Phi
\normone{\rho_\Phi-\mu}
\leq
Cq_d
\left(
2^{-d/2}
+
\sqrt{\bar\eta}
\right).
\]
\end{lemma}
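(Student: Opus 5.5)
Our plan is to combine the Hilbert-valued product-sphere Poincaré inequality (\cref{thm:hilbert-product-poincare}) with the gradient bound of \cref{lem:gradient-energy-bound}. This yields a Hilbert--Schmidt concentration bound. We then convert it to trace norm using \cref{lem:rank-one-trace-norm-comparison}, exploiting that $\rho_\Phi$ is close to the rank-one projector $P_\Phi$.

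\textbf{Hilbert--Schmidt bound.} Take $\mathcal K=\Herm(A\otimes R)$ with the Hilbert--Schmidt inner product. The function $F_d^{\mathcal A}$ is smooth, being polynomial in the entries of the $\phi_x$. The product sphere has $L=2^d$ factors, each of dimension $D_j=2^d$. Here $\sum_j\|\nabla_jF\|^2$ equals $\|\nabla F_d^{\mathcal A}(\Phi)\|_{\mathrm{HS}}^2$. \cref{thm:hilbert-product-poincare} and \cref{lem:gradient-energy-bound} then give
\[
\Exp_\Phi\normtwo{\rho_\Phi-\mu}^2\leq\frac{72q_d^2\bigl(1+2^d\bar\eta\bigr)}{2^{d+1}-1}\leq 72q_d^2\bigl(2^{-d}+\bar\eta\bigr).
\]
Jensen's inequality then gives $\Exp_\Phi\normtwo{\rho_\Phi-\mu}\leq \sqrt{72}\,q_d(2^{-d/2}+\sqrt{\bar\eta})$.

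\textbf{Passing to trace norm.} This is the main obstacle. The dimension $d_A^2$ is huge, so the generic conversion $\normone{X}\leq\sqrt{\mathrm{rank}}\normtwo{X}$ is useless, and $\mu$ itself need not be close to pure. I would avoid comparing $\rho_\Phi$ to $\mu$ directly. Instead, compare $\rho_\Phi$ with $\rho_{\Phi'}$ for an independent copy $\Phi'$.

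First, since $\mu=\Exp_{\Phi'}\rho_{\Phi'}$, convexity of the norm gives
\[
\Exp_\Phi\normone{\rho_\Phi-\mu}\leq\Exp_{\Phi,\Phi'}\normone{\rho_\Phi-\rho_{\Phi'}}.
\]
Second, both $\rho_\Phi$ and $\rho_{\Phi'}$ lie within trace distance $\eta(\Phi)$ and $\eta(\Phi')$ of the rank-one projectors $P_\Phi$ and $P_{\Phi'}$. So \cref{lem:rank-one-trace-norm-comparison} bounds $\normone{\rho_\Phi-\rho_{\Phi'}}$ by
\[
(1+\sqrt2)\bigl(\eta(\Phi)+\eta(\Phi')\bigr)+\sqrt2\normtwo{\rho_\Phi-\rho_{\Phi'}}.
\]
Third, the triangle inequality gives $\normtwo{\rho_\Phi-\rho_{\Phi'}}\leq\normtwo{\rho_\Phi-\mu}+\normtwo{\rho_{\Phi'}-\mu}$. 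Taking expectations, we obtain
\[
\Exp_\Phi\normone{\rho_\Phi-\mu}\leq 2(1+\sqrt2)\bar\eta+2\sqrt2\,\Exp_\Phi\normtwo{\rho_\Phi-\mu}.
\]

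\textbf{Absorbing the $\bar\eta$ term.} Since $\eta\leq 2$, we have $\bar\eta\leq\sqrt2\sqrt{\bar\eta}$. Also $q_d\geq1$ whenever $q_d>0$. If $q_d=0$, then $\rho_\Phi$ is independent of $\Phi$ and the bound is trivially $0$. So the $\bar\eta$ term is at most a constant times $q_d\sqrt{\bar\eta}$. Combining with the Hilbert--Schmidt estimate yields the claim with a universal constant $C$; a value such as $C=40$ suffices.
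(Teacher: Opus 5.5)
Your proposal is correct and follows essentially the same route as the paper. Both arguments apply Poincar\'e together with the gradient bound to get a Hilbert--Schmidt estimate, pass to trace norm via an independent copy $\Phi'$, \cref{lem:rank-one-trace-norm-comparison}, and convexity, and absorb the $\bar\eta$ term using $\bar\eta\le\sqrt{2\bar\eta}$. The only cosmetic difference is that the paper bounds $\Exp\normtwo{\rho_\Phi-\rho_{\Phi'}}$ through the identity $\Exp\normtwo{\rho_\Phi-\rho_{\Phi'}}^2=2\Exp\normtwo{\rho_\Phi-\mu}^2$ and Jensen, while you use the triangle inequality through $\mu$, which costs only a factor $\sqrt2$ in the constant.
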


\begin{proof}
If $q_d=0$, then $\rho_\Phi$ is independent of $\Phi$, and the claim is
immediate. Assume that $q_d\geq1$.

Let $\Phi'$ be an independent copy of $\Phi$, i.e., a Haar random element from the product space.
By independence,
\[
\Exp_{\Phi,\Phi'}
\normtwo{\rho_\Phi-\rho_{\Phi'}}^2
=
2\Exp_\Phi
\normtwo{\rho_\Phi-\mu}^2
\le 
\frac{2}{2^{d+1}-1}
\Exp_\Phi
\sum_{x,\delta}
\normtwo{
\partial_{x,\delta}\rho_\Phi
}^2,
\]
where we use the Poincar\'e inequality in
\cref{thm:hilbert-product-poincare}.
By \cref{lem:gradient-energy-bound},
\[
\Exp_{\Phi,\Phi'}
\normtwo{\rho_\Phi-\rho_{\Phi'}}^2
\leq
\frac{144q_d^2}{2^{d+1}-1}
\left(
1+2^d\bar\eta
\right)
\leq
144q_d^2
\left(
2^{-d}
+
\bar\eta
\right).
\]
Jensen's inequality gives
\[
\Exp_{\Phi,\Phi'}
\normtwo{\rho_\Phi-\rho_{\Phi'}}
\leq
12q_d
\left(
2^{-d/2}
+
\sqrt{\bar\eta}
\right).
\]

Now we turn this bound into the 1-norm bound.
Applying \cref{lem:rank-one-trace-norm-comparison} to
$\rho_\Phi,\rho_{\Phi'}$ and then averaging gives
\begin{align}
\Exp_{\Phi,\Phi'}
\normone{\rho_\Phi-\rho_{\Phi'}}
&\leq
(1+\sqrt2)
\Exp_{\Phi,\Phi'}
\bigl(
\eta(\Phi)+\eta(\Phi')
\bigr)
+
\sqrt2
\Exp_{\Phi,\Phi'}
\normtwo{\rho_\Phi-\rho_{\Phi'}}
\nonumber\\
&=
2(1+\sqrt2)\bar\eta
+
\sqrt2
\Exp_{\Phi,\Phi'}
\normtwo{\rho_\Phi-\rho_{\Phi'}}
\nonumber\\
&\leq
Cq_d
\left(
2^{-d/2}
+
\sqrt{\bar\eta}
\right).
\label{eq:one-layer-pairwise-trace-bound}
\end{align}
Here $0\leq\bar\eta\leq2$, so
$\bar\eta\leq\sqrt{2\bar\eta}$, and $q_d\geq1$.
Finally, for fixed $\Phi$, convexity of the trace norm gives
\[
\normone{\rho_\Phi-\mu}
=
\normone{
\Exp_{\Phi'}
\bigl(
\rho_\Phi-\rho_{\Phi'}
\bigr)
}
\leq
\Exp_{\Phi'}
\normone{\rho_\Phi-\rho_{\Phi'}}.
\]
Averaging over $\Phi$ and applying
\cref{eq:one-layer-pairwise-trace-bound} proves the claim.
\end{proof}

\begin{lemma}
\label{lem:simultaneous-large-layer-resampling}
Fix an integer $\tau\geq0$ and the CHFS state families
$\Phi_{\leq\tau}$. For each $d>\tau$, sample $\Phi_d$
independently from its product Haar measure, and let
$\Phi=(\Phi_{\leq\tau},\Phi_{>\tau})$ denote the full family
of CHFS states.
Let $\rho_\Phi$ and $\eta(\Phi)$ be the output state and its
distance from unitary Choi states defined above.
Let $q_{>\tau}\defeq\sum_{d>\tau}q_d$ and
$\mu\defeq\Exp_{\Phi_{>\tau}}\rho_\Phi$.
Then, for a universal constant $C>0$,
\[
\Exp_{\Phi_{>\tau}}\normone{\rho_\Phi-\mu}
\leq Cq_{>\tau}\left(
2^{-\tau/2}
+\sqrt{\Exp_{\Phi_{>\tau}}\eta(\Phi)}
\right).
\]
\end{lemma}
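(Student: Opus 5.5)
The plan is to treat all large input lengths at once as a single product of spheres and rerun the argument of \cref{thm:one-layer-resampling}, rather than resampling the layers one at a time. Fix $\Phi_{\leq\tau}$. Since the algorithm is polynomial time, only finitely many lengths $d>\tau$ are queried; call them $d_1,\ldots,d_m$, each with $q_{d_i}\geq1$. The families $\Phi_{d}$ at unqueried lengths do not affect $\rho_\Phi$ and can be ignored. The map $\Phi_{>\tau}\mapsto\rho_\Phi$ is then a $C^1$ function on the product manifold $\mathcal M=\prod_{i}\prod_{x\in\{0,1\}^{d_i}}\Sphere(\C^{2^{d_i}})$, with values in $\Herm(A\otimes R)$.

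\textbf{Step 1 (Poincar\'e).} Apply \cref{thm:hilbert-product-poincare} to this function. Every sphere factor at length $d_i$ has $2D-1=2^{d_i+1}-1\geq 2^{\tau+1}$, because $d_i\geq\tau+1$. This gives
\[
\Exp\normtwo{\rho_\Phi-\mu}^2\leq 2^{-(\tau+1)}\sum_{i}\Exp\sum_{x,\delta}\normtwo{\partial_{x,\delta}^{(d_i)}\rho_\Phi}^2 ,
\]
where $\partial^{(d_i)}_{x,\delta}$ differentiates with respect to the length-$d_i$ states.

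\textbf{Step 2 (per-layer gradients).} For each $i$, hold all other families fixed, absorb them into the inter-query unitaries as in \cref{eq:one-layer-query-comb}, and apply \cref{lem:gradient-energy-bound}. This bounds the layer-$d_i$ gradient energy by $72q_{d_i}^2(1+2^{d_i}\eta(\Phi))$. The quantity $\eta(\Phi)$ is defined from the full output $\rho_\Phi$, so it is the same number in every layer. Dividing by the Poincar\'e constant $2^{d_i+1}-1\geq 2^{d_i}$ turns each layer's contribution into at most $72q_{d_i}^2(2^{-d_i}+\eta(\Phi))\leq 72q_{d_i}^2(2^{-\tau}+\eta(\Phi))$.

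This step needs care because the crude bound in Step 1, which uses the worst constant $2^{-(\tau+1)}$, would leave a factor $2^{d_i-\tau}$ multiplying $\eta$. The fix is to keep each factor's own constant $1/(2^{d_i+1}-1)$, as \cref{thm:hilbert-product-poincare} allows, before bounding. The $2^{d_i}$ in front of $\eta$ then cancels exactly, as it did in \cref{thm:one-layer-resampling}. Summing over $i$ and using $\sum_i q_{d_i}^2\leq q_{>\tau}^2$ gives
\[
\Exp_{\Phi_{>\tau}}\normtwo{\rho_\Phi-\mu}^2\leq 72q_{>\tau}^2\bigl(2^{-\tau}+\bar\eta\bigr),\qquad \bar\eta\defeq\Exp_{\Phi_{>\tau}}\eta(\Phi).
\]

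\textbf{Step 3 (1-norm).} Conclude exactly as in \cref{thm:one-layer-resampling}:
\begin{itemize}[label={}]
\item Take an independent copy $\Phi'_{>\tau}$, keeping $\Phi_{\leq\tau}$ fixed. The squared HS distance $\Exp\normtwo{\rho_\Phi-\rho_{\Phi'}}^2=2\Exp\normtwo{\rho_\Phi-\mu}^2$, and Jensen gives $\Exp\normtwo{\rho_\Phi-\rho_{\Phi'}}\leq 12q_{>\tau}(2^{-\tau/2}+\sqrt{\bar\eta})$.
\item Apply \cref{lem:rank-one-trace-norm-comparison} with the minimizers $P_\Phi,P_{\Phi'}$. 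This produces an additive $2(1+\sqrt2)\bar\eta$, which is at most a constant times $\sqrt{\bar\eta}$ since $\bar\eta\leq 2$ and $q_{>\tau}\geq1$.
\item Finish with convexity: $\normone{\rho_\Phi-\mu}\leq\Exp_{\Phi'}\normone{\rho_\Phi-\rho_{\Phi'}}$.
\end{itemize}
If $q_{>\tau}=0$, the output does not depend on $\Phi_{>\tau}$ and the claim is trivial.

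The main obstacle is the cancellation in Step 2. Everything else is bookkeeping: the finiteness of the set of queried lengths, and smoothness on the enlarged product space, which holds because $W_\Phi$ is a finite product of matrices depending smoothly on the states.
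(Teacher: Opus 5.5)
Your proposal is correct, but it takes a different route from the paper's proof.

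\textbf{The paper's route.} The paper never applies the Poincar\'e inequality on the combined space of all large lengths. It runs a hybrid over input lengths instead:
\begin{itemize}
\item It builds families $\Phi^{(j)}$ that replace $\Phi_d$ by an independent copy $\Phi'_d$, one length $d$ at a time.
\item To each adjacent pair it applies the one-layer pairwise trace-norm bound \cref{eq:one-layer-pairwise-trace-bound}, conditioned on the common families.
\item It averages using Jensen's inequality. This relies on every hybrid being distributed like $\Phi$, so that its $\eta$ averages to $\Exp_{\Phi_{>\tau}}\eta(\Phi)$.
\item It sums the layers by the triangle inequality, which gives $\sum_{d>\tau}Cq_d\bigl(2^{-d/2}+\sqrt{\bar\eta}\bigr)$.
\end{itemize}

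\textbf{Your route.} You apply \cref{thm:hilbert-product-poincare} once over all large-length spheres, keeping each factor's own constant $1/(2^{d+1}-1)$. This is exactly what makes the $2^{d}\eta(\Phi)$ term of \cref{lem:gradient-energy-bound} cancel layer by layer. You then convert from the Hilbert--Schmidt norm to the trace norm only once, so the additive $\bar\eta$ from \cref{lem:rank-one-trace-norm-comparison} appears once rather than once per layer.

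\textbf{Trade-offs.} The paper's version reuses the one-layer lemma as a black box, so it never needs the Poincar\'e step over spheres of mixed dimensions. Yours avoids the conditioning bookkeeping of the hybrid. It also gives a slightly stronger Hilbert--Schmidt estimate, with $\bigl(\sum_{d>\tau}q_d^2\bigr)^{1/2}$ in place of $q_{>\tau}$, although only $q_{>\tau}$ is needed for the statement.
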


\begin{proof}
The claim is immediate if $q_{>\tau}=0$.
Otherwise, let $d_{\max}$ be the largest input length queried
by $\mathcal A$.
Let $\Phi'=(\Phi_{\leq\tau},\Phi'_{>\tau})$, where
$\Phi'_{>\tau}$ is an independent copy of $\Phi_{>\tau}$.

For $\tau\leq j\leq d_{\max}$, let $\Phi^{(j)}$ be the family
obtained from $\Phi$ by replacing $\Phi_d$ with $\Phi'_d$
for every $\tau<d\leq j$.
Thus $\rho_{\Phi^{(\tau)}}=\rho_\Phi$ and
$\rho_{\Phi^{(d_{\max})}}=\rho_{\Phi'}$, since $\mathcal A$
makes no queries of input length greater than $d_{\max}$.
Every $\Phi^{(j)}$ has the same distribution as $\Phi$,
with $\Phi_{\leq\tau}$ fixed.

For each $\tau<d\leq d_{\max}$, the families
$\Phi^{(d-1)}$ and $\Phi^{(d)}$ differ only at input length $d$.
Condition on their common state families and apply
\cref{eq:one-layer-pairwise-trace-bound} to the independent
pair $\Phi_d,\Phi'_d$.
Averaging over the common state families and applying
Jensen's inequality gives
\[
\Exp_{\Phi,\Phi'}
\normone{\rho_{\Phi^{(d-1)}}-\rho_{\Phi^{(d)}}}
\leq Cq_d\left(
2^{-d/2}
+\sqrt{\Exp_{\Phi_{>\tau}}\eta(\Phi)}
\right).
\]

Finally, convexity of the trace norm and the triangle
inequality give
\[
\Exp_{\Phi_{>\tau}}\normone{\rho_\Phi-\mu}
\leq
\Exp_{\Phi,\Phi'}\normone{\rho_\Phi-\rho_{\Phi'}}
\leq
\sum_{\tau<d\leq d_{\max}}
\Exp_{\Phi,\Phi'}
\normone{\rho_{\Phi^{(d-1)}}-\rho_{\Phi^{(d)}}}.
\]
Applying the preceding bound and using
$2^{-d/2}\leq2^{-\tau/2}$ and
$\sum_{\tau<d\leq d_{\max}}q_d=q_{>\tau}$
proves the claim.
\end{proof}

\subsection{CHFS oracle simulation}\label{subsec:CHFS-simulation}
Finally, we recall the simulation of the CHFS algorithm using phase-invariant designs \cite{AE07,Kre21,BEHMV26}.
A distribution $\mathcal D$ on the unit sphere of $\C^D$ is a
\emph{phase-invariant $\delta$-approximate $t$-design} if
\[
(1-\delta)\EE_{\psi\gets\Haar}(\proj\psi)^{\otimes s}
\preceq
\EE_{\psi\gets\mathcal D}(\proj\psi)^{\otimes s}
\preceq
(1+\delta)\EE_{\psi\gets\Haar}(\proj\psi)^{\otimes s},
\]
for $0\leq s\leq t$,
and, for $0\leq i,j\leq t$ with $i\neq j$, $
\EE_{\psi\gets\mathcal D}
\ket\psi^{\otimes i}\bra\psi^{\otimes j}=0.$
The zeroth tensor power is the scalar $1$.
The construction of such a phase-invariant approximate design is given in \cite{BEHMV26} using the standard design from \cite{BHH16}.

\begin{lemma}[Phase-invariant designs]
\label{lem:succinct-phase-invariant-designs}
There is a deterministic classical algorithm, in time $\poly(n,t,\log(1/\delta))$, that, given
$n,t\in\N$, $0<\delta<1$, and a seed $s$ of length
$\poly(n,t,\log(1/\delta))$, outputs a classical description
of an $n$-qubit unitary circuit $U_s$ of size
$\poly(n,t,\log(1/\delta))$.
For uniformly random $s$, the states
$\ket{\psi_s}=U_s\ket{0^n}$ form a phase-invariant
$\delta$-approximate $t$-design.
\end{lemma}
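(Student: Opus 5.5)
The plan is to build $U_s$ in two stages: an approximate unitary design from \cite{BHH16} applied to $\ket{0^n}$, followed by a random global phase drawn from a finite group of roots of unity. The design stage gives the moment sandwich, and the phase stage gives the vanishing of the off-diagonal moments. Concretely, the seed is $s=(s_1,k)$, where $s_1$ selects a circuit $C_{s_1}$ and $k\in\mathbb Z_M$ with $M=2^{\lceil\log_2(t+1)\rceil}\geq t+1$. We set
\[
U_s\defeq C_{s_1}\,Z_k,
\]
where $Z_k$ is the diagonal unitary that multiplies $\ket{0^n}$ by $\omega^k$, with $\omega=e^{2\pi i/M}$, and fixes every other basis state. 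The gate $Z_k$ is a multi-controlled phase gate with $O(n)$ gates after $X$-conjugation. Hence $\ket{\psi_s}=\omega^k C_{s_1}\ket{0^n}$.

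The phase-invariance condition then follows at once. For $i\neq j$ in $\{0,\ldots,t\}$ we have $\ket{\psi_s}^{\otimes i}\bra{\psi_s}^{\otimes j}=\omega^{k(i-j)}(\cdots)$. Since $0<\abs{i-j}\leq t<M$, averaging over uniform $k$ gives $\EE_k\omega^{k(i-j)}=0$. The phase cancels in $(\proj{\psi_s})^{\otimes s}$, so the diagonal moments are exactly those of the states $C_{s_1}\ket{0^n}$.

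For the diagonal moments, I would take the relative-error (completely positive order) approximate unitary $t$-designs of \cite{BHH16}. These are local random circuits of depth $\poly(n,t)\log(1/\delta')$, and they satisfy
\[
(1-\delta')\Phi_{\Haar}^{(t)}\preceq\Phi_{C}^{(t)}\preceq(1+\delta')\Phi_{\Haar}^{(t)}
\]
in the completely positive order. Applying both sides to $(\proj{0^n})^{\otimes t}$ yields the state-design sandwich at $s=t$. For $s<t$, we take the partial trace of $t-s$ copies, which preserves $\preceq$ and maps the Haar $t$-th moment to the Haar $s$-th moment. The case $s=0$ is trivial.

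The main obstacle is that \cite{BHH16} uses continuous Haar-random two-qubit gates, whereas we need a finite seed and a deterministic classical generator. I would first make the state-level statement additive: the Haar moment $\EE(\proj\psi)^{\otimes s}$ equals $\Pi_{\mathrm{sym}}/\binom{2^n+s-1}{s}$, and its minimal nonzero eigenvalue is $\binom{2^n+s-1}{s}^{-1}\geq 2^{-nt}$. The moments of any ensemble of pure states are supported on the symmetric subspace. Therefore an operator-norm error of at most $\delta 2^{-nt}/2$ in the $s$-th moment converts into the multiplicative bound with error $\delta/2$. Next, I would replace each Haar two-qubit gate by a gate sampled from an exact finite two-qubit unitary $t$-design, or else from a fine net implemented to precision $\epsilon=\delta 2^{-nt}/\poly(n,t)$. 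The net option costs $\log(1/\epsilon)=\poly(n,t,\log(1/\delta))$ seed bits per gate and, via Solovay--Kitaev, $\poly\log(1/\epsilon)$ gates per gate.

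The resulting total perturbation of the $t$-th moment is at most (number of gates)$\cdot t\epsilon$, which is within the additive budget. Combining this with the \cite{BHH16} guarantee at $\delta'=\delta/2$ gives the phase-invariant $\delta$-approximate $t$-design. The seed length, circuit size, and generation time are all $\poly(n,t,\log(1/\delta))$, since each step is an explicit deterministic map from the seed bits to gates.
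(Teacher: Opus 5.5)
The paper does not prove this lemma itself. It only says the construction is given in \cite{BEHMV26}, built on the designs of \cite{BHH16}. Your route fills in that citation along the expected lines, and its outline is correct. The completely-positive-order sandwich of \cite{BHH16} applied to $(\proj{0^n})^{\otimes t}$, then pushed to $s<t$ by partial trace, is sound. So is the independent uniform $M$-th root of unity with $t<M$ a power of two: the off-diagonal moments vanish exactly because $k$ is independent of $s_1$. The additive-to-multiplicative conversion, via support on the symmetric subspace and $\binom{2^n+s-1}{s}\leq 2^{nt}$, is also sound.

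The step you should tighten is the discretization, because neither option as stated gives the per-gate coupling that your bound of (number of gates)$\cdot t\epsilon$ implicitly uses.
\begin{itemize}
\item A gate drawn uniformly from an $\epsilon$-net of $\mathsf U(4)$ need not have moments close to Haar.
\item The exact-design option needs an explicit, efficiently computable, equal-weight family with exactly describable elements, which you do not supply.
\end{itemize}
What you need is a deterministic map from uniform bits to two-qubit gates whose output is, except with probability $\epsilon$, within $\epsilon$ in operator norm of a genuinely Haar-distributed gate. Finite-precision inverse-CDF sampling in an Euler-angle parametrization would do, as would discretized Gaussian entries followed by Gram--Schmidt; then compile with Solovay--Kitaev. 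The bad events add only a trace-norm error to each moment. So does the inexact uniform choice of gate positions when their number is not a power of two. Both fit in the same budget $\delta 2^{-nt}/2$ once $\epsilon=\delta 2^{-nt}/\poly(n,t,\log(1/\delta))$; note the $\log(1/\delta)$ dependence, since the circuit length itself grows with $\log(1/\delta)$, so your $\poly(n,t)$ is too small.

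Finally, keep $Z_k$ exact, using dyadic phase gates. Compiling the gates of $C_{s_1}$ only up to global phase is harmless, because $k$ is independent of $s_1$. A $k$-dependent phase error in $Z_k$, however, would destroy the exact vanishing of $\EE\ket{\psi}^{\otimes i}\bra{\psi}^{\otimes j}$ for $i\neq j$.
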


We use the following quantum-output version of \cite{BEHMV26}, which can be proven in essentially the same way.
\begin{lemma}[CHFS design simulation]
\label{lem:chfs-design-simulation}
Consider a $q$-query oracle algorithm $A$ given a CHFS oracle $S^\Phi$ for $\Phi = (\ket{\phi_x})_{x \in \bit^*}$ where the states in $\Phi$ are sampled independently from Haar measure.
Let $\Psi = \{\ket{\psi_x}\}_{x \in \bit^*}$ where each state $\ket{\psi_x}$ is independently sampled from a distribution of phase-invariant $\delta$-approximate $2q$-design states of the same length as $\ket{\phi_x}$.
Suppose that $A$ makes queries of input length at most $L$.
Let $\rho_{\Phi}$ and $\rho_{\Psi}$ be the output states of the algorithm before the measurement given oracle $S^\Phi$ and $S^\Psi$, respectively.
Then, there are universal constants $K>0$ and $C>1$ such that
\[
\normone{\EE_\Phi\rho_\Phi-\EE_\Psi\rho_\Psi}
\leq KqC^{q(L+1)}\delta.
\]
The same bound holds when the design seeds are $2q$-wise independent.
The output may include a reference register.
The same conclusion holds if the oracle states at any chosen
input lengths are fixed in both computations, with expectations
taken over the remaining states.
\end{lemma}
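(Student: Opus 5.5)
The plan is the standard hybrid argument of \cite{BEHMV26}, carried out for quantum outputs; the main work is to track the design error per query in multiplicative (operator-order) form. First I purify the algorithm so that its final state is $\ket{\Xi_\Phi}=(W_\Phi\otimes\Id_R)\ket{\xi_0}$, with $W_\Phi$ as in \cref{eq:oracle-algorithm-query-comb}. I then expand each oracle call $S_d^\Phi=\sum_x\proj x\otimes S_{\phi_x}$ using \cref{eq:one-label-chfs-unitary}. Each $S_{\phi_x}$ is an affine combination of the operators $\Id$, $\proj{0^{d+1}}$, $\proj{\phi_x,1}$, $\ketbra{0^{d+1}}{\phi_x,1}$ and $\ketbra{\phi_x,1}{0^{d+1}}$. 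Multiplying out $W_\Phi\proj{\xi_0}W_\Phi^\dagger$ therefore writes $\proj{\Xi_\Phi}$ as a sum of terms. In each term, for each label $x$, the dependence on $\phi_x$ is through a tensor $\ket{\phi_x}^{\otimes i}\bra{\phi_x}^{\otimes j}$ with $i,j\le 2q$, contracted against $\Phi$-independent operators.

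Second, the off-diagonal terms. Whenever $i\neq j$, the expectation vanishes both under Haar (by phase invariance) and under the phase-invariant design (by definition). So only terms with $i=j=s\le 2q$ survive, and for these the per-label averages are $\EE(\proj{\phi_x})^{\otimes s}$. The design condition gives $(1-\delta)H_s\preceq D_s\preceq(1+\delta)H_s$, where $H_s$ and $D_s$ denote the Haar and design averages. I will replace the states one label at a time: a hybrid over the (at most $q$, since only queried labels matter) labels $x$ that actually appear, swapping the distribution of $\phi_x$ from Haar to the design while all other labels are held fixed. Independence of distinct labels makes this valid; if only $2q$-wise independence of the seeds is available, it suffices as well, because at most $2q$ labels interact with any surviving term. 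The same observation covers the final clause of the statement: labels fixed in both computations are simply never hybridized.

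The main obstacle is turning the operator inequality $D_s-H_s=\Delta_s$ with $-\delta H_s\preceq\Delta_s\preceq\delta H_s$ into a trace-norm bound on the output, since the expansion has many terms and naive term-by-term bounds blow up. I would follow \cite{BEHMV26}. Group the terms by $s$ and write the error operator for one hybrid step as a contraction of $\Delta_s$ (with $\|\Delta_s\|_1\le 2\delta\|H_s\|_1$ in the appropriate sense) against a fixed multilinear map whose coefficients come from the expansion. That map has at most $C^{q(L+1)}$ terms, because there are $O(1)$ choices per query and a register of dimension $2^{O(L)}$ determines which block is hit. Each term has operator norm at most $1$, being built from projectors and partial isometries, so contracting against a positive tensor bounded by $\delta H_s$ contributes at most $\delta$ in trace norm. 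Since the reference register $R$ is an untouched tensor factor, the bound is unaffected by it, and the partial trace over the ancilla only contracts norms. Summing over the at most $q$ hybrid steps gives $KqC^{q(L+1)}\delta$. If this term-counting fails to produce a clean constant, I would instead absorb the counting into the $C^{q(L+1)}$ factor by bounding the number of $(i,j)$-patterns by $5^{q}$ and the dimension factors by $2^{O(qL)}$, which matches the stated form.
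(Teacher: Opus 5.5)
\textbf{The gap is at the step you flag as the main obstacle.} Fix an expansion pattern $\tau$. The corresponding term is $T_\tau(\phi)=\mathcal L_\tau(\proj{\phi}^{\otimes s})$ for some linear map $\mathcal L_\tau$, and what you must bound is $\normone{\mathcal L_\tau(\Delta_s)}$.

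That every building block has operator norm at most $1$ only gives $\normone{\mathcal L_\tau(\proj{\phi}^{\otimes s})}\le 1$ on product inputs. By convexity it also gives the bound on nonnegative mixtures of product inputs. It gives nothing more.

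$\Delta_s$ is indefinite, not a positive tensor. Take the natural splitting $\Delta_s=[D_s-(1-\delta)H_s]-\delta H_s$. The first piece is PSD on $\mathrm{Sym}^s$, but for $s\ge 2$ it need not be a nonnegative mixture of $\proj{\phi}^{\otimes s}$ with weight $O(\delta)$.

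On such non-product inputs, operator-norm bounds do not control $\mathcal L_\tau$. Every $\bra{\phi_x}$ in the ket vector $W_\Phi\ket{\xi_0}$ (coming from $\ketbra{0^{d+1}}{\phi_x,1}$ or $\proj{\phi_x,1}$), and every $\ket{\phi_x}$ in the bra vector, forces $\mathcal L_\tau$ to move a bra leg of $\proj{\phi}^{\otimes s}$ to the ket side, or the reverse. So $\mathcal L_\tau$ acts by partial transposes and realignments, whose trace-norm growth can depend on dimension, and your argument does not control it.

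Hence ``contributes at most $\delta$'' does not follow. Your fallback (``dimension factors $2^{O(qL)}$'') names the right size of loss but gives no mechanism that produces it.

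\textbf{The missing idea is the one the paper uses, applied before any expansion.} Kretschmer's Choi-state simulation \cite[Lemmas~23--24]{Kre21} replaces each query by postselected teleportation through a copy of $J(\mathcal S^\Phi)$. Here $\mathcal S^\Phi$ is the channel of the direct sum of all controlled CHFS gates of length at most $L$, with dimension $D=2^{O(L+1)}$.

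This writes any acceptance probability as $D^{2q}$ times a linear functional of $J(\mathcal S^\Phi)^{\otimes q}$. The functional depends on the algorithm but not on $\Phi$ and is bounded on Hermitian operators by the trace norm. Appending an arbitrary measurement and taking the supremum gives
$\normone{\EE_\Phi\rho_\Phi-\EE_\Psi\rho_\Psi}\le O(D^{2q})\,\normone{\EE_\Phi J(\mathcal S^\Phi)^{\otimes q}-\EE_\Psi J(\mathcal S^\Psi)^{\otimes q}}$.
The remaining comparison involves only the explicit Choi state of the oracle, and that is where the design property enters. This step makes the dimension loss explicit and yields the $C^{q(L+1)}$ factor.

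\textbf{Two smaller points.}
\begin{enumerate}
\item Queries are in superposition over $x$, so every label of length at most $L$ can be hit by every query. Your label hybrid therefore has up to $2^{L+1}$ steps, not $q$. This can be absorbed into $C^{q(L+1)}$, but your justification of the factor $q$ is wrong.
\item With only $2q$-wise independent seeds, you cannot condition on the other labels inside a hybrid step. The correct observation is that $\EE_\Psi\rho_\Psi$ depends only on the $2q$-wise marginals, because each monomial of the expansion involves at most $2q$ labels.
\end{enumerate}
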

\begin{proof}[Proof sketch]
We follow the proof of \cite[Lemma~7]{BEHMV26} that gives
the advantage bound, keeping the query count $q$ and the
maximum input length $L$ separate.

In that proof, the advantage is bounded by
\[
\left|
\Pr_\Phi[A^{S^\Phi}\to1]
-\Pr_\Psi[A^{S^\Psi}\to1]
\right|
\leq O\left(
D^{2q}
\normone{
\EE_\Phi J(\mathcal S^\Phi)^{\otimes q}
-\EE_\Psi J(\mathcal S^\Psi)^{\otimes q}
}
\right),
\]
where $\mathcal S^\Phi,\mathcal S^\Psi$ are the channels
induced by the finite direct sums of the full controlled
CHFS gates at input lengths at most $L$, and $D$ is their
dimension.
This uses the argument of \cite[Lemmas~23 and~24]{Kre21},
which simulates the queries using Choi states at the cost
of a factor $D^{2q}$.

For the output states, append an arbitrary two-outcome
measurement to $A$ and apply the same bound.
Taking the supremum over measurements gives
\[
\normone{\EE_\Phi\rho_\Phi-\EE_\Psi\rho_\Psi}\leq O\left(
D^{2q}
\normone{
\EE_\Phi J(\mathcal S^\Phi)^{\otimes q}
-\EE_\Psi J(\mathcal S^\Psi)^{\otimes q}
}
\right),
\]
with the factor of two absorbed into the implicit constant.
Since $D=2^{O(L+1)}$ and the number of oracle labels is at most
$2^{L+1}$, the moment bound in that proof gives
$O(qC^{q(L+1)}\delta)$ for a universal constant $C>1$.
Each term in the moment expansion involves at most $2q$
distinct inputs, so $2q$-wise independent design seeds suffice.
\end{proof}
\section{Oracle Separation of PRUs from PRFSGs}
\label{sec:separation}

This section proves our main theorem.
\begin{theorem}
\label{thm:prfs-pru-separation}
There is a unitary oracle relative to which adaptively secure,
quantum-accessible PRFSGs exist, whereas PRUs do not.
\end{theorem}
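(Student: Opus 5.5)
The oracle will be $\mathcal O=(S^\Phi,\QPSPACE)$ with $\Phi$ sampled from the product Haar measure. The proof has two parts: PRFSGs exist relative to $\mathcal O$, and every candidate PRU relative to $\mathcal O$ is broken by an efficient non-adaptive, forward-only adversary.

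\textbf{PRFSG existence.} I would take $\Gen_k(x)$ to query $S^\Phi_{d}$ on input $k\|x$, with $d=\kappa+m$ and $\kappa=m=\lambda$, applied to $\ket{k\|x}\ket{0^{d+1}}$. This yields $\ket{\phi_{k\|x},1}$, and the last qubit is discarded. Security would follow the standard CHFS argument of \cite{TCC:AnaGulLin24,BEHMV26}. First, the adversary's queries hit the hidden prefix $k$ with negligible total weight, so replacing the $k$-block of oracle states by independent Haar states is invisible; this is a hybrid/one-way-to-hiding argument over the swap oracle. Second, the $\QPSPACE$ part cannot help, since it is independent of $\Phi$. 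Finally, Haar concentration (\cref{cor:haar-concentration}) together with a union bound over a net fixes a single oracle for which this holds for all adversaries, via Borel--Cantelli over $\lambda$. I would import this part largely from prior work.

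\textbf{Breaking PRUs.} Fix a candidate $\mathcal G^{S^\Phi}$ making $q=\poly(\lambda)$ queries, and set the threshold $\tau=c\log\lambda$ so that $q2^{-\tau/2}$ is at most a small constant. The adversary receives $2$ copies of the Choi state $J(\mathcal E)$ of the oracle channel, plus $T$ further copies. It then runs two tests.

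\emph{Purity test.} By \cref{lem:swap-test} and \cref{lem:purity-to-unitary-choi}, if the expected Choi error $\bar\eta$ of $\mathcal G_k$ is noticeable, the test rejects with noticeable probability; a Haar unitary never fails. So I may assume $\bar\eta$ is small.

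\emph{Key-search test.} Otherwise, the adversary learns the short-length oracle $\Phi_{\le\tau}$ by process tomography (\cref{thm:process-tomography}) on $S^\Phi_d$ for $d\le\tau$, at cost $\poly(2^{\tau})=\poly(\lambda)$. By \cref{lem:simultaneous-large-layer-resampling}, $J(\mathcal G_k^{S^\Phi})$ is close in expectation to $\mu_k=\Exp_{\Phi_{>\tau}}\rho_\Phi$. By \cref{lem:chfs-design-simulation}, $\mu_k$ in turn is close to the state obtained by replacing $\Phi_{>\tau}$ with $2q$-wise independent phase-invariant designs (\cref{lem:succinct-phase-invariant-designs}), with seeds drawn from a $2q$-wise independent function. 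The resulting simulation is efficient except that it averages over seeds. To avoid the average, I would use a random seed function inside a $\QPSPACE$ call, or equivalently compute a purified mixture by letting $\QPSPACE$ prepare it. The adversary then runs the quantum OR test (\cref{lem:quantum-or}, \cref{cor:qpspace-quantum-or}) over keys $k$. Each measurement $M_k$ is a swap test between a copy of the challenge Choi state and the simulated $\mu_k$, amplified over $T$ copies by a threshold on the averaged swap outcomes, using \cref{lem:bernoulli-chernoff}.

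\emph{Analysis.} In the pseudorandom case, with the true key $k$, the swap tests pass with probability close to $(1+\Tr(\rho^2))/2\approx 1$ on average over $\Phi$. Markov's inequality over $\Phi$ and $k$ converts the expectation bounds into constant-probability statements. In the Haar case, $\Tr(J(\mathcal U)\mu_k)$ is exponentially small in expectation for every fixed $k$ because $\mu_k$ is fixed, so each $M_k$ accepts with probability at most $2^{-\Omega(T)}$. The OR test then accepts with probability at most $4\cdot 2^{\lambda}2^{-\Omega(T)}$, which is negligible for $T=\poly(\lambda)$ large enough.

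Finally, passing from a random oracle to a fixed oracle requires a Borel--Cantelli argument over candidates and $\lambda$.

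\textbf{Main obstacle.} I expect the key-search step to be the bottleneck. It must hold uniformly while fixing the tomographic estimate of $\Phi_{\le\tau}$, with tomography errors measured in diamond norm, and the gap must survive Markov's inequality over both $\Phi$ and $k$. The alternative, working directly in trace distance, would also need care. I would first try to argue in expectation over $(k,\Phi)$ and only at the end fix the oracle.
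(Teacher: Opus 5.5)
Your overall architecture matches the paper's: the same oracle $(S^\Phi,\QPSPACE)$, a purity test, tomography of the layers $d\le\tau$, a design-based simulation of $\mu_k$, and a quantum OR over keys with amplified swap tests. The gap is in passing from averages over the oracle to one fixed oracle, and the plan you sketch for that step cannot work.

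Relative to a fixed $\Phi$, the purity test rejects with probability $\frac12\EE_k[1-\Tr(\rho_{k,\Phi}^2)]$ for that particular $\Phi$. By contrast, \cref{lem:simultaneous-large-layer-resampling} needs the average $\EE_{\Phi_{>\tau}}\eta(\Phi)$ to be small. Your case split (``if $\bar\eta$ is noticeable the purity test catches it, otherwise assume $\bar\eta$ small'') conflates these two quantities. A noticeable average defect does not imply a noticeable defect for the oracle you actually fixed. Markov's inequality then gives only constant-probability statements over $\Phi$. Borel--Cantelli, however, needs failure probabilities that are summable in $\lambda$. A merely positive probability of breaking each candidate does not yield a single oracle that breaks all countably many candidates while keeping the PRFSG secure.

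The missing ingredient is concentration over $\Phi_{>\tau}$. The paper's \cref{lem:purity-concentration-dichotomy} shows that $\Phi_{>\tau}\mapsto\EE_k[1-\Tr(\rho_{k,\Phi}^2)]$ is $16q$-Lipschitz, using $\normop{S_\phi-S_{\phi'}}\le4\norm{\phi-\phi'}$ and a hybrid over queries. It then applies \cite[Theorem~10]{Kre21} in dimension at least $2^{\tau+1}$, giving two cases:
\begin{itemize}
\item If the average defect is at least $2\lambda^{-(4a+9)}$, the actual defect is at least $\lambda^{-(4a+9)}$ except with probability $\exp(-\Omega(\lambda))$.
\item Otherwise, \cref{lem:purity-to-unitary-choi} and Jensen make $\bar\eta$ polynomially small, and Markov gives failure probability $O(\lambda^{-5/4})$.
\end{itemize}
This is also why the paper takes $\tau=\lceil(10a+19)\log_2\lambda\rceil$: $2^\tau$ must dominate $q^2\lambda^{2(4a+9)}$, and $\lambda q2^{-\tau/2}$ must be summable in $\lambda$. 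Your choice, which makes $q2^{-\tau/2}$ only a small constant, is too weak for either purpose. Applying concentration directly to the distinguisher's acceptance probability as a function of $\Phi_{>\tau}$ would also work, but some such concentration step is indispensable.

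The same issue appears in your Haar case. From $\EE_U\Tr(J(\mathcal U)\widehat\mu_k)=2^{-2n}$, Markov gives a per-key failure probability of order $2^{-2n}$. This does not survive the factor $2^\lambda$ in the OR-test bound when $n$ is only $\omega(\log\lambda)$. You need the $\exp(-\Omega(2^n))$ tail of \cref{cor:haar-concentration}, so that for a typical $U$ every $M_k$ simultaneously has swap-pass probability at most $7/12$.
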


Our oracle is $\mathcal O^\Phi\defeq(S^\Phi,\QPSPACE)$, where $S^\Phi$ is a
standard CHFS oracle. 
The construction of the PRFSGs is as follows: Let $\kappa,m=\omega(\log\lambda)$ be such that $\kappa+ m = \lambda$. The generation algorithm on the key $k \in \bit^\kappa$ and input $x \in \bit^m$ outputs the state $\ket{\phi_{k\| x}}$, which is obtained by applying $S_\lambda^{\Phi}$ on $\ket{k\|x,0}$. 
\cite{BEHMV26,EPRINT:GLMY25} proved that this construction is an adaptively secure quantum-accessible PRFSG relative to the oracle.

We focus on the non-existence of PRUs in the same oracle model.
Fix an oracle QPT algorithm $\mathcal G$ that is a candidate PRU construction, which is of the form \cref{eq:oracle-algorithm-query-comb} that acts on $n=n(\lambda)$ qubits for $n=\omega(\log \lambda)$ with a $\lambda$-bit key.
Let $q_d$ be the number of length-$d$ CHFS queries and let $q=\sum_d q_d$ be the total number of queries. Suppose that $\mathcal G$ runs in time $\lambda^a$ for a fixed constant $a$ that depends on $\mathcal G$. 
Let $d_{\max}\leq\lceil\lambda^a\rceil$ be the maximum query length.
We assume $q>0$, otherwise $\mathcal G$ is insecure due to $\QPSPACE.$\footnote{Alternatively, we assume that $q_d>0$ for all $d \le d_{\max}$ using a dummy query.}

\ifnum\submission=1
We first prove the dichotomy property of the target Choi state in \cref{subsec:Choi}, which presents the main object, the averaged Choi state, of the attack and shows that it satisfies a concentration bound or the construction is far from unitary. \cref{subsec:simulating-g} shows how to simulate the averaged Choi state without using the CHFS oracle. \cref{subsec:distingsuiher} presents the distinguisher and the proof of the main theorem.
\else
We first prove the dichotomy property of the target Choi state in \cref{subsec:Choi}, which presents the main object, the averaged Choi state, of the attack and shows that it satisfies a concentration bound or the construction is far from unitary. \cref{subsec:simulating-g} shows how to simulate the averaged Choi state without using the CHFS oracle. \cref{subsec:distingsuiher} and \cref{subsec:analysis} present the distinguisher and its analysis.
\fi

\subsection{Dichotomy of the averaged Choi state}\label{subsec:Choi}
The main ingredient of the distinguisher is the Choi state of the candidate algorithm $\mathcal G$. More precisely, we consider the average Choi state where the CHFS oracle for large inputs is replaced by an independent random CHFS oracle.

\paragraph{Choi states and independent Haar states.}
Let $\tau=\left\lceil(10a+19)\log_2\lambda\right\rceil$ be the threshold. 
Let $A$ be the $n$-qubit input register of $\mathcal G_k$, and let
$R$ be an $n$-qubit reference register. Recall that for a channel $\mathcal E$ on $A$
\[
\ket\Omega_{AR}=2^{-n/2}\sum_{u\in\{0,1\}^n}\ket u_A\ket u_R,
\qquad
J(\mathcal E)=(\mathcal E\otimes\mathcal I_R)(\proj\Omega).
\]
Thus, one copy of the Choi state $J(\mathcal E)$
is prepared by applying $\mathcal E$ to the $A$ register of
$\ket\Omega_{AR}$. Any ancillary registers of $\mathcal E$
are traced out. 
We are mainly interested in the Choi state
\[
 \rho_{k,\Phi}=J(\mathcal G_k^{\mathcal O^\Phi})
\]
that can be obtained by applying $\mathcal G$ once.

For families of pure states defining the CHFS oracles $\Phi$, we write $\Phi_{\leq\tau}=(\Phi_d)_{d\leq\tau}$ and
$\Phi_{>\tau}=(\Phi_d)_{d>\tau}$. 
We consider another family of states that are sampled from the identical distribution 
$\Phi'_{>\tau}=(\Phi'_d)_{d>\tau}$: for each $d>\tau$ and $x\in\{0,1\}^d$, independently
sample a $d$-qubit Haar state $\ket{\phi'_{d,x}}$ and use these states
to define $S_d^{\Phi'}$ by the same CHFS construction. These states are independent of the original oracle $\Phi$.

The state $\rho_{k,(\Phi_{\leq\tau},\Phi'_{>\tau})}$ denotes the
Choi state for $\mathcal G$ using the CHFS oracle with $\Phi_{\leq\tau},\Phi'_{>\tau}$, i.e., one with the original $S_d^{\Phi}$ for 
$d\leq\tau$ and $S_d^{\Phi'}$ for $d>\tau$. 
Let
\[
 \mu_k\defeq
 \EE_{\Phi'_{>\tau}}
 \left[\rho_{k,(\Phi_{\leq\tau},\Phi'_{>\tau})}\right]
\]
be the average of the states $\rho_{k,(\Phi_{\leq\tau},\Phi'_{>\tau})}$ over (product) Haar random $\Phi'_{>\tau}$.
In particular, $\mu_k$ depends on the original oracle only through
$\Phi_{\leq\tau}$.

\paragraph{Dichotomy.}
The next lemma shows that either the Choi states have noticeable average mixedness over the key, or they are close on average to the states $\mu_k$.

\begin{lemma}[Mixedness or concentration]
\label{lem:purity-concentration-dichotomy}
With probability one over $\Phi$, for all sufficiently large
$\lambda$, either
\[
\EE_k[1-\Tr(\rho_{k,\Phi}^2)]\geq\lambda^{-(4a+9)} \quad \text{ or }\quad\EE_k\normone{\rho_{k,\Phi}-\mu_k}\leq\lambda^{-1}.
\]
\end{lemma}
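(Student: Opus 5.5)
The plan is to use \cref{lem:simultaneous-large-layer-resampling} for each key and then turn the expectation bound into an almost-sure statement with Borel--Cantelli. We may assume the first alternative fails, i.e.\ $\EE_k\Delta_k<\lambda^{-(4a+9)}$ with $\Delta_k\defeq 1-\Tr(\rho_{k,\Phi}^2)$, and must then show $\EE_k\normone{\rho_{k,\Phi}-\mu_k}\le\lambda^{-1}$. By \cref{lem:purity-to-unitary-choi}, $\eta_k(\Phi)=e(\mathcal G_k^{\mathcal O^\Phi})\le5\sqrt{\Delta_k}$. The difficulty is that \cref{lem:simultaneous-large-layer-resampling} is stated on average over $\Phi_{>\tau}$, with the error term $\Exp_{\Phi_{>\tau}}\eta$, whereas the dichotomy concerns the fixed realized $\Phi$. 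I therefore work with averaged quantities and pass to the realized $\Phi$ by a Markov/Borel--Cantelli argument.

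First, fix $\Phi_{\le\tau}$ and a key $k$. Apply \cref{lem:simultaneous-large-layer-resampling} with $\mu=\mu_k$ (this is exactly the $\mu$ there, since $\Phi'_{>\tau}$ has the same law as $\Phi_{>\tau}$). This gives
\[
\Exp_{\Phi_{>\tau}}\normone{\rho_{k,\Phi}-\mu_k}\le Cq\Bigl(2^{-\tau/2}+\sqrt{\Exp_{\Phi_{>\tau}}\eta_k(\Phi)}\Bigr).
\]
Averaging over $k$ and using Jensen yields
\[
\Exp_{\Phi_{>\tau}}\EE_k\normone{\rho_{k,\Phi}-\mu_k}\le Cq\Bigl(\lambda^{-(5a+9.5)}+\sqrt{5}\,\bigl(\Exp_{\Phi_{>\tau}}\EE_k\Delta_k\bigr)^{1/4}\Bigr),
\]
where $q\le\lambda^a$.

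Define the random variables $Z=\EE_k\normone{\rho_{k,\Phi}-\mu_k}$ and $Y=\EE_k\Delta_k$, both functions of $\Phi$. I want to rule out the bad event $\{Y<\lambda^{-(4a+9)},\,Z>\lambda^{-1}\}$ with summable probability. The plan is to bound $Z$ against $Y$ pointwise-in-expectation through a truncated quantity. Set $\theta=\lambda^{-(4a+9)}$ and replace $\Delta_k$ by the event indicator. Since $\rho_{k,\Phi}$ depends on $\Phi_{>\tau}$, the condition $Y<\theta$ cannot simply be conditioned on inside the Poincaré bound, so I use the following decomposition instead. On the bad event, $Z>\lambda^{-1}$ while $Y<\theta$. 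Then $\Pr[\text{bad}]\le\Pr[Z>\lambda^{-1}\wedge Y<\theta]$, which I control by
\[
\Pr[Z>\lambda^{-1}]\le\lambda\,\Exp Z\lesssim \lambda^{1+a}\bigl(\lambda^{-5a-9}+(\Exp Y)^{1/4}\bigr).
\]
This alone fails when $\Exp Y$ is large.

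The resolution I would try is to apply the lemma not to $\mathcal G$ but to a modified algorithm $\mathcal G'$. This algorithm runs $\mathcal G$ and, in addition, runs the purity test coherently on extra copies, so that $\eta$ for the relevant output is small whenever $Y$ is small. Alternatively, I would use concentration (\cref{cor:haar-concentration}) of $Y$ itself over $\Phi_{>\tau}$. Since $Y$ is the acceptance probability of a $2q$-query algorithm (the purity test on a random key), it concentrates around $\Exp_{\Phi_{>\tau}}Y$ up to $\lambda^{-(4a+9)}$ with failure probability $\exp(-2^{\tau}\lambda^{-O(a)})$, which is summable. Hence, except on a summable event, $Y<\theta$ implies $\Exp_{\Phi_{>\tau}}Y\le2\theta$. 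In that case
\[
\Exp_{\Phi_{>\tau}}Z\lesssim\lambda^{a}\cdot\theta^{1/4}=\lambda^{-a-9/4}.
\]
Markov then gives $\Pr[Z>\lambda^{-1}]\le\lambda^{-a-5/4}$, which is still not summable. To fix this I would also concentrate $Z$: the quantity $\normone{\rho-\mu_k}$ is a supremum over measurements, but restricted to the needed distinguishing measurement it is again a Lipschitz acceptance probability. Alternatively, exponents can be tuned, with Borel--Cantelli over $\lambda$ requiring only $\sum_\lambda\lambda^{-1-\epsilon}<\infty$, and the constant $4a+9$ leaves room for this.

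The main obstacle is exactly this conversion from the averaged resampling bound to a statement holding for almost every fixed $\Phi$. That requires handling the mismatch between the realized mixedness $Y$ and its average over $\Phi_{>\tau}$, and it requires making the failure probabilities summable in $\lambda$, which is presumably why $\tau$ is chosen with a $(10a+19)\log\lambda$ margin.
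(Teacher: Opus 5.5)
Your plan is essentially the paper's proof. You split on whether $\Exp_{\Phi_{>\tau}}Y$ exceeds $2\theta$. In the first case, concentration of $Y$ over $\Phi_{>\tau}$ makes $\Pr[Y<\theta]$ exponentially small. In the second case you combine \cref{lem:simultaneous-large-layer-resampling}, \cref{lem:purity-to-unitary-choi}, Jensen and Markov. All bounds are uniform in $\Phi_{\le\tau}$, and Borel--Cantelli finishes.

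The obstacle you report at the end does not exist; it comes from an arithmetic slip. With $q\le\lambda^a$ and $(\Exp Y)^{1/4}\le(2\theta)^{1/4}$ you get $\Exp_{\Phi_{>\tau}}Z\lesssim\lambda^{a}\cdot\lambda^{-(4a+9)/4}=\lambda^{-9/4}$. The factor $q\le\lambda^a$ cancels the $\lambda^{-a}$ inside $\theta^{1/4}$; it does not add to it. Markov then gives $\Pr[Z>\lambda^{-1}]=O(\lambda^{-5/4})$, which is summable. So is the $\lambda^{-a-5/4}$ you wrote, so your claim that it is ``not summable'' is simply false. Borel--Cantelli therefore applies immediately. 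You do not need to concentrate $Z$, tune exponents, or modify $\mathcal G$ to run coherent purity tests. The exponent $4a+9$ is chosen exactly so that $\lambda\cdot q\cdot\theta^{1/4}\le\lambda^{-5/4}$; the fourth root comes from $\sqrt{\eta}$ together with $\eta\lesssim\sqrt{\Delta}$.

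The step you should tighten is the concentration of $Y$. \cref{cor:haar-concentration} is stated for forward queries to $\bigoplus_j U_j$. A CHFS gate $S_{U\ket{0^d}}$ is built from $U$ only via controlled $U$ and controlled $U^\dagger$, so ``$Y$ is the acceptance probability of a $2q$-query algorithm'' does not directly fit that lemma.

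The paper avoids this by proving directly that $\Phi_{>\tau}\mapsto Y$ is $16q$-Lipschitz in the product Euclidean distance. This uses $\normop{S_\phi-S_{\phi'}}\le4\norm{\phi-\phi'}$, a hybrid over queries with Cauchy--Schwarz, and $|\Tr\rho^2-\Tr\sigma^2|\le2\normone{\rho-\sigma}$. It then writes $\phi_{d,x}=U_{d,x}\ket{0^d}$ and applies Kretschmer's Lipschitz concentration theorem with minimum dimension $2^{\tau+1}$.

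With $2^{\tau}\ge\lambda^{10a+19}$, deviation $\theta=\lambda^{-(4a+9)}$ and $L=16q\le16\lambda^a$, the tail is $\exp(-\Omega(2^\tau\theta^2/q^2))=\exp(-\Omega(\lambda))$. Your vague ``$\lambda^{-O(a)}$'' must in fact be at least about $\lambda^{-(10a+18)}$ for summability. That is exactly what the choice $\tau=\lceil(10a+19)\log_2\lambda\rceil$ guarantees.
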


\begin{proof}
Fix $\Phi_{\leq\tau}$.
If $d_{\max}\leq\tau$, then $\rho_{k,\Phi}=\mu_k$ and the claim
is immediate. Otherwise, consider two choices
$\Phi_{>\tau}$ and $\Phi'_{>\tau}$.

Observe that
for unit vectors $\phi,\phi'$, the definition in
\cref{eq:one-label-chfs-unitary} gives
\[
\normop{S_\phi-S_{\phi'}}
\leq
\normop{\proj{\phi,1}-\proj{\phi',1}}
+2\norm{\phi-\phi'}
\leq4\norm{\phi-\phi'}.
\]
Since $S_d^\Phi$ is block diagonal in the input $x$,
\[
\normop{S_d^\Phi-S_d^{\Phi'}}
=
\max_{x\in\{0,1\}^d}
\normop{S_{\phi_{d,x}}-S_{\phi'_{d,x}}}
\leq
4\max_{x\in\{0,1\}^d}
\norm{\phi_{d,x}-\phi'_{d,x}}.
\]
The same operator-norm bound holds for the controlled gates.
The diamond distance between the channels induced by two unitaries
is at most twice their operator-norm distance. Thus a hybrid
argument over the queries gives
\begin{align*}
\normone{
\rho_{k,\Phi}
-\rho_{k,(\Phi_{\leq\tau},\Phi'_{>\tau})}
}
&\leq
2\sum_{\tau<d\leq d_{\max}}
q_d\normop{S_d^\Phi-S_d^{\Phi'}}\\
&\leq
8\sum_{\tau<d\leq d_{\max}}
q_d\max_{x\in\{0,1\}^d}
\norm{\phi_{d,x}-\phi'_{d,x}}\\
&\leq
8q\left(
\sum_{\tau<d\leq d_{\max}}
\sum_{x\in\{0,1\}^d}
\norm{\phi_{d,x}-\phi'_{d,x}}^2
\right)^{1/2}.
\end{align*}
Since $|\Tr(\rho^2)-\Tr(\sigma^2)|\leq2\normone{\rho-\sigma}$,
the function
$\Phi_{>\tau}\mapsto\EE_k[1-\Tr(\rho_{k,\Phi}^2)]$
is $16q$-Lipschitz in the product Euclidean distance.

For each $(d,x)$ with $\tau<d\leq d_{\max}$, write
$\phi_{d,x}=U_{d,x}\ket{0^d}$ with independent Haar-random
unitaries $U_{d,x}$. Since
$\norm{U\ket0-V\ket0}\leq\normtwo{U-V}$, this Lipschitz
bound also holds in the product Frobenius norm.
We can therefore apply \cite[Theorem~10]{Kre21}, with minimum
unitary dimension at least $2^{\tau+1}$.

If
$\EE_{k,\Phi_{>\tau}}[1-\Tr(\rho_{k,\Phi}^2)]\geq2\lambda^{-(4a+9)}$,
the lower-tail bound in \cite[Theorem~10]{Kre21} gives
\[
\Prb_{\Phi_{>\tau}}
\left[\EE_k[1-\Tr(\rho_{k,\Phi}^2)]<\lambda^{-(4a+9)}\right]
\leq
\exp\left(-\frac{(2^{\tau+1}-2)\lambda^{-2(4a+9)}}{24(16q)^2}\right)
=\exp(-\Omega(\lambda)).
\]

If instead
$\EE_{k,\Phi_{>\tau}}[1-\Tr(\rho_{k,\Phi}^2)]<2\lambda^{-(4a+9)}$,
\cref{lem:purity-to-unitary-choi} and Jensen's inequality give
\[
\EE_{k,\Phi_{>\tau}}e(\mathcal G_k^{\mathcal O^\Phi})
\leq5\sqrt{
\EE_{k,\Phi_{>\tau}}[1-\Tr(\rho_{k,\Phi}^2)]
}
<5\sqrt{2\lambda^{-(4a+9)}}.
\]
Applying the CHFS oracle concentration lemma (\cref{lem:simultaneous-large-layer-resampling}) for each key
and then Jensen's inequality yields
\ifnum\fullpage=0
\begin{align*}
\EE_{k,\Phi_{>\tau}}\normone{\rho_{k,\Phi}-\mu_k}
&\leq Cq\left(
2^{-\tau/2}
+\sqrt{\EE_{k,\Phi_{>\tau}}e(\mathcal G_k^{\mathcal O^\Phi})}
\right)\\
&\leq C_1q\left(2^{-\tau/2}+\lambda^{-(4a+9)/4}\right).
\end{align*}
\else
\[
\EE_{k,\Phi_{>\tau}}\normone{\rho_{k,\Phi}-\mu_k}
\leq Cq\left(
2^{-\tau/2}
+\sqrt{\EE_{k,\Phi_{>\tau}}e(\mathcal G_k^{\mathcal O^\Phi})}
\right)\leq C_1q\left(2^{-\tau/2}+\lambda^{-(4a+9)/4}\right).
\]
\fi
Markov's inequality therefore gives
\[
\Prb_{\Phi_{>\tau}}\left[
\EE_k\normone{\rho_{k,\Phi}-\mu_k}>\lambda^{-1}
\right]
\leq C_1\lambda q\left(2^{-\tau/2}+\lambda^{-(4a+9)/4}\right)
=O(\lambda^{-5/4}).
\]
In either case, the probability that both alternatives fail is
$O(\lambda^{-5/4})$, uniformly over $\Phi_{\leq\tau}$.
Averaging over $\Phi_{\leq\tau}$ gives the same bound over $\Phi$.
These failure probabilities are summable in $\lambda$, so the
Borel--Cantelli lemma proves the claim.
\end{proof}

\subsection{Simulating \texorpdfstring{$\mathcal G$}{G}}
\label{subsec:simulating-g}
The distinguisher uses the approximation of $\mu_k$. 
We explain that, after preprocessing on $\Phi_{\le \tau}$ with tomography, the approximation of $\mu_k$ can be efficiently prepared for each $k$ without querying $S^\Phi$ at all.

\paragraph{Tomography of the small layers.}
For every $d\leq\tau=\left\lceil(10a+19)\log_2\lambda\right\rceil$,
the distinguisher performs process tomography in \cref{thm:process-tomography} on the controlled gate
$\ketbra{0}_C\otimes\Id+\ketbra{1}_C\otimes S_d^\Phi$ with diamond-norm
accuracy $1/400q$ and failure probability $2^{-2\lambda}$. Let
$\widetilde S_d$ be the unitary specified by the classical output of
tomography, and let
$\widetilde S_{\leq\tau}=(\widetilde S_d)_{d\leq\tau}$ denote the collection of these descriptions. 

Each reconstructed gate acts on $2d+2=O(\log\lambda)$ qubits. By
\cref{thm:process-tomography}, the procedure runs in polynomial time
and uses polynomially many queries to $S^\Phi$. A union bound shows
that all reconstructed gates have the stated accuracy except with
probability at most $(\tau+1)2^{-2\lambda}$.

\paragraph{Preparing the simulated Choi state.}
Given $\widetilde S_{\leq\tau}$ and $\Phi'_{>\tau}$, let
$\widetilde{\mathcal G}_k^{\Phi'_{>\tau}}$ be the channel obtained
from $\mathcal G_k$ by replacing each controlled CHFS query of length
$d\leq\tau$ by $\widetilde S_d$, and each query of length $d>\tau$
by the corresponding controlled gate for $S_d^{\Phi'}$.
All other operations and queries to $\QPSPACE$ are unchanged.

We approximate the average state
$\EE_{\Phi'_{>\tau}}J(\widetilde{\mathcal G}_k^{\Phi'_{>\tau}})$
using phase-invariant designs, following the simulation in
\cite{BEHMV26}.
Let $\mathcal T$ be the deterministic classical algorithm of
\cref{lem:succinct-phase-invariant-designs}. More precisely,
on input $(d,2q,\delta;s)$, where $s$ is a design seed,
$\mathcal T$ outputs a classical description of a $d$-qubit
unitary circuit $U_{d,s}$ of size $\poly(d,q,\log(1/\delta))$.
For uniform $s$, the states $U_{d,s}\ket{0^d}$ form a
phase-invariant $\delta$-approximate $2q$-design.
Choose $\delta$ according to \cref{lem:chfs-design-simulation}
so that \cref{eqn:deltacondition} holds. Note that $\log(1/\delta) = O(\lambda^{2a})$.

Let $\{s_z\}_{z\in\bit^\ell}$ be an efficiently computable family of $2q$-wise independent functions on the inputs $(d,x)$ with $\tau<d\leq d_{\max}$ and $x\in\bit^d$ with $\ell=\poly(\lambda)$.
The output length of $s_z(d,x)$ equals the seed length for $\mathcal T(d,2q,\delta;\cdot)$.
For each $\tau<d\leq d_{\max}$ and $x\in\bit^d$,
let $\ket{\psi_{d,x,z}}$
be the state $U_{d,s_z(d,x)}\ket{0^d}$, where
$\mathcal T(d,2q,\delta;s_z(d,x))$ outputs the description
of $U_{d,s_z(d,x)}$.
We consider, for each $z$, the simulation of $\widetilde{\mathcal G}_k^{\Phi'_{>\tau}}$ by replacing the state $\ket{\phi'_{d,x}}$ for the CHFS oracle by $\ket{\psi_{d,x,z}}$.
Let $\widehat\mu_k$ be the Choi state of
this simulation, averaged over uniform $z$.

\emph{Circuit for $\widehat\mu_k$.}
Given $\lambda,k$ and $\widetilde S_{\leq\tau}$, the circuit
prepares a purification of $\widehat\mu_k$ as follows.
\begin{enumerate}
\item Prepare an $\ell$-qubit register $Z$ in
$H^{\otimes\ell}\ket{0^\ell}$ and prepare $\ket\Omega_{AR}$.

\item Run the purified computation of $\mathcal G_k$ on $A$.
For each CHFS query with $d\leq\tau$, apply $\widetilde S_d$.
For each CHFS query with $d>\tau$ and input $x$, perform the
following operations controlled on $Z$:
\begin{enumerate}
\item Compute the design seed $s_z(d,x)$.
\item Run $\mathcal T(d,2q,\delta;s_z(d,x))$
to compute the circuit description of $U_{d,s_z(d,x)}$.
\item Use this circuit and its inverse to implement the controlled
CHFS gate $S_{\psi_{d,x,z}}$ on the query target register,
with the original query control.
\item Uncompute the circuit description and the design seed.
\end{enumerate}
The seed and circuit description are computed reversibly.
All other gates and $\QPSPACE$ queries are unchanged. 

\item Output all registers.
\end{enumerate}
For fixed $z$, this is the simulation defining $\widehat\mu_k$.
Tracing out $Z$ and all ancillary registers gives its Choi state
averaged over uniform $z$, namely $\widehat\mu_k$.
Thus the output is a purification of $\widehat\mu_k$.

For the chosen $\delta$, both $s_z(d,x)$ and
$\mathcal T(d,2q,\delta;s_z(d,x))$ can be computed in polynomial
time. Since $\mathcal T$ outputs circuits of polynomial size,
the corresponding unitaries, their inverses, and the controlled
CHFS gates can also be implemented in polynomial time.
Thus the preparation procedure is a QPT algorithm with
access to $\QPSPACE$.

\ifnum\submission=1
The following lemma ensures the accuracy of the above circuit. The proof, which can be found in \cref{app:proofsdisting}, uses the simulation lemma in \cref{lem:chfs-design-simulation} and the results from the process tomography in \cref{thm:process-tomography}.

\begin{lemma}
\label{lem:average-choi-preparation}
Given $\lambda,k$ and the descriptions
$\widetilde S_{\leq\tau}$, the above procedure prepares a purification of $\widehat\mu_k$.
Except with probability at most $(\tau+1)2^{-2\lambda}$ over
tomography, for every key $k$,
$\normone{\widehat\mu_k-\mu_k}\leq1/200$.
\end{lemma}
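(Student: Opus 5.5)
The first claim, that the procedure prepares a purification of $\widehat\mu_k$, is already argued in the circuit description. For fixed $z$, the circuit is exactly the purified simulation of $\widetilde{\mathcal G}_k$ with oracle states $\ket{\psi_{d,x,z}}$. Tracing out $Z$, prepared in uniform superposition, yields the $z$-average of the Choi states. This works because $Z$ is only used as a control and is never acted on otherwise. I would state this briefly and concentrate on the accuracy bound. For that, I introduce the intermediate state
\[
\widetilde\mu_k\defeq\EE_{\Phi'_{>\tau}}J(\widetilde{\mathcal G}_k^{\Phi'_{>\tau}})
\]
and use the triangle inequality
\[
\normone{\widehat\mu_k-\mu_k}\leq\normone{\widehat\mu_k-\widetilde\mu_k}+\normone{\widetilde\mu_k-\mu_k}.
\]

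For the tomography term $\normone{\widetilde\mu_k-\mu_k}$, I condition on the event, of probability at least $1-(\tau+1)2^{-2\lambda}$ by the union bound above, that every $\widetilde S_d$ is $1/(400q)$-close in diamond norm to the controlled $S_d^\Phi$. On this event I fix any $\Phi'_{>\tau}$. The two computations $J(\widetilde{\mathcal G}_k^{\Phi'_{>\tau}})$ and $\rho_{k,(\Phi_{\le\tau},\Phi'_{>\tau})}$ differ only in the at most $q$ queries of length $\le\tau$. The gate-replacement bound from the preliminaries therefore gives trace distance at most $q\cdot 1/(400q)=1/400$. Averaging over $\Phi'_{>\tau}$ and using convexity of the trace norm gives $\normone{\widetilde\mu_k-\mu_k}\leq 1/400$. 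This holds simultaneously for all $k$, since the tomography event does not depend on $k$.

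For the design term $\normone{\widehat\mu_k-\widetilde\mu_k}$, I apply \cref{lem:chfs-design-simulation}. The algorithm in question prepares $\ket\Omega_{AR}$, runs $\mathcal G_k$ with the small-length gates replaced by the fixed unitaries $\widetilde S_d$, and outputs $AR$. The lemma allows a reference register, and it allows fixed oracle behavior at chosen input lengths. The tomographic gates are fixed unitaries rather than CHFS gates for some fixed $\Phi_{\le\tau}$, so they should be regarded as part of the oracle-independent inter-query unitaries. The lemma's proof only involves the random CHFS queries, so this is harmless. The randomized queries are those of lengths $\tau<d\le d_{\max}$, with at most $q$ queries and maximum length $L\le d_{\max}\le\lceil\lambda^a\rceil$. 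The seeds $s_z(d,x)$ are $2q$-wise independent, which the lemma permits. The bound is then $KqC^{q(L+1)}\delta$, and I would take \cref{eqn:deltacondition} to be exactly $KqC^{q(d_{\max}+1)}\delta\leq 1/400$. This is consistent with $\log(1/\delta)=O(\lambda^{2a})$ because $q,d_{\max}\le\lambda^a$. Combining the two terms gives $1/400+1/400=1/200$.

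The main obstacle is justifying the application of the design-simulation lemma when the small-length queries are replaced by tomographic approximations rather than genuine CHFS gates. The point to check is that the lemma's moment argument treats all non-randomized operations as arbitrary fixed channels, so that only the CHFS queries of lengths $>\tau$ enter the Choi-state moment comparison. The accounting of $q$ as a bound on the number of randomized queries should also be verified.
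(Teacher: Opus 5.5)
Your proposal is correct and follows the paper's proof essentially step for step. Both use the intermediate state $\EE_{\Phi'_{>\tau}}J(\widetilde{\mathcal G}_k^{\Phi'_{>\tau}})$, bound the design error by $1/400$ via the design-simulation lemma, bound the tomography error by the $q\cdot 1/(400q)=1/400$ hybrid over short-length queries, and take the union bound over the $\tau+1$ tomography runs, which holds for all $k$ at once. The point you flag, absorbing the tomographic gates into the fixed inter-query unitaries, is exactly what the paper does implicitly when it applies the lemma to $\widetilde{\mathcal G}_k$ for lengths $d>\tau$.
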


\else
The following lemma ensures the accuracy of the above circuit.

\begin{lemma}
\label{lem:average-choi-preparation}
Given $\lambda,k$ and the descriptions
$\widetilde S_{\leq\tau}$, the above procedure prepares a purification of $\widehat\mu_k$.
Except with probability at most $(\tau+1)2^{-2\lambda}$ over
tomography, for every key $k$,
$\normone{\widehat\mu_k-\mu_k}\leq1/200$.
\end{lemma}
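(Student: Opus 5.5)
The first claim, that the circuit prepares a purification of $\widehat\mu_k$, was already argued in the description of the circuit. For a fixed $z$, the controlled operations implement the simulated channel with $\ket{\psi_{d,x,z}}$ in place of $\ket{\phi'_{d,x}}$. The uniform superposition on $Z$, together with the reversible computation and uncomputation of seeds and circuit descriptions, makes tracing out $Z$ equal to averaging over $z$. It therefore remains to prove the trace-norm bound. We insert the intermediate state $\nu_k\defeq\EE_{\Phi'_{>\tau}}J(\widetilde{\mathcal G}_k^{\Phi'_{>\tau}})$ and split
\[
\normone{\widehat\mu_k-\mu_k}\leq\normone{\widehat\mu_k-\nu_k}+\normone{\nu_k-\mu_k}.
\]

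\emph{Tomography term.} We condition on the event, of probability at least $1-(\tau+1)2^{-2\lambda}$, that every reconstructed gate satisfies $\normdiamond{\widetilde{\mathcal S}_d-\mathcal S_d}\leq 1/(400q)$ for the controlled gate. On this event, fix $k$ and $\Phi'_{>\tau}$. The channels $\widetilde{\mathcal G}_k^{\Phi'_{>\tau}}$ and $\mathcal G_k^{\mathcal O^{(\Phi_{\leq\tau},\Phi'_{>\tau})}}$ differ only in the at most $q$ queries of length $d\leq\tau$. By the gate-replacement (hybrid) bound from the preliminaries, their Choi states differ in trace norm by at most $q\cdot 1/(400q)=1/400$. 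Convexity of the trace norm then gives $\normone{\nu_k-\mu_k}\leq 1/400$. This holds simultaneously for all $k$, because the tomography event does not depend on $k$.

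\emph{Design term.} We apply \cref{lem:chfs-design-simulation} to the algorithm $\widetilde{\mathcal G}_k$, which prepares the Choi state including the reference register $R$. In this algorithm the low-length queries are fixed gates $\widetilde S_d$, absorbed as oracle-independent unitaries in the spirit of the lemma's clause that states at chosen input lengths are fixed. The expectation is then taken only over the oracle states at lengths $d>\tau$, with the Haar family $\Phi'_{>\tau}$ compared against the design family $\Psi_z$ generated by $2q$-wise independent seeds. The lemma explicitly permits $2q$-wise independent seeds, and $\QPSPACE$ queries are simply further oracle-independent gates. Since the maximum query length is $L=d_{\max}\leq\lceil\lambda^a\rceil$, we obtain
\[
\normone{\widehat\mu_k-\nu_k}\leq KqC^{q(L+1)}\delta .
\]
The choice of $\delta$ in \cref{eqn:deltacondition} is exactly that this quantity is at most $1/400$; for instance $\delta=(400KqC^{q(d_{\max}+1)})^{-1}$ works, which gives $\log(1/\delta)=O(\lambda^{2a})$. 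Summing the two terms gives $1/200$.

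The main obstacle is the design step. We must check that \cref{lem:chfs-design-simulation} applies when some queries are replaced by tomographic approximations that are not of CHFS form. The lemma's proof only uses the moment structure of the oracle states being averaged; the other gates are arbitrary fixed unitaries. We would therefore first verify that the Kretschmer-style Choi-state simulation in its proof tolerates arbitrary fixed gates between queries, which appears immediate. A secondary point is that the tomography event is independent of $k$, so the union bound is needed only over $d\leq\tau$.
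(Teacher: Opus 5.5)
Your proposal is correct and follows essentially the same route as the paper's proof. Both go through the intermediate state $\EE_{\Phi'_{>\tau}}J(\widetilde{\mathcal G}_k^{\Phi'_{>\tau}})$, bound its distance to $\widehat\mu_k$ by $1/400$ using \cref{lem:chfs-design-simulation} and the choice of $\delta$, bound its distance to $\mu_k$ by $1/400$ using a hybrid over the tomographically replaced queries of length $d\le\tau$ followed by averaging, and take a single $k$-independent union bound over the $\tau+1$ tomography runs. Your remark that the $\widetilde S_d$ gates are simply fixed oracle-independent unitaries inside the algorithm is exactly the justification the paper leaves implicit when it applies the design lemma.
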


\begin{proof}
The preparation claim follows from the construction above.
Fix the tomography descriptions, which are used for all keys. Apply
\cref{lem:chfs-design-simulation} to
$\widetilde{\mathcal G}_k$ acting on one half of $\ket\Omega$ for the CHFS oracles with input length $d>\tau$.
The computation makes at most $q$ CHFS oracle queries, each of
length at most $\lceil\lambda^a\rceil$. Thus
\begin{align}
\label{eqn:deltacondition}
\normone{
\widehat\mu_k-
\EE_{\Phi'_{>\tau}}J(\widetilde{\mathcal G}_k^{\Phi'_{>\tau}})
}
\leq KqC^{q(\lceil\lambda^a\rceil+1)}\delta
\leq\frac1{400},
\end{align}
where $K,C$ are the universal constants in that lemma.
Since $q\leq\lambda^a$, this choice requires only
$\log(1/\delta)=O(\lambda^{2a})$.

Now suppose that all tomography estimates have error at most $1/400q$. For every key $k$ and every $\Phi'_{>\tau}$, a hybrid
argument over the queries of size $\le \tau$ gives
\[
\normone{
J(\widetilde{\mathcal G}_k^{\Phi'_{>\tau}})
-\rho_{k,(\Phi_{\leq\tau},\Phi'_{>\tau})}
}
\leq\sum_{d\leq\tau}\frac{q_d}{400q}\leq\frac1{400}.
\]
Averaging over $\Phi'_{>\tau}$ and combining
the two bounds gives
$\normone{\widehat\mu_k-\mu_k}\leq1/200$.
A union bound over all the process tomography procedures gives failure probability
at most $(\tau+1)2^{-2\lambda}$.
\end{proof}
\fi

\subsection{The distinguisher}\label{subsec:distingsuiher}
Now we are ready to describe the distinguisher. 
The distinguisher is given oracle access to a channel $\mathcal V$,
where either
$\mathcal V=\mathcal G_{k^\star}^{\mathcal O^\Phi}$ for a uniformly
random key $k^\star\in\{0,1\}^\lambda$, or $\mathcal V$ is an
$n$-qubit Haar-random unitary channel.

Let $r=\lceil143\lambda\rceil$. 
The distinguisher uses the following binary measurement $M_k$ for each $k$ that is defined as follows. 
\begin{description}
\item[Input:] $r$ copies of $J(\mathcal V)$.
\item[Preparation:] Prepare $r$ independent copies of $\widehat\mu_k$,
using fresh design seeds for each copy and the same tomography
descriptions $\widetilde S_{\leq\tau}$ for all copies.
\item[Measurement:] Apply a swap test to each pair and accept if
at least $2r/3$ tests pass.
\end{description}
It is clearly a QPT algorithm with access to $\QPSPACE$ given $k$ and $\widetilde S_{\leq\tau}$, given that the preparation of $\widehat\mu_k$ is done in QPT with access to $\QPSPACE.$

Our distinguisher is defined as follows.

\paragraph{Distinguisher
$\mathcal D^{\mathcal V,\mathcal O^\Phi}$.}
\begin{enumerate}
\item Run the process tomography procedure of \cref{subsec:simulating-g} to
obtain $\widetilde S_{\leq\tau}$.
\item Prepare $r+2$ copies of $\ket\Omega_{AR}$ and apply
$\mathcal V$ to obtain $J(\mathcal V)^{\otimes (r+2)}.$
\item Apply the purity test of \cref{lem:swap-test} to the first two
copies of $J(\mathcal V)$. If it fails, output $1$ and abort.
\item Otherwise, apply the quantum OR test to
$\{M_k:k\in\{0,1\}^\lambda\}$ on the remaining $r$ copies of
$J(\mathcal V)$. Output $1$ if it accepts and $0$ otherwise.
\end{enumerate}

\ifnum\submission=1
We prove the following lemmas in \cref{app:proofsdisting}.

\begin{lemma}[QPT implementation]\label{lem:QPTimplement}
The distinguisher is a uniform QPT algorithm with oracle access to
$\mathcal V$ and $\mathcal O^\Phi=(S^\Phi,\QPSPACE)$.
\end{lemma}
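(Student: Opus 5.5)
The plan is to check each stage of $\mathcal D$ separately and argue that the total running time and query count are polynomial in $\lambda$. Uniformity will follow because every step is produced by a fixed classical procedure from $\lambda$ and the code of $\mathcal G$. Throughout, $\tau=O(\log\lambda)$, $q\le\lambda^a$, $d_{\max}\le\lceil\lambda^a\rceil$, and $r=\lceil143\lambda\rceil$.

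\textbf{Step 1 (tomography).} For each $d\le\tau$, the controlled gate $\ketbra{0}_C\otimes\Id+\ketbra{1}_C\otimes S_d^\Phi$ acts on $2d+2$ qubits. Its dimension is therefore $D=2^{2d+2}\le 2^{2\tau+2}=\poly(\lambda)$. \Cref{thm:process-tomography} with $\zeta=1/400q$ and $\delta=2^{-2\lambda}$ then uses $O(D^2 q\lambda)$ queries and time $\poly(D,q,\lambda)$. Summed over the $\tau+1$ layers, this is polynomial. The output $\widetilde S_{\le\tau}$ is a list of $O(\log\lambda)$-qubit unitaries. We compile each one into a circuit of size $\poly(\lambda)$ over a universal gate set with inverse-polynomial precision; standard synthesis of a unitary on $O(\log\lambda)$ qubits does this. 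Absorbing the compilation error into the slack $1/400q$, at the cost of a constant factor in the accuracy parameter, keeps \cref{lem:average-choi-preparation} intact.

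\textbf{Steps 2--3 (Choi states and purity test).} Preparing $r+2$ copies of $\ket\Omega_{AR}$ takes $O(rn)$ gates and $r+2$ forward queries to $\mathcal V$. The purity test is a swap test of \cref{lem:swap-test} on $2n$-qubit registers, so it has linear size.

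\textbf{Step 4 (quantum OR), expected to be the main point.} Here we invoke \cref{cor:qpspace-quantum-or} with $N=2^\lambda$, so $\log N=\lambda$, and with $\xi=J(\mathcal V)^{\otimes r}$ on $2nr=\poly(\lambda)$ qubits. We must check that each $M_k$ is given by a unitary QPT circuit with $\QPSPACE$ access, generated uniformly from $k$ and a common polynomial-length classical input. That common input is $\widetilde S_{\le\tau}$, together with $\lambda$, $\delta$, and the description of the $2q$-wise independent family.

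The circuit for $M_k$ first runs, $r$ times with fresh $Z$ registers, the purified preparation circuit of \cref{subsec:simulating-g} for $\widehat\mu_k$. This circuit is QPT with $\QPSPACE$ access: $\log(1/\delta)=O(\lambda^{2a})$, $\mathcal T$ runs in polynomial time, $s_z$ is efficiently computable, and all other gates and $\QPSPACE$ queries of $\mathcal G_k$ are kept. Next it applies $r$ coherent swap tests. It then computes the threshold predicate ``at least $2r/3$ passes'' into an output qubit with a reversible counter, and leaves the garbage in place, or uncomputes it, as the corollary's unitary-measurement format requires. Since $k$ enters only as the classical key input of $\mathcal G_k$, the circuit family is uniform in $k$.

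With these checks done, \cref{cor:qpspace-quantum-or} gives a QPT implementation of the OR test relative to $\QPSPACE$. Adding up the four stages gives a uniform polynomial bound on time and on queries to $\mathcal V$, $S^\Phi$ and $\QPSPACE$, which proves the lemma.
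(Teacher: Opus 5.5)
Your proposal is correct and follows the paper's proof. Both arguments have the same structure: polynomial-time tomography on the $O(\log\lambda)$-qubit layers, polynomial-size preparation of the Choi copies and the purity test, and an application of the $\QPSPACE$ quantum-OR corollary with $N=2^\lambda$ to the measurements $M_k$. Those measurements are generated uniformly from $k$ and $\widetilde S_{\le\tau}$ and make only $\QPSPACE$ queries. Your extra step, compiling the tomographic unitaries into gate-level circuits and splitting the $1/400q$ accuracy budget accordingly, is a detail the paper leaves implicit, and you handle it correctly.
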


\begin{lemma}[Haar case]
\label{lem:haar-case}
If $\mathcal V$ is a Haar-random unitary, the distinguisher
outputs $1$ with negligible probability.
\end{lemma}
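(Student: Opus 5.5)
The plan is to analyze the two stages of the distinguisher separately. In the Haar case $\mathcal V$ is a unitary channel, so $J(\mathcal V)$ is pure. By \cref{lem:swap-test}, the purity test in step 3 never fails, and the output equals the acceptance probability of the quantum OR test in step 4. By \cref{lem:quantum-or}, it suffices to show that every $M_k$ accepts $J(\mathcal V)^{\otimes r}$ with probability at most some $b$ satisfying $4\cdot 2^\lambda b=\negl(\lambda)$. We do this for an arbitrary fixed outcome of the tomography step, since every $\widehat\mu_k$ is determined by $k$ and $\widetilde S_{\leq\tau}$ and is independent of $U$. If needed, we absorb the tomography failure probability $(\tau+1)2^{-2\lambda}$ separately.

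Fix $k$ and the descriptions $\widetilde S_{\le\tau}$, and write $\sigma=\widehat\mu_k$ and $\ket{\omega_U}=(U\otimes I)\ket\Omega$. The $r$ swap tests act on independent product pairs $J(\mathcal V)\otimes\sigma$, so conditioned on $U$ the outcomes are i.i.d.\ Bernoulli with pass probability $p_U=(1+\bra{\omega_U}\sigma\ket{\omega_U})/2$. We first control $\bra{\omega_U}\sigma\ket{\omega_U}$ over Haar $U$. Its mean is $\Tr(\sigma\,\EE_U\proj{\omega_U})$, and $\EE_U\proj{\omega_U}=I_A/2^n\otimes$ (the $R$-marginal) $\preceq I/2^n$ in operator terms (indeed $\EE_U\proj{\omega_U}=I/4^n$), so the mean is $2^{-n}$ or smaller. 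The function $U\mapsto\bra{\omega_U}\sigma\ket{\omega_U}$ is $O(1)$-Lipschitz in Frobenius norm; equivalently, it is the acceptance probability of a one-query algorithm to $U$. \cref{cor:haar-concentration} with $q=1$ and $d=2^n$ then gives $\Pr_U[\bra{\omega_U}\sigma\ket{\omega_U}\ge 2^{-n}+t]\le\exp(-(2^n-2)t^2/24)$. Take $t=1/10$. Since $n=\omega(\log\lambda)$, $2^n$ is superpolynomial, and this probability is $\exp(-2^n/2400)$, which is negligible even after multiplying by $2^\lambda$. That last step needs care, since $2^n$ is only guaranteed superpolynomial, not $\ge 2^{\lambda}$.

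On the complementary event, $p_U\le 0.55+2^{-n-1}<0.56$. \cref{lem:bernoulli-chernoff} with $a=2/3$ and $p=0.56$ then bounds the probability that at least $2r/3$ of the tests pass by $\exp(-r\,\mathrm{KL}(2/3\|0.56))$. Computing gives $\mathrm{KL}\approx 0.0245>1/143\cdot\ln 2\cdot 1.1$, so with $r=\lceil 143\lambda\rceil$ the bound is at most $2^{-\lambda}\cdot 2^{-\Omega(\lambda)}$. Hence $b\le 2^{-(1+c)\lambda}+\exp(-\Omega(2^n))$, and the OR test accepts with probability at most $4\cdot2^\lambda b$.

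The main obstacle is the union over $2^\lambda$ keys against the Haar-concentration tail. The Chernoff part is designed to beat $2^\lambda$ through the choice of $r$. The Haar tail $\exp(-\Omega(2^n))$, however, beats $2^\lambda$ only if $2^n\gg\lambda$, which does hold since $2^n=\lambda^{\omega(1)}$. I would verify this explicitly. If it were problematic, I would instead average inside the OR bound: apply \cref{lem:quantum-or} conditioned on $U$ with $b_U=\max_k\Pr[M_k\text{ accepts}\mid U]$, and bound $\EE_U b_U\le\sum_k\Pr_U[\text{bad}_k]+2^{-(1+c)\lambda}$. This again reduces the problem to $2^\lambda\exp(-2^n/2400)$ being negligible, which holds because $2^n/\lambda\to\infty$ faster than any polynomial.
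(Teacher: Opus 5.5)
Your proposal is correct and is essentially the paper's argument. The purity test always passes, one-query Haar concentration controls $\Tr(J(\mathcal U)\widehat\mu_k)$ for each fixed $\widehat\mu_k$, and the $2^\lambda$ keys are absorbed using $2^n=\lambda^{\omega(1)}$; Chernoff with $r=\lceil 143\lambda\rceil$ plus OR-test soundness finishes. The remaining differences are cosmetic: you use threshold $0.56$ instead of $7/12$ (the relevant KL value is about $0.0237$, not $0.0245$, still ample), and you fold the bad-$U$ event into $b$, which is valid because the OR test's acceptance is linear in its input state, whereas the paper first union-bounds to an event where every key is good and then applies the OR lemma for that fixed $U$.
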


\begin{lemma}[Candidate case]
\label{lem:candidate-case}
Fix an oracle $\Phi$ for which the conclusion of
\cref{lem:purity-concentration-dichotomy} holds. If
$\mathcal V=\mathcal G_{k^\star}^{\mathcal O^\Phi}$ for a uniformly
random $k^\star$, then the distinguisher outputs 1 with non-negligible probability. More precisely,
$\Prb[\mathcal D^{\mathcal V,\mathcal O^\Phi}=1]
\geq\frac12\lambda^{-(4a+9)}-\negl(\lambda)
$ for all sufficiently large $\lambda$.
\end{lemma}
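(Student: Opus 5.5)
The plan is to split according to the two alternatives of \cref{lem:purity-concentration-dichotomy}, after first discarding the tomography failure event. That event has probability $(\tau+1)2^{-2\lambda}=\negl(\lambda)$, so we may assume the conclusion of \cref{lem:average-choi-preparation} holds: $\normone{\widehat\mu_k-\mu_k}\leq1/200$ for every $k$.

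\emph{Mixedness case.} Suppose $\EE_k[1-\Tr(\rho_{k,\Phi}^2)]\geq\lambda^{-(4a+9)}$. Here $J(\mathcal V)=\rho_{k^\star,\Phi}$ and $k^\star$ is uniform. By \cref{lem:swap-test}, the purity test on the first two copies fails with probability $\EE_{k^\star}[1-\Tr(\rho_{k^\star,\Phi}^2)]/2\geq\frac12\lambda^{-(4a+9)}$. When it fails, the distinguisher outputs $1$. This alone gives the claimed bound.

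\emph{Concentration case.} Suppose $\EE_k\normone{\rho_{k,\Phi}-\mu_k}\leq\lambda^{-1}$. By Markov's inequality, with probability at least $1-\lambda^{-1/2}$ over $k^\star$ we have $\normone{\rho_{k^\star,\Phi}-\mu_{k^\star}}\leq\lambda^{-1/2}$. Hence $\normone{\rho_{k^\star,\Phi}-\widehat\mu_{k^\star}}\leq1/200+\lambda^{-1/2}\leq1/100$ for large $\lambda$. For such $k^\star$, I would show that $M_{k^\star}$ accepts with probability $1-\negl$.

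\begin{itemize}[label={}]
\item Each swap test acts on $J(\mathcal V)\otimes\widehat\mu_{k^\star}$. The pairs are independent because the copies of $\widehat\mu_{k^\star}$ use fresh seeds and identical tomography data.
\item Each test passes with probability $(1+\Tr(\rho\widehat\mu))/2$, where $\rho=\rho_{k^\star,\Phi}$ and $\widehat\mu=\widehat\mu_{k^\star}$.
\item Write $\Tr(\rho\widehat\mu)\geq\Tr(\rho^2)-\normone{\rho-\widehat\mu}$. We need $\Tr(\rho^2)$ close to $1$. If it were not, say $\Tr(\rho^2)<1-\lambda^{-1}$, the purity test would fail with probability about $\lambda^{-1}/2$, which is already non-negligible.
\end{itemize}

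To avoid this side argument I will refine the case split. Either $\Prb_{k^\star}[1-\Tr(\rho^2)\geq1/100]\geq\lambda^{-(4a+9)}$, in which case the purity test alone fails with probability at least $\frac1{200}\lambda^{-(4a+9)}$. A cleaner route is to use Lipschitzness: $\Tr(\rho^2)\geq\Tr(\mu^2)-2\normone{\rho-\mu}$. This does not immediately help, so I will instead use the dichotomy as stated. Under the concentration alternative, the first alternative fails, so $\EE_k[1-\Tr(\rho_k^2)]<\lambda^{-(4a+9)}$. Markov then gives $\Tr(\rho_{k^\star}^2)\geq1-\lambda^{-1}$ with probability $1-\lambda^{-(4a+8)}$.

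Combining, each test passes with probability at least $(1+1-\lambda^{-1}-1/100)/2\geq0.99$, well above $2/3$. By \cref{lem:bernoulli-chernoff} with $r=\lceil143\lambda\rceil$, fewer than $2r/3$ tests pass with probability $e^{-\Omega(\lambda)}$. So $M_{k^\star}$ accepts with probability $1-e^{-\Omega(\lambda)}\geq1-a$ for $a$ small.

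Next, the purity test passes with probability at least $1-\lambda^{-1}/2$. Conditioned on passing, the remaining $r$ copies are untouched because the copies are independent, so \cref{lem:quantum-or} accepts with probability at least $(1-a)^2/7$. This yields total acceptance probability $\Omega(1)\geq\frac12\lambda^{-(4a+9)}$ for large $\lambda$.

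The main obstacle is the bookkeeping in the concentration case: we need $\rho_{k^\star}$ to be simultaneously nearly pure and close to $\widehat\mu_{k^\star}$, so that the swap-test pass probability is bounded away above $2/3$. The key step is to exploit that the first alternative fails, together with the closeness, to lower bound $\Tr(\rho\widehat\mu)$.
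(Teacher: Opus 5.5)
Your proposal is correct and follows essentially the same route as the paper. You case-split on whether the average-mixedness alternative of \cref{lem:purity-concentration-dichotomy} holds; otherwise you combine its failure with the closeness alternative via Markov's inequality to get a $1-o(1)$ fraction of keys that are both nearly pure and close to $\widehat\mu_k$, and then finish with the swap tests, \cref{lem:bernoulli-chernoff}, and \cref{lem:quantum-or}. The differences are cosmetic: your Markov thresholds $\lambda^{-1/2}$ and $\lambda^{-1}$ replace the paper's $1/100$, and you multiply by the purity-test pass probability where the paper simply notes that a purity-test failure also outputs $1$; the exploratory detour in the middle of your write-up can be deleted.
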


We are now ready to prove our main theorem.

\begin{proof}[Proof of \cref{thm:prfs-pru-separation}]
Enumerate the candidate PRUs $(\mathcal G_i)_{i \in \mathbb N}$, which form a countable set because uniform oracle machines with explicit polynomial time bounds form a countable set.
For each fixed $\mathcal G_i$, the dichotomy in \cref{lem:purity-concentration-dichotomy} holds with probability one over $\Phi$. 
By countable subadditivity,
with probability one over the CHFS oracle,
the dichotomy holds for all candidates and the PRFSG security (which happens with probability 1 \cite{BEHMV26}) holds for the construction described in the introduction of this section. Fix $\Phi$ so that this holds.

It suffices to discuss the non-existence of secure PRUs. 
For every candidate $\mathcal G$, the oracle QPT distinguisher $\mathcal D$ distinguishes the candidate
from a Haar-random unitary channel with inverse-polynomial advantage by \cref{lem:haar-case,lem:candidate-case}.
Hence PRUs do not exist relative to $\mathcal O^\Phi$.
\end{proof}

\else

\begin{lemma}[QPT implementation]
The distinguisher is a uniform QPT algorithm with oracle access to
$\mathcal V$ and $\mathcal O^\Phi=(S^\Phi,\QPSPACE)$.
\end{lemma}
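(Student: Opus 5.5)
We verify the claim by walking through the four steps of $\mathcal D$ and bounding each step's running time and query complexity by a polynomial in $\lambda$. The resource counts are explicit throughout, so uniformity will follow from the fact that every parameter ($\tau$, $r$, $\delta$, $q$, $d_{\max}$) is computable from $\lambda$ and the fixed description of $\mathcal G$.

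\textbf{Step 1 (tomography).} There are $\tau+1=O(\log\lambda)$ calls to \cref{thm:process-tomography}. Each acts on a controlled gate of dimension $D=2^{2d+2}\le 2^{2\tau+2}=\poly(\lambda)$, with accuracy $\zeta=1/400q\ge 1/\poly(\lambda)$ and failure probability $2^{-2\lambda}$, so $\log(2/\delta_{\rm tom})=O(\lambda)$. Each call therefore uses $\poly(\lambda)$ queries to $S^\Phi$ and $\poly(\lambda)$ time. The output $\widetilde S_{\le\tau}$ is a classical description of polynomial size.

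\textbf{Steps 2--3.} These are immediate. Preparing $r+2=O(\lambda)$ copies of $\ket\Omega_{AR}$ on $2n$ qubits takes $O(n)$ gates per copy. The algorithm then makes $r+2$ forward queries to $\mathcal V$ and performs one swap test.

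\textbf{Step 4 (quantum OR test).} This is the main point, and we would handle it by invoking \cref{cor:qpspace-quantum-or} with $N=2^\lambda$, so $\log N=\lambda$, and with $\xi=J(\mathcal V)^{\otimes r}$ on $2nr=\poly(\lambda)$ qubits. We must check that each $M_k$ is described by a unitary QPT circuit with $\QPSPACE$ access, generated uniformly from $k$ and a common polynomial-length classical input. Here the common input is the pair $(\lambda,\widetilde S_{\le\tau})$.

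The circuit for $M_k$ first prepares $r$ purifications of $\widehat\mu_k$, each using the procedure of \cref{subsec:simulating-g} with a fresh register $Z$. Each such preparation runs in polynomial time, for the following reasons. The inputs $s_z(d,x)$ are outputs of a $2q$-wise independent family with $\ell=\poly(\lambda)$. The design generator $\mathcal T$ of \cref{lem:succinct-phase-invariant-designs} runs in time $\poly(d,q,\log(1/\delta))=\poly(\lambda)$, using $\log(1/\delta)=O(\lambda^{2a})$ and $d\le\lambda^a$. All the classical computations are carried out reversibly and then uncomputed. The $\QPSPACE$ queries of $\mathcal G_k$ are copied verbatim.

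Next, $M_k$ applies $r$ swap tests coherently. It computes into an ancilla the count of passing tests, compares it with $2r/3$, and writes a single output qubit. This yields a unitary with one designated accept qubit, as the corollary requires. Crucially, $M_k$ makes no queries to $S^\Phi$ and none to $\mathcal V$: it acts only on the given copies of $J(\mathcal V)$ together with freshly simulated states. Hence \cref{cor:qpspace-quantum-or} applies with $\QPSPACE$ as the only oracle.

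The delicate point we expect is the hypothesis of \cref{cor:qpspace-quantum-or} that the $M_i$ themselves may query $\QPSPACE$. This is exactly the case that corollary addresses, via \cite{EC:CheColSat25,EC:BMMMY25}. The remaining work is to confirm that the description of $M_k$ is produced by a fixed classical Turing machine in time $\poly(\lambda)$ from $(k,\lambda,\widetilde S_{\le\tau})$, so that it can be passed to the $\QPSPACE$ oracle with an explicit polynomial time bound.

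Summing the four steps gives polynomially many queries to $\mathcal V$, $S^\Phi$, and $\QPSPACE$, and polynomial total running time, which proves the claim.
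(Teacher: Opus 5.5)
Your proposal is correct and follows the paper's proof. The tomography on the $O(\log\lambda)$-qubit small layers runs in polynomial time, \cref{cor:qpspace-quantum-or} applies to the $2^\lambda$ measurements $M_k$ (unitary circuits with $\QPSPACE$ access only, generated uniformly from $k$ and $\widetilde S_{\leq\tau}$), and all remaining steps have polynomial size. Your explicit check that $M_k$ makes no queries to $S^\Phi$ or $\mathcal V$ is a useful detail that the paper leaves implicit.
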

\begin{proof}
The tomography procedure runs in polynomial time as shown in \cref{subsec:simulating-g}.
For fixed tomography descriptions, the preparation for $\widehat\mu_k$ and the $r=O(\lambda)$ swap tests give unitary QPT implementations of the measurements $M_k$ with access to $\QPSPACE$, whose circuits can be generated uniformly from $k$ and $\widetilde S_{\leq\tau}$.
There are $2^\lambda$ measurements, so \cref{cor:qpspace-quantum-or} gives a QPT implementation
of the quantum OR test with access to $\QPSPACE$.
All remaining operations have polynomial size.
\end{proof}

We prove the following lemmas in the next subsection.

\begin{lemma}[Haar case]
\label{lem:haar-case}
If $\mathcal V$ is a Haar-random unitary, the distinguisher
outputs $1$ with negligible probability.
\end{lemma}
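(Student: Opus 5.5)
The plan is to show that in the Haar case the purity test never fails, and that the quantum OR test accepts only with negligible probability. For the latter we bound, for every fixed key $k$ and fixed tomography output, the acceptance probability of $M_k$ by $2^{-c\lambda}$ with a constant $c$ large enough to beat the factor $4N=4\cdot2^\lambda$ from \cref{lem:quantum-or}.

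\textbf{Purity step.} If $\mathcal V$ is a unitary channel, then $J(\mathcal V)$ is pure. By \cref{lem:swap-test}, the purity test on the first two copies fails with probability $0$, so step 3 never outputs $1$.

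\textbf{Key step: a single $M_k$ rarely accepts.}
\begin{itemize}
\item Fix $k$ and $\widetilde S_{\leq\tau}$. Then $\widehat\mu_k$ is a fixed state, independent of the Haar unitary $U$.
\item The $r$ copies $J(\mathcal V)$ are the pure state $\ket{\omega_U}=(U\otimes\Id)\ket\Omega$, and each copy of $\widehat\mu_k$ is prepared independently.
\item Conditioned on $U$, the $r$ swap tests are i.i.d.\ Bernoulli trials with pass probability $p_U=(1+\bra{\omega_U}\widehat\mu_k\ket{\omega_U})/2$.
\item Averaging over Haar $U$ gives $\EE_U\ket{\omega_U}\bra{\omega_U}=\Id_A/2^n\otimes\Id_R/2^n$, the maximally mixed state on $AR$. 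Hence $\EE_U\bra{\omega_U}\widehat\mu_k\ket{\omega_U}=4^{-n}$.
\item The map $U\mapsto\bra{\omega_U}\widehat\mu_k\ket{\omega_U}$ is the acceptance probability of a one-query algorithm (prepare $\ket\Omega$, query $U$, project onto a purification of $\widehat\mu_k$). \cref{cor:haar-concentration} with $q=1$ and $d=2^n$ therefore gives $\Pr_U[\bra{\omega_U}\widehat\mu_k\ket{\omega_U}\geq 4^{-n}+1/10]\leq\exp(-\Omega(2^n))$, which is negligible since $n=\omega(\log\lambda)$.
\item Outside this bad event, $p_U\leq 0.55+o(1)<0.56$. \cref{lem:bernoulli-chernoff} with $a=2/3$ then bounds the probability that at least $2r/3$ tests pass by $\exp(-r\,D(2/3\|0.56))$. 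A rough computation gives $D(2/3\|0.56)\approx 0.024$, so with $r=\lceil143\lambda\rceil$ the bound is at most $2^{-4.9\lambda}$; this is presumably how the constant $143$ was chosen.
\end{itemize}

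\textbf{Main obstacle: combining over keys.} The subtle point is that the bad event on $U$ is not exponentially small in $\lambda$ uniformly across the $2^\lambda$ keys, yet \cref{lem:quantum-or} needs a bound $b$ that holds simultaneously for all $k$ for each fixed $U$. I would handle this as follows.
\begin{itemize}
\item Apply a union bound over $k$ to the event $\bra{\omega_U}\widehat\mu_k\ket{\omega_U}\geq 1/10+4^{-n}$. This costs a factor $2^\lambda\exp(-\Omega(2^n))$, which is negligible only if $2^n\gg\lambda$.
\item If that fails, use a cleaner argument: $\sum_k$ is not needed, since for fixed $U$ at most $O(1)$ many states $\widehat\mu_k$ can overlap $\ket{\omega_U}$ substantially is false in general.
\item Instead, condition on the good event $G=\{\forall k:\bra{\omega_U}\widehat\mu_k\ket{\omega_U}<1/10+4^{-n}\}$. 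Estimate $\Pr[\neg G]$ by the union bound, requiring $n\geq\log\lambda+\omega(1)$, which holds since $n=\omega(\log\lambda)$ gives $2^n=\lambda^{\omega(1)}$ and hence $2^\lambda e^{-\Omega(2^n)}$ is negligible.
\item On $G$, every $M_k$ accepts with probability $b\leq2^{-4.9\lambda}$, so the OR test accepts with probability at most $4\cdot2^\lambda b=\negl(\lambda)$.
\end{itemize}

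Finally, the tomography step and its failure probability are irrelevant here, since the bound holds for any fixed descriptions $\widetilde S_{\leq\tau}$. Summing the two contributions ($0$ from the purity test, negligible from the OR test) gives the claim.
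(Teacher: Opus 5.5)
Your proposal is correct and follows essentially the paper's argument: the purity test never fails, one-query Haar concentration (\cref{cor:haar-concentration}) is applied for each fixed $\widehat\mu_k$, and a union bound over the $2^\lambda$ keys costs $2^\lambda\exp(-\Omega(2^n))$, which is negligible since $n=\omega(\log\lambda)$; then Chernoff with $r=\lceil 143\lambda\rceil$ and the OR-test bound $4\cdot 2^\lambda b$ finish the proof. The differences are cosmetic: the paper applies concentration directly to the swap-test pass probability and thresholds it at $7/12$, which yields $2^{-3\lambda}$ and is what fixes the constant $143$ (you use $0.56$ instead), and the self-contradictory second bullet in your ``main obstacle'' paragraph should be deleted, since the plain union bound already handles all keys.
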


\begin{lemma}[Candidate case]
\label{lem:candidate-case}
Fix an oracle $\Phi$ for which the conclusion of
\cref{lem:purity-concentration-dichotomy} holds. If
$\mathcal V=\mathcal G_{k^\star}^{\mathcal O^\Phi}$ for a uniformly
random $k^\star$, then the distinguisher outputs 1 with non-negligible probability. More precisely,
$\Prb[\mathcal D^{\mathcal V,\mathcal O^\Phi}=1]
\geq\frac12\lambda^{-(4a+9)}-\negl(\lambda)
$ for all sufficiently large $\lambda$.
\end{lemma}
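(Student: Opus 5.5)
We prove \cref{lem:candidate-case} by splitting along the two alternatives of \cref{lem:purity-concentration-dichotomy}, which holds for the fixed $\Phi$ and all sufficiently large $\lambda$. Throughout, condition on the tomography succeeding. This fails with probability at most $(\tau+1)2^{-2\lambda}=\negl(\lambda)$, and on success \cref{lem:average-choi-preparation} gives $\normone{\widehat\mu_k-\mu_k}\le 1/200$ for every $k$.

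\emph{Mixed case.} Suppose $\EE_k[1-\Tr(\rho_{k,\Phi}^2)]\ge\lambda^{-(4a+9)}$. Since $\mathcal V=\mathcal G_{k^\star}^{\mathcal O^\Phi}$ with uniform $k^\star$, each copy of $J(\mathcal V)$ prepared in step 2 is $\rho_{k^\star,\Phi}$ for the same $k^\star$. By \cref{lem:swap-test}, the purity test in step 3 therefore fails with probability $\EE_{k^\star}[(1-\Tr(\rho_{k^\star,\Phi}^2))/2]\ge\frac12\lambda^{-(4a+9)}$, and failure makes the distinguisher output $1$. This gives the bound directly, up to the negligible tomography failure.

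\emph{Concentrated case.} Suppose $\EE_k\normone{\rho_{k,\Phi}-\mu_k}\le\lambda^{-1}$. By Markov's inequality, with probability at least $1-\lambda^{-1/2}$ over $k^\star$ we have $\normone{\rho_{k^\star,\Phi}-\mu_{k^\star}}\le\lambda^{-1/2}$, and hence $\normone{\rho_{k^\star,\Phi}-\widehat\mu_{k^\star}}\le 1/200+\lambda^{-1/2}\le 1/100$ for large $\lambda$. We then show that $M_{k^\star}$ accepts with probability close to $1$.

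\begin{enumerate}
\item Write $\rho=\rho_{k^\star,\Phi}$ and $\sigma=\widehat\mu_{k^\star}$. Each swap test on $\rho\otimes\sigma$ passes with probability $(1+\Tr(\rho\sigma))/2$.
\item Lower-bound $\Tr(\rho\sigma)\ge\Tr(\rho^2)-\normone{\rho-\sigma}$. Here we need $\rho$ nearly pure. If it is not, the purity test already fails, so we may add this to the same bookkeeping.
\item Concretely, $\Tr(\rho^2)\ge 1-\normone{\rho-\mu_{k^\star}}\cdot 2-\Tr$-defect of $\mu_{k^\star}$. Instead of chasing this, split once more on $\EE_k[1-\Tr(\rho_{k,\Phi}^2)]\le\lambda^{-(4a+9)}$, which holds in this branch unless we are already in the mixed case. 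Then, by Markov, $\Tr(\rho^2)\ge 1-\lambda^{-1}$ with high probability over $k^\star$.
\item With pass probability $p\ge(1+1-1/50)/2\ge 0.99>2/3$, the $r$ tests are independent because the copies are independent and fresh. \Cref{lem:bernoulli-chernoff} applied to the failures shows that fewer than $2r/3$ tests pass with probability $e^{-\Omega(r)}$. So $M_{k^\star}$ accepts with probability $1-2^{-\Omega(\lambda)}$.
\item The OR test then accepts with probability at least $(1-a)^2/7\ge 1/8$.
\end{enumerate}

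The purity test at step 3 of the distinguisher passes with probability at least $1-\lambda^{-1}$. Since it acts on separate copies, the OR test's probability is unaffected, giving overall acceptance probability at least roughly $\frac{1}{8}(1-O(\lambda^{-1/2}))$, which exceeds $\frac12\lambda^{-(4a+9)}$.

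\emph{Main obstacle.} The key issue is whether the Chernoff calculation for the swap tests is valid given the constant $r=\lceil 143\lambda\rceil$ and the threshold $2r/3$. This needs a passing probability bounded away from $2/3$, which in turn requires both near-purity of $\rho$ and closeness to $\sigma$. The simultaneous Markov bounds over $k^\star$ are the delicate bookkeeping. Both hold in the concentrated branch: near-purity because otherwise the mixed case applies, closeness from the bound above. Either way the acceptance probability is at least $\min(\frac12\lambda^{-(4a+9)},\Omega(1))-\negl(\lambda)$.
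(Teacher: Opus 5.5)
Your proposal is correct and follows essentially the same route as the paper. You split on whether $\EE_k[1-\Tr(\rho_{k,\Phi}^2)]\geq\lambda^{-(4a+9)}$, in which case the purity test alone suffices. Otherwise, the dichotomy gives concentration, Markov's inequality yields keys that are simultaneously near-pure and close to $\widehat\mu_k$, the swap tests pass with probability bounded away from $2/3$, and Chernoff plus the quantum OR test finish the argument.

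The differences are cosmetic: you use Markov thresholds $\lambda^{-1/2}$ and $\lambda^{-1}$ instead of the paper's $1/100$. The abandoned detour in your step 3 and the factor for the purity test passing are unnecessary, since a purity-test failure also makes the distinguisher output $1$.
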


We are now ready to prove our main theorem.

\begin{proof}[Proof of \cref{thm:prfs-pru-separation}]
Enumerate the candidate PRUs $(\mathcal G_i)_{i \in \mathbb N}$, which form a countable set because uniform oracle machines with explicit polynomial time bounds form a countable set.
For each fixed $\mathcal G_i$, the dichotomy in \cref{lem:purity-concentration-dichotomy} holds with probability one over $\Phi$. 
By countable subadditivity,
with probability one over the CHFS oracle,
the dichotomy holds for all candidates and the PRFSG security (which happens with probability 1 \cite{BEHMV26}) holds for the construction described in the introduction of this section. Fix $\Phi$ so that this holds.

It suffices to discuss the non-existence of secure PRUs. 
For every candidate $\mathcal G$, the oracle QPT distinguisher $\mathcal D$ distinguishes the candidate
from a Haar-random unitary channel with inverse-polynomial advantage by \cref{lem:haar-case,lem:candidate-case}.
Hence PRUs do not exist relative to $\mathcal O^\Phi$.
\end{proof}

\subsection{Analysis}\label{subsec:analysis}

\begin{proof}[Proof of \cref{lem:haar-case}]
Let $\mathcal V=\mathcal U$ for a Haar-random unitary $U$.
The purity test always passes because $J(\mathcal U)$ is pure.
It therefore suffices to bound the acceptance probability of
the quantum OR test.

Fix the original oracle and the tomography outcomes.
Every $\widehat\mu_k$ is then fixed independently of $U$.
For each key $k$, a swap test on $J(\mathcal U)$ and
$\widehat\mu_k$ passes with probability
$(1+\Tr(J(\mathcal U)\widehat\mu_k))/2$.
Haar invariance gives
\[
\EE_U J(\mathcal U)=\frac{\Id_{AR}}{2^{2n}},
\qquad
\EE_U\left[\frac{1+\Tr(J(\mathcal U)\widehat\mu_k)}2\right]
=\frac12+\frac1{2^{2n+1}}.
\]
This swap test uses one query to $U$. Hence
\cref{cor:haar-concentration} gives, for each $k$,
\[
\Prb_U\left[
\frac{1+\Tr(J(\mathcal U)\widehat\mu_k)}2>\frac7{12}
\right]
\leq\exp(-\Omega(2^n)).
\]
A union bound over the keys shows that, except with probability
at most $2^\lambda\exp(-\Omega(2^n))=\negl(\lambda)$,
the swap-test acceptance probability is at most $7/12$
for every key. We only consider this case. Then, by Chernoff inequality 
(\cref{lem:bernoulli-chernoff}) and $r=\lceil143\lambda\rceil$,
each $M_k$ accepts with probability at most $2^{-3\lambda}$.
Thus the quantum OR test accepts and the distinguisher outputs 1
with probability at most $4\cdot2^\lambda\cdot2^{-3\lambda}$ by \cref{lem:quantum-or}.
Overall, the distinguisher outputs $1$
with negligible probability.
\end{proof}

\begin{proof}[Proof of \cref{lem:candidate-case}]
We first consider the mixedness case, i.e., it holds that $\EE_k[1-\Tr(\rho_{k,\Phi}^2)]\geq\lambda^{-(4a+9)}$. In this case,
the purity test alone outputs $1$ with probability at least
$\frac12\lambda^{-(4a+9)}$.

Otherwise, \cref{lem:purity-concentration-dichotomy} gives
$\EE_k\normone{\rho_{k,\Phi}-\mu_k}\leq\lambda^{-1}$.
By Markov's inequality, all but a
$100\lambda^{-(4a+9)}+100\lambda^{-1}$ fraction of keys satisfy the near-purity and closeness conditions simultaneously:
\[
1-\Tr(\rho_{k,\Phi}^2)\leq\frac1{100},
\qquad
\normone{\rho_{k,\Phi}-\mu_k}\leq\frac1{100}.
\]
Condition on successful tomography. 
By \cref{lem:average-choi-preparation}, every such key satisfies
$\normone{\rho_{k,\Phi}-\widehat\mu_k}\leq3/200$, and consequently
\[
\Tr(\rho_{k,\Phi}\widehat\mu_k)
\geq\Tr(\rho_{k,\Phi}^2)
-\normone{\rho_{k,\Phi}-\widehat\mu_k}
\geq\frac{39}{40}.
\]
For each such key $k$, every swap test in $M_k$ passes
with probability at least $79/80$.
Since $M_k$ rejects only if fewer than $2r/3$ of its $r$
independent swap tests pass, the Chernoff-Hoeffding bound
(\cref{lem:bernoulli-chernoff}) gives a rejection probability
of at most $2^{-3\lambda}$.
For the actual key $k^\star$ satisfying both the near-purity and closeness conditions,
the quantum OR test (\cref{lem:quantum-or}) thus accepts with probability at least
$(1-2^{-3\lambda})^2/7$.

The purity test acts on separate copies of $J(\mathcal V)$, and its
failure also causes the distinguisher to output $1$. Therefore,
averaging over the key and accounting for tomography failure gives
\[
\Prb[\mathcal D^{\mathcal V,\mathcal O^\Phi}=1]
\geq
\left(1-100\lambda^{-(4a+9)}-\frac{100}{\lambda}\right)
\frac{(1-2^{-3\lambda})^2}{7}
-(\tau+1)2^{-2\lambda}
\]
where $(\tau+1)2^{-2\lambda}$ is due to the tomography failure probability from \cref{lem:average-choi-preparation}.
This is bounded below by a positive constant for all sufficiently
large $\lambda$, proving the claim.
\end{proof}

\fi

\bibliographystyle{plain}
\bibliography{cryptobib/abbrev3,cryptobib/crypto,ref}

\ifnum\submission=1
\appendix
\crefalias{section}{appendix}
\section{Missing Proofs}
\subsection{Proofs for preliminaries}\label{app:prelproofs}

We restate the lemmas and provide the proofs.
\begin{lemma}[\cref{lem:elementary-quantum-estimates}]
Let $\ket\Xi_{AB}$ be a purification of $\rho_A$.
If $\normone{\rho-\proj\psi}\leq\eps\leq2$ for a pure state
$\ket\psi_A$, then some unit vector $\ket\eta_B$ satisfies
$\norm{\ket\Xi-\ket\psi\ket\eta}\leq\sqrt\eps$.
\end{lemma}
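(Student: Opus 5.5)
The plan is to project the purification onto $\ket\psi$ on the $A$ side, normalize what remains on $B$, and check that it is a good tensor-product approximation. Concretely, set
\[
w\defeq(\bra\psi_A\otimes\Id_B)\ket\Xi_{AB}, \qquad p\defeq\norm w^2 .
\]
Since $\ket\Xi$ purifies $\rho_A$, we have $p=\bra\psi\rho\ket\psi=\Tr(\proj\psi\rho)$. This is exactly the acceptance probability of the two-outcome measurement $\{\proj\psi,\Id-\proj\psi\}$ on $\rho$. The same measurement accepts $\proj\psi$ with probability one. Because acceptance probabilities differ by at most the trace distance, we get
\[
1-p\leq\tfrac12\normone{\rho-\proj\psi}\leq\eps/2 .
\]

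Next, when $p>0$, I take $\ket\eta\defeq w/\sqrt p$, which is a unit vector on $B$. The key observation is that $\ket\psi\otimes w=(\proj\psi\otimes\Id_B)\ket\Xi$. Expanding the squared distance then gives
\[
\norm{\ket\Xi-\ket\psi\ket\eta}^2
=1+1-2\operatorname{Re}\bigl(\bra\psi\bra\eta\ket\Xi\bigr)
=2-\tfrac{2}{\sqrt p}\,\langle w,w\rangle
=2-2\sqrt p .
\]
We have $\sqrt p\geq p\geq 1-\eps/2$ since $p\le 1$, so this is at most $2-2(1-\eps/2)=\eps$. Taking square roots yields the claimed bound $\sqrt\eps$.

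The degenerate case $p=0$ is the only minor obstacle. There the bound $1-p\leq\eps/2$ forces $\eps\geq2$, so the hypothesis $\eps\le 2$ gives $\eps=2$. For any unit $\ket\eta$, the vector $\ket\psi\ket\eta$ is orthogonal to $\ket\Xi$, because $(\bra\psi\otimes\Id)\ket\Xi=0$. Hence the distance is exactly $\sqrt2=\sqrt\eps$, and any unit vector works.
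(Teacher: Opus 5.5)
Your proposal is correct and follows essentially the same route as the paper's proof: you project onto $\ket\psi$ to get $w$, bound $1-p\le\eps/2$ via the two-outcome measurement, set $\ket\eta=w/\sqrt p$ to get $2-2\sqrt p\le\eps$, and handle $p=0$ separately by forcing $\eps=2$. Your extra justifications, namely $\sqrt p\ge p$ and the orthogonality of $\ket\psi\ket\eta$ to $\ket\Xi$ when $p=0$, simply make explicit steps the paper leaves implicit.
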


\begin{proof}[Proof of \cref{lem:elementary-quantum-estimates}]
Let $w=(\bra\psi\otimes\Id_B)\ket\Xi$ and
$p=\norm w^2=\Tr(\proj\psi\rho)$.
The measurement $\{\proj\psi,\Id-\proj\psi\}$ accepts
$\proj\psi$ with probability one and $\rho$ with probability $p$.
Hence $1-p\leq\normone{\rho-\proj\psi}/2\leq\eps/2$.
If $p>0$, let $\ket\eta=w/\sqrt p$; then
$\norm{\ket\Xi-\ket\psi\ket\eta}^2=2-2\sqrt p\leq\eps$.
If $p=0$, then $\eps=2$ and any unit vector $\ket\eta$ suffices.
\end{proof}

\begin{lemma}[\cref{lem:rank-one-trace-norm-comparison}]
Let $\rho,\sigma$ be density operators and $P,Q$ rank-one projectors
on the same space. If $\epsilon_\rho=\normone{\rho-P}$ and
$\epsilon_\sigma=\normone{\sigma-Q}$, then $
\normone{\rho-\sigma}
\leq(1+\sqrt2)(\epsilon_\rho+\epsilon_\sigma)
+\sqrt2\normtwo{\rho-\sigma}.$
\end{lemma}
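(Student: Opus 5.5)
The plan is to pass from $P,Q$ to the density operators through a triangle inequality, and then use the fact that the difference of two rank-one projectors is controlled in Hilbert--Schmidt norm.

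\textbf{Step 1.} The triangle inequality for the trace norm gives
\[
\normone{\rho-\sigma}\leq\normone{\rho-P}+\normone{P-Q}+\normone{Q-\sigma}=\epsilon_\rho+\epsilon_\sigma+\normone{P-Q}.
\]

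\textbf{Step 2.} Since $P$ and $Q$ are rank-one projectors, $P-Q$ is Hermitian of rank at most two and traceless. Its eigenvalues are therefore $\pm s$ for some $s\geq0$. This gives $\normone{P-Q}=2s$ and $\normtwo{P-Q}=\sqrt2\,s$, hence
\[
\normone{P-Q}=\sqrt2\normtwo{P-Q}.
\]
This is exactly the pure-state identity recalled in the preliminaries, so it can be quoted directly.

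\textbf{Step 3.} Go back from $P,Q$ to $\rho,\sigma$ in Hilbert--Schmidt norm. By the triangle inequality,
\[
\normtwo{P-Q}\leq\normtwo{P-\rho}+\normtwo{\rho-\sigma}+\normtwo{\sigma-Q}.
\]
Using $\normtwo X\leq\normone X$, the outer two terms are at most $\epsilon_\rho$ and $\epsilon_\sigma$.

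\textbf{Step 4.} Substitute Steps 2 and 3 into Step 1:
\[
\normone{\rho-\sigma}\leq\epsilon_\rho+\epsilon_\sigma+\sqrt2\bigl(\epsilon_\rho+\epsilon_\sigma+\normtwo{\rho-\sigma}\bigr)=(1+\sqrt2)(\epsilon_\rho+\epsilon_\sigma)+\sqrt2\normtwo{\rho-\sigma}.
\]
This is the claimed bound.

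None of these steps is a real obstacle. The only substantive point is the norm identity for differences of rank-one projectors in Step 2, and it is already available from the preliminaries.
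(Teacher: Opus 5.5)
Your proposal is correct and follows essentially the same route as the paper: the trace-norm triangle inequality through $P,Q$, the identity $\normone{P-Q}=\sqrt2\normtwo{P-Q}$ for rank-one projectors, and a Hilbert--Schmidt triangle inequality back to $\rho,\sigma$ using $\normtwo X\leq\normone X$. Your eigenvalue justification of the rank-one identity (traceless Hermitian of rank at most two, so eigenvalues $\pm s$) is a fine self-contained version of the fact the paper simply cites.
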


\begin{proof}[Proof of \cref{lem:rank-one-trace-norm-comparison}]
The triangle inequality and
$\normone{P-Q}=\sqrt2\normtwo{P-Q}$ give
\[
\normone{\rho-\sigma}
\leq\epsilon_\rho+\epsilon_\sigma+\sqrt2\normtwo{P-Q}.
\]
Also, $\normtwo{P-Q}\leq
\epsilon_\rho+\normtwo{\rho-\sigma}+\epsilon_\sigma$.
Combining the two inequalities proves the claim.
\end{proof}

\begin{lemma}[\cref{lem:purity-to-unitary-choi}]
Every channel $\mathcal E$ from $A$ to itself satisfies
\[
\Delta(\mathcal E)
\leq e(\mathcal E)
\leq 5\sqrt{\Delta(\mathcal E)}.
\]
\end{lemma}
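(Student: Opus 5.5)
The plan is to prove both inequalities of \cref{lem:purity-to-unitary-choi} separately, writing $\rho=J(\mathcal E)$ and $\Delta=\Delta(\mathcal E)$.

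\emph{Lower bound $\Delta\leq e(\mathcal E)$.} By compactness of $\mathsf U(A)$, fix a minimizer $V$, so that $\normone{\rho-J(\mathcal V)}=e(\mathcal E)$. The state $J(\mathcal V)$ is pure, so the two-outcome measurement $\{J(\mathcal V),\Id-J(\mathcal V)\}$ accepts $J(\mathcal V)$ with certainty. It therefore accepts $\rho$ with probability $\Tr(J(\mathcal V)\rho)\geq 1-e(\mathcal E)/2$. Next, bound the purity from below by this overlap: since $\Tr(\rho^2)=\sum_i p_i^2$, we have $\Tr(\rho^2)\geq\max_i p_i\geq\Tr(P\rho)$ for any rank-one projector $P$, and also $\Tr(\rho^2)\geq\Tr(P\rho)^2$. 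Then $\Delta\leq 1-\Tr(J(\mathcal V)\rho)\leq e(\mathcal E)/2\leq e(\mathcal E)$.

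\emph{Upper bound $e(\mathcal E)\leq 5\sqrt\Delta$.} Let $p$ be the top eigenvalue of $\rho$ with eigenvector $\ket\gamma$.
\begin{enumerate}
\item Since $p\geq\sum_i p_i^2=1-\Delta$, we get $\normone{\rho-\proj\gamma}=2(1-p)\leq 2\Delta$.
\item The marginal of any Choi state on $R$ is maximally mixed, because $\mathcal E$ is trace preserving. Monotonicity of the trace norm under $\Tr_A$ then gives $\normone{\Tr_A\proj\gamma-\Id_R/d_A}\leq 2\Delta$.
\item Write $\ket\gamma=\sum_j\sqrt{s_j}\ket{a_j}\ket{r_j}$ in Schmidt form. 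Choose the unitary $V$ with $V\ket{\bar r_j}=\ket{a_j}$, where the bar is the complex conjugate in the basis defining $\ket\Omega$. By the transpose trick, $(V\otimes\Id)\ket\Omega=d_A^{-1/2}\sum_j\ket{a_j}\ket{r_j}$.
\item Compute
\[\norm{\ket\gamma-(V\otimes\Id)\ket\Omega}^2=\sum_j\bigl(\sqrt{s_j}-d_A^{-1/2}\bigr)^2\leq\sum_j\abs{s_j-1/d_A}\leq 2\Delta,\]
using $(\sqrt x-\sqrt y)^2\leq\abs{x-y}$. The last sum is the trace distance in step 2, because the Schmidt vectors diagonalize $\Tr_A\proj\gamma$ and the identity is diagonal in every basis.
\item For unit vectors, $\normone{\proj\gamma-\proj\omega}\leq 2\norm{\gamma-\omega}$, which gives at most $2\sqrt{2\Delta}$. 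The triangle inequality then yields $e(\mathcal E)\leq 2\Delta+2\sqrt{2\Delta}$.
\item Since $\Delta\leq1$, we have $\Delta\leq\sqrt\Delta$, so $e(\mathcal E)\leq(2+2\sqrt2)\sqrt\Delta\leq 5\sqrt\Delta$.
\end{enumerate}

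The main obstacle is step 3. It needs the Schmidt basis on $R$ to be matched to the reference basis of $\ket\Omega$ by a unitary acting only on $A$. This works because $\Tr_A$ of a Choi state is maximally mixed, so every orthonormal basis of $R$ is admissible and the transpose trick moves the basis change onto $A$. The remaining steps are routine.
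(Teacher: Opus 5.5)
Your upper bound $e(\mathcal E)\le5\sqrt{\Delta(\mathcal E)}$ is correct and follows the paper's argument step for step: top eigenvector, maximally mixed $R$-marginal, a unitary matched to the Schmidt bases, $(\sqrt x-\sqrt y)^2\le|x-y|$, and the triangle inequality. Your transpose-trick construction of $V$ correctly fills in what the paper only asserts.

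The lower bound, however, contains a false step. The claim $\Tr(\rho^2)=\sum_ip_i^2\ge\max_ip_i$ is backwards, since $\sum_ip_i^2\le\max_ip_i\sum_ip_i=\max_ip_i$. You use this correct direction yourself in step 1 of the upper bound. So $\Tr(\rho^2)\ge\Tr(P\rho)$ is unjustified, and your intermediate conclusion $\Delta(\mathcal E)\le e(\mathcal E)/2$ is actually false. For a counterexample, take $d_A=2$ and the Pauli channel whose Choi state $\rho$ is Bell-diagonal with weights $(1/2,1/4,1/4,0)$. Then $\Delta(\mathcal E)=1-(1/4+1/16+1/16)=5/8$. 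On the other hand, every pure $P$ satisfies $\normone{\rho-P}\ge2(1-\Tr(P\rho))\ge2(1-\lambda_{\max}(\rho))=1$. The weight-$1/2$ Bell state is a unitary Choi state and attains this value, so $e(\mathcal E)=1<2\Delta(\mathcal E)$.

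The repair uses the other inequality you already wrote down, $\Tr(\rho^2)\ge\Tr(J(\mathcal V)\rho)^2$, which follows from Cauchy--Schwarz. Combine it with $\Tr(J(\mathcal V)\rho)\ge1-e(\mathcal E)/2\ge0$ to get $\Delta(\mathcal E)\le1-(1-e(\mathcal E)/2)^2=e(\mathcal E)-e(\mathcal E)^2/4\le e(\mathcal E)$. This is exactly how the paper proves the lower bound. With that substitution your proof is complete and coincides with the paper's.
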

\begin{proof}[Proof of \cref{lem:purity-to-unitary-choi}]
Let $\rho=J(\mathcal E)$.
Choose a unitary channel $\mathcal V$ attaining the infimum in the definition of $e(\mathcal E)$.
Since $J(\mathcal V)$ is pure, the measurement
$\{J(\mathcal V),\Id-J(\mathcal V)\}$ accepts $J(\mathcal V)$
with probability one. Hence
$\Tr(J(\mathcal V)\rho)\geq
1-\normone{\rho-J(\mathcal V)}/2
=1-e(\mathcal E)/2$.
Since $\Tr(\rho^2)\geq\Tr(J(\mathcal V)\rho)^2$,
we have $\Delta(\mathcal E)\leq e(\mathcal E)$.

Conversely, let $p$ be the largest eigenvalue of $\rho$ and
$\ket\gamma$ a corresponding eigenvector.
Since $p\geq\Tr(\rho^2)=1-\Delta(\mathcal E)$,
$\normone{\rho-\proj\gamma}=2(1-p)\leq2\Delta(\mathcal E)$.
As $\Tr_A\rho=\Id_R/d_A$ and partial trace cannot increase
trace distance,
\[
\normone{\Tr_A\proj\gamma-\Id_R/d_A}
\leq\normone{\proj\gamma-\rho}\leq2\Delta(\mathcal E).
\]
Write $\ket\gamma$ in Schmidt form and choose a unitary $V$ such that
\[
\ket\gamma=\sum_{j=1}^{d_A}\sqrt{s_j}\ket{a_j}_A\ket{r_j}_R,
\qquad
(V\otimes\Id_R)\ket\Omega
=\frac1{\sqrt{d_A}}\sum_{j=1}^{d_A}\ket{a_j}_A\ket{r_j}_R.
\]
Then
\[
\norm{\ket\gamma-(V\otimes\Id_R)\ket\Omega}^2
=\sum_j\left(\sqrt{s_j}-\frac1{\sqrt{d_A}}\right)^2
\leq\sum_j\abs{s_j-1/d_A}\leq2\Delta(\mathcal E).
\]
The pure-state distance bound and the triangle inequality give
$e(\mathcal E)\leq2\Delta(\mathcal E)+2\sqrt{2\Delta(\mathcal E)}
\leq5\sqrt{\Delta(\mathcal E)}$,
where the last inequality uses $0\leq\Delta(\mathcal E)\leq1$.
\end{proof}

We allow the measurements to make $\QPSPACE$ oracle queries.
The following corollary is implicit in the previous works.

\begin{corollary}[QPSPACE implementation]
\label{cor:qpspace-quantum-or}
In \cref{lem:quantum-or}, suppose that $M_i$ is described by unitary QPT circuits with $\QPSPACE$ access that are generated uniformly from $i$ and a common polynomial-length
classical input. 
Suppose that $\xi$ consists of
polynomially many qubits and $\log N=\poly(\lambda)$.
Then the quantum OR test has a uniform QPT implementation
with access to $\QPSPACE$.
\end{corollary}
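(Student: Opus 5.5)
The plan is to reduce to the unitary OR construction of \cite[Appendix~A]{EC:CheColSat25} and to the measurement-extension argument of \cite[Lemma~5.2]{EC:BMMMY25}, checking that each ingredient is implementable in QPT with $\QPSPACE$ access.

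The quantum OR test of \cref{lem:quantum-or} applies a sequence of operations to the single copy of $\xi$ together with an auxiliary register. Concretely, it prepares an ancilla, and then, over a random ordering or a random number of steps, applies projections of the form $\Pi_i$ (the accepting projector of $M_i$) or the projector onto the uniform superposition over indices. Equivalently, in the unitary formulation, it applies a controlled "global" unitary
\[
\sum_{i}\proj{i}\otimes U_i
\]
together with phase estimation or reflection steps on an $N$-dimensional index register. Here $U_i$ is the unitary circuit of $M_i$ with its outcome qubit, and $N$ is exponential.

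I would first observe that each $U_i$ is a polynomial-size circuit generated uniformly from $(i,w)$, where $w$ is the common classical input, and that it may contain $\QPSPACE$ gates. Since $\log N=\poly(\lambda)$, the index register has polynomially many qubits. The controlled-on-$i$ unitary is therefore itself a polynomial-size uniform circuit with $\QPSPACE$ gates controlled by index qubits. Moreover, the $\QPSPACE$ oracle acts coherently on superpositions of $(M,t)$, so a query whose machine description depends on $i$ can be issued coherently.

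Next, I would note that the whole OR procedure is a unitary computation on polynomially many qubits with a poly-time classical description. The reflections about the uniform superposition of an exponential-size index set, and the required number of iterations, fit here; alternatively, one can run it with $\QPSPACE$ in the standard QPSPACE-style simulation. This is exactly what \cite[Appendix~A]{EC:CheColSat25} shows: it can be packaged as a single $\QPSPACE$ call (or polynomially many calls) on the pair (description, time bound). For measurements that themselves call $\QPSPACE$, I would invoke \cite[Lemma~5.2]{EC:BMMMY25}. That argument handles an inner $\QPSPACE$ query inside an outer $\QPSPACE$-computed circuit by absorbing the inner query's circuit into the description: $\QPSPACE$ gates invoked with uniformly generated machines compose into a single uniformly generated unitary of polynomial space. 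The stated acceptance guarantees then carry over from \cref{lem:quantum-or} up to the implementation being exact, or accurate up to negligible error.

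The main obstacle is this nested-oracle composition: ensuring that a $\QPSPACE$ call whose input circuit itself contains $\QPSPACE$ gates is still realizable. I would handle it by replacing each inner $\QPSPACE$ gate with the explicit polynomial-space unitary it implements, including the abort branch on invalid descriptions. This yields a classical description computable in the combined time bound, to which the outer call applies.
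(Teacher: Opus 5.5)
Your proposal is correct and takes essentially the same route as the paper. The paper invokes the unitary OR construction of \cite[Appendix~A]{EC:CheColSat25} executed via $\QPSPACE$, uses \cite[Lemma~5.2]{EC:BMMMY25} to extend it to binary measurements whose uniformly generated circuits themselves query $\QPSPACE$, and relies on $\log N=\poly(\lambda)$ and the polynomial size of $\xi$. Your additional remarks---coherently computing the description of $U_i$ from $i$ so that $\sum_i\proj{i}\otimes U_i$ and its $i$-dependent $\QPSPACE$ queries can be applied in superposition, and absorbing inner $\QPSPACE$ gates into a succinctly generated circuit for the outer call---are a reasonable unpacking of what the paper delegates to those citations.
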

\begin{proof}
The unitary OR construction of \cite[Appendix~A]{EC:CheColSat25} can be executed using $\QPSPACE$.
The proof of \cite[Lemma~5.2]{EC:BMMMY25} extends this implementation to binary measurements given uniformly by polynomial-size circuits with $\QPSPACE$ access.
Since the family $\{M_i\}_{i=1}^N$ has such implementations, $\xi$ has polynomially many qubits, and $\log N=\poly(\lambda)$, this result gives the claimed
QPT implementation with access to $\QPSPACE$.
\end{proof}

\subsection{Proofs for CHFS derivatives}\label{app:chfsproofs}

\begin{lemma}[\cref{lem:two-sided-chfs-sum}]
Let $P,Q$ be orthogonal projections of rank
at most $r$ on the CHFS query register together with an arbitrary ancillary register. Then
\[
    \sum_{x\in\{0,1\}^d}
    \sum_{\delta}
    \normtwo{
        Q\bigl(
            \partial_{x,\delta}S_d^\Phi
            \otimes\Id
        \bigr)P
    }^2
    \leq
    9r,
\]
where, for each $x$, $\delta$ ranges over the
basis $\{i\phi_x,
    e_{x,1},ie_{x,1},
    \ldots,
    e_{x,2^d-1},ie_{x,2^d-1}\}$.
\end{lemma}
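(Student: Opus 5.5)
The plan is to fix $x$ and treat the tangent directions in three groups: the pairs $(e_{x,j}, ie_{x,j})$ for $j\in[2^d-1]$, and the phase direction $i\phi_x$. Since $\partial_{x,\delta}S_d^\Phi$ is supported on the input-$x$ block, each group will be controlled by the projections $\Pi_{T_x}$ and $\Pi_{E_x}$. The mutual orthogonality of the spaces $E_x\oplus T_x$ then lets the sum over $x$ collapse to traces of $P$ and $Q$.

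\textbf{Step 1 (each pair $(e_{x,j},ie_{x,j})$).} By \cref{lem:explicit-chfs-derivatives}, these derivatives are $-\sqrt2(M+M^\dagger)$ and $i\sqrt2(M-M^\dagger)$ with $M=\ketbra{u_x}{t_{x,j}}$. Sandwiching by $Q$ and $P$ and applying the parallelogram identity $\normtwo{a+b}^2+\normtwo{a-b}^2=2\normtwo a^2+2\normtwo b^2$ gives
\[
4\normtwo{Q(M\otimes\Id)P}^2+4\normtwo{Q(M^\dagger\otimes\Id)P}^2 .
\]

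\textbf{Step 2 (sum over $j$).} For the first term, expand $\normtwo{Q(M\otimes\Id)P}^2=\Tr\bigl(P(M^\dagger\otimes\Id)Q(M\otimes\Id)P\bigr)$ and use $Q\le\Id$. This leaves $\Tr\bigl(P(\proj{t_{x,j}}\otimes\Id)P\bigr)$, since $\braket{u_x}{u_x}=1$. Summing over $j$ gives $\Tr\bigl(P(\Pi_{T_x}\otimes\Id)P\bigr)$. The second term is handled the same way after taking adjoints, which swaps the roles of $P$ and $Q$, and it is bounded by $\Tr\bigl(Q(\Pi_{T_x}\otimes\Id)Q\bigr)$.

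\textbf{Step 3 (phase direction).} A direct computation gives $(\partial_{x,i\phi_x}S_d^\Phi)^\dagger(\partial_{x,i\phi_x}S_d^\Phi)=\Pi_{E_x}$, because $A_x\perp B_x$. Hence this term is at most $\Tr\bigl(P(\Pi_{E_x}\otimes\Id)P\bigr)$.

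\textbf{Step 4 (sum over $x$).} The projections $\Pi_{E_x}$ are mutually orthogonal, so $\sum_x\Pi_{E_x}\le\Id$, and likewise $\sum_x\Pi_{T_x}\le\Id$. Together with $\Tr(R\Pi R)\le\Tr R$, this bounds the total by
\[
4\Tr P+4\Tr Q+\Tr P\le 9r .
\]

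The main point requiring care is Step 2. We need to use $Q\le\Id$ on exactly one side, so that each of the two terms costs only the trace of one projection rather than something depending on $2^d$. We also need to check that the cross terms really vanish in the parallelogram step. They do, because the identity is exact for $a=M+M^\dagger$, $b=M-M^\dagger$ up to the phase factor $i$, which does not change the Hilbert--Schmidt norm.
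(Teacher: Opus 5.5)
Your proposal is correct and follows the paper's proof step for step. The shared steps are: the parallelogram identity on each pair $(e_{x,j},ie_{x,j})$; the bound $Q\le\Id$, and its adjoint version with $P,Q$ swapped, which reduces everything to $\Tr(P(\Pi_{T_x}\otimes\Id)P)$ and $\Tr(Q(\Pi_{T_x}\otimes\Id)Q)$; the identity $(\partial_{x,i\phi_x}S_d^\Phi)^\dagger(\partial_{x,i\phi_x}S_d^\Phi)=\Pi_{E_x}$ for the phase direction; and orthogonality across $x$, giving $5\Tr P+4\Tr Q\le 9r$. Only your closing remark is slightly garbled: the parallelogram identity is applied to $a=Q(M\otimes\Id)P$ and $b=Q(M^\dagger\otimes\Id)P$, as you correctly did in Step 1, not to $M\pm M^\dagger$.
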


\begin{proof}[Proof of \cref{lem:two-sided-chfs-sum}]
We suppress the subscript $x$ in this proof; the distinct $x$'s are only
considered when we use the summation over $x\in\bit^d$.
By \cref{lem:explicit-chfs-derivatives} and the parallelogram identity, we have
\begin{align}
    &\normtwo{
        Q\bigl(
            \partial_{e_j}S_d^\Phi\otimes\Id
        \bigr)P
    }^2
    +
    \normtwo{
        Q\bigl(
            \partial_{ie_j}S_d^\Phi\otimes\Id
        \bigr)P
    }^2
    \nonumber\\
    &\qquad=
    4\normtwo{
        Q\bigl(
            \ketbra{u}{t_j}\otimes\Id
        \bigr)P
    }^2
    +
    4\normtwo{
        Q\bigl(
            \ketbra{t_j}{u}\otimes\Id
        \bigr)P
    }^2.
    \label{eq:real-imaginary-parallelogram}
\end{align}
Since $0\leq Q\leq\Id$, we have
\begin{align*}
    \sum_{j=1}^{2^d-1}
    \normtwo{
        Q\bigl(
            \ketbra{u}{t_j}\otimes\Id
        \bigr)P
    }^2
    &=
    \sum_{j=1}^{2^d-1}
    \Tr\left(
        P
        \bigl(\ketbra{t_j}{u}\otimes\Id\bigr)
        Q
        \bigl(\ketbra{u}{t_j}\otimes\Id\bigr)
        P
    \right)
    \\
    \leq&
    \sum_{j=1}^{2^d-1}
    \Tr\left(
        P\bigl(\proj{t_j}\otimes\Id\bigr)P
    \right)
    =
    \Tr\left(
        P(\Pi_T\otimes\Id)P
    \right).
\end{align*}
Similarly, by invariance of the Hilbert--Schmidt norm under adjoint and
the same calculation with $P$ and $Q$ interchanged,
\[
    \sum_{j=1}^{2^d-1}
    \normtwo{
        Q\bigl(
            \ketbra{t_j}{u}\otimes\Id
        \bigr)P
    }^2
    =
    \sum_{j=1}^{2^d-1}
    \normtwo{
        P\bigl(
            \ketbra{u}{t_j}\otimes\Id
        \bigr)Q
    }^2
    \leq
    \Tr\left(
        Q(\Pi_T\otimes\Id)Q
    \right).
\]
Therefore, summing \cref{eq:real-imaginary-parallelogram} over $j$ gives, for each $x$,
\begin{align*}
    &\sum_{j=1}^{2^d-1}
    \left(
        \normtwo{
            Q\bigl(
                \partial_{e_j}S_d^\Phi\otimes\Id
            \bigr)P
        }^2
        +
        \normtwo{
            Q\bigl(
                \partial_{ie_j}S_d^\Phi\otimes\Id
            \bigr)P
        }^2
    \right)
    \\
    &\qquad\leq
    4\Tr\left(
        P(\Pi_T\otimes\Id)P
    \right)
    +
    4\Tr\left(
        Q(\Pi_T\otimes\Id)Q
    \right).
\end{align*}

For the phase direction,
\cref{lem:explicit-chfs-derivatives} gives $\bigl(
        \partial_{i\phi}S_d^\Phi
    \bigr)^\dagger
    \bigl(
        \partial_{i\phi}S_d^\Phi
    \bigr)
    =
    \Pi_E$ and
hence,
\[
    \normtwo{
        Q\bigl(
            \partial_{i\phi}S_d^\Phi\otimes\Id
        \bigr)P
    }^2
    \leq
    \Tr\left(
        P(\Pi_E\otimes\Id)P
    \right).
\]

Restoring the subscript $x$ and summing over $x$, the mutual
orthogonality of the spaces $E_x\oplus T_x$ gives
\[
\sum_{x\in\{0,1\}^d}
    \sum_{\delta}
    \normtwo{
        Q\bigl(
            \partial_{x,\delta}S_d^\Phi\otimes\Id
        \bigr)P
    }^2
    \leq
    5\Tr P+4\Tr Q
    \leq
    9r,
\]
where we used $\sum_x\Pi_{E_x}\leq\Id$,
$\sum_x\Pi_{T_x}\leq\Id$, and
$\Tr(R\Pi R)\leq\Tr R$ for orthogonal projections $R$ and $\Pi$. The final inequality is due to the rank condition.
\end{proof}

\begin{lemma}[\cref{lem:one-sided-chfs-sum}]
Let $I$ be any register. For every vector $\xi$ on the CHFS query
register together with $I$,
\[
    \sum_{x\in\{0,1\}^d}
    \sum_{\delta}
    \norm{
        \bigl(
            \partial_{x,\delta}S_d^\Phi
            \otimes\Id_I
        \bigr)\xi
    }^2
    \leq
    5\cdot 2^d\norm\xi^2,
\]
where, for each $x$, $\delta$ ranges over the
basis $\{i\phi_x,
    e_{x,1},ie_{x,1},
    \ldots,
    e_{x,2^d-1},ie_{x,2^d-1}\}$.
\end{lemma}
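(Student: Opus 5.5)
The plan is to turn the left-hand side into a single quadratic form and then bound the operator in the middle. By definition of the Euclidean norm,
\[
\sum_{x,\delta}\norm{(\partial_{x,\delta}S_d^\Phi\otimes\Id_I)\xi}^2
=\Bigl\langle\xi,\Bigl(\sum_{x,\delta}(\partial_{x,\delta}S_d^\Phi)^\dagger(\partial_{x,\delta}S_d^\Phi)\otimes\Id_I\Bigr)\xi\Bigr\rangle.
\]
It therefore suffices to prove the operator inequality $\sum_{x,\delta}(\partial_{x,\delta}S_d^\Phi)^\dagger(\partial_{x,\delta}S_d^\Phi)\le 5\cdot 2^d\,\Id$ on the query register. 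Tensoring with $\Id_I$ preserves this inequality, and the claim follows.

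First fix $x$ and suppress the subscript. The explicit derivatives are given in \cref{lem:explicit-chfs-derivatives}.

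\textbf{Real-imaginary pair.} For each $j$, write $M=\ketbra{u}{t_j}$. Then
\[
\partial_{e_j}S=-\sqrt2(M+M^\dagger),\qquad \partial_{ie_j}S=i\sqrt2(M-M^\dagger).
\]
Summing the two Gram operators, the cross terms $M^2$ and $(M^\dagger)^2$ cancel. This uses $\langle u,t_j\rangle=0$, which holds because $t_j$ has last qubit $1$ and component $e_j\perp\phi$, while $u$ lies in $E=\operatorname{span}\{A,B\}$. The sum is therefore $4(M^\dagger M+MM^\dagger)=4(\proj{t_j}+\proj u)$.

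\textbf{Phase direction.} Similarly $(\partial_{i\phi}S)^\dagger(\partial_{i\phi}S)=\proj A+\proj B=\Pi_E$, since $A\perp B$ (they differ in the last qubit).

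\textbf{Sum over the tangent basis.} Summing over $j$ gives
\[
\sum_\delta(\partial_\delta S)^\dagger(\partial_\delta S)=4(2^d-1)\proj u+4\Pi_T+\Pi_E\le(4\cdot2^d-3)\Pi_E+4\Pi_T,
\]
where the inequality uses $\proj u\le\Pi_E$.

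Now restore $x$. Each term is supported in the input-$x$ block, and the subspaces $E_x\oplus T_x$ are mutually orthogonal with $\Pi_{E_x}+\Pi_{T_x}\le\proj x\otimes\Id$. Since $4\cdot2^d-3\le5\cdot2^d$ and $4\le5\cdot2^d$, each block contributes at most $5\cdot2^d(\Pi_{E_x}+\Pi_{T_x})$. Summing over $x$ then gives at most $5\cdot2^d\,\Id$.

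The main point requiring care is the orthogonality bookkeeping: verifying $u\perp t_j$, $t_j\perp t_k$, and $A\perp B$, so that the cross terms vanish and the Gram operators are genuine projectors. Once that is checked, everything else is direct summation.
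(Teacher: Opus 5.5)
Your proposal is correct and follows the paper's proof essentially step for step. It uses the same reduction to the quadratic form of $\sum_{x,\delta}(\partial_{x,\delta}S_d^\Phi)^\dagger(\partial_{x,\delta}S_d^\Phi)$, the same Gram identities $4(\proj u+\proj{t_j})$ and $\Pi_E$, the same per-block bound $(4\cdot2^d-3)\Pi_{E_x}+4\Pi_{T_x}$, and the same summation over the mutually orthogonal blocks.

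One small correction to your bookkeeping. The cross terms $M^2$ and $(M^\dagger)^2$ cancel between the $e_j$ and $ie_j$ directions because of the sign pattern alone, and $MM^\dagger+M^\dagger M=\proj u+\proj{t_j}$ needs only $\norm{u}=\norm{t_j}=1$. So $u\perp t_j$ is not used there. The orthogonality relations are needed only to identify $\Pi_E$ and $\Pi_T$ as projections and for the final block summation.
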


\begin{proof}[Proof of \cref{lem:one-sided-chfs-sum}]
We suppress the subscript $x$ in this proof; the distinct $x$'s are only
considered when we use the summation over $x\in\bit^d$.
By \cref{lem:explicit-chfs-derivatives}, we have
\[
    \bigl(
        \partial_{e_j}S_d^\Phi
    \bigr)^\dagger
    \bigl(
        \partial_{e_j}S_d^\Phi
    \bigr)
    +
    \bigl(
        \partial_{ie_j}S_d^\Phi
    \bigr)^\dagger
    \bigl(
        \partial_{ie_j}S_d^\Phi
    \bigr)
    =
    4\left(
        \proj u+\proj{t_j}
    \right).
\]
Also, for the phase direction, we have $
    \bigl(
        \partial_{i\phi}S_d^\Phi
    \bigr)^\dagger
    \bigl(
        \partial_{i\phi}S_d^\Phi
    \bigr)
    =
    \Pi_E.$
Therefore, summing over the tangent basis gives, for each $x$,
\begin{align*}
    \sum_{\delta}
    \bigl(
        \partial_{\delta}S_d^\Phi
    \bigr)^\dagger
    \bigl(
        \partial_{\delta}S_d^\Phi
    \bigr)
    &=
    4(2^d-1)\proj u
    +
    4\Pi_T
    +
    \Pi_E
    \\
    &\leq
    (4\cdot 2^d-3)\Pi_E
    +
    4\Pi_T,
\end{align*}
where we used $\proj u\leq\Pi_E$.

Summing over all $x$, the mutual
orthogonality of the spaces $E_x\oplus T_x$ gives
\[
    \sum_{x\in\{0,1\}^d}
    \sum_{\delta}
    \bigl(
        \partial_{x,\delta}S_d^\Phi
    \bigr)^\dagger
    \bigl(
        \partial_{x,\delta}S_d^\Phi
    \bigr)
    \leq
    5\cdot 2^d\Id.
\]
Hence, expanding the left hand side becomes
\[
\sum_{x\in\{0,1\}^d}
    \sum_{\delta}
    \norm{
        \bigl(
            \partial_{x,\delta}S_d^\Phi
            \otimes\Id_I
        \bigr)\xi
    }^2
    =
    \left\langle
        \xi,
        \left(
            \sum_{x\in\{0,1\}^d}
            \sum_{\delta}
            \bigl(
                \partial_{x,\delta}S_d^\Phi
            \bigr)^\dagger
            \bigl(
                \partial_{x,\delta}S_d^\Phi
            \bigr)
            \otimes\Id_I
        \right)
        \xi
    \right\rangle
\]
which is bounded above by $5\cdot 2^d\norm\xi^2$, proving the lemma.
\end{proof}

\begin{lemma}[\cref{lem:gradient-energy-bound}]
For every CHFS state family $\Phi$,
\[
 \|\nabla F_d^{\mathcal A}(\Phi)\|_{\mathrm{HS}}^2=
\sum_{x\in\{0,1\}^d}
\sum_{\delta}
\normtwo{
\partial_{x,\delta}\rho_\Phi
}^2
\leq
72q_d^2
\bigl(
1+2^d\eta(\Phi)
\bigr).
\]
\end{lemma}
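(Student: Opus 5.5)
We bound each directional derivative $\partial_{x,\delta}\rho_\Phi$ by splitting it along the query positions and along the decomposition $\ket{\Xi_\Phi}=\ket{\Xi_\Phi^{\mathrm{id}}}+\ket{e_\Phi}$. By the Leibniz formula, $\partial_{x,\delta}\rho_\Phi=\sum_{t=1}^{q_d}\Tr_B(\ketbra{\Delta_{t,x,\delta}}{\Xi_\Phi}+\ketbra{\Xi_\Phi}{\Delta_{t,x,\delta}})$. Cauchy--Schwarz over $t$ then costs a factor $q_d$, so it suffices to show that for each fixed $t$ the sum over $(x,\delta)$ of the squared Hilbert--Schmidt norm of one summand is $O(1+2^d\eta(\Phi))$.

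For a fixed $t$, set $X_{t,x,\delta}=\Tr_B(\ketbra{\Delta_{t,x,\delta}}{\Xi^{\mathrm{id}}_\Phi})$ and $Y_{t,x,\delta}=\Tr_B(\ketbra{\Delta_{t,x,\delta}}{e_\Phi})$. The summand is then $X+X^\dagger+Y+Y^\dagger$, and its squared norm is at most $8\normtwo{X}^2+8\normtwo{Y}^2$.

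The $X$ term carries the low-rank structure. Because $\ket{\Xi^{\mathrm{id}}_\Phi}$ is the product $(V_\Phi\otimes\Id_R)\ket\Omega\otimes\ket{\chi_\Phi}_B$, the partial trace contracts $B$ against $\bra{\chi_\Phi}$. Expanding $\ket\Omega$ in the basis $\{\ket a\}$, the operators $\ketbra ab_R$ are Hilbert--Schmidt orthonormal, and I expect
\[
\normtwo{X_{t,x,\delta}}^2=\frac1{d_A}\normtwo{Q_{t,\Phi}\bigl(\partial_{x,\delta}S^\Phi\bigr)P_{t,\Phi}}^2 .
\]
Here $P_{t,\Phi}=W_{<t}(\Id_A\otimes\proj0_B)W_{<t}^\dagger$ and $Q_{t,\Phi}=W_{>t}^\dagger(\Id_A\otimes\proj{\chi_\Phi})W_{>t}$, both projections of rank $d_A$. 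Lemma~\ref{lem:two-sided-chfs-sum} with $r=d_A$ then gives $\sum_{x,\delta}\normtwo{X}^2\le 9$, and the factor $d_A$ cancels exactly. This computation is the crux, since it is where the rank of the derivative is compared with the full dimension. The main point to verify is that the sum over $b$ of a $b$-independent norm produces $d_A$, which cancels one factor $1/d_A$.

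For the $Y$ term, the inequality $\normtwo{\Tr_B\ketbra ue}\le\norm u\norm e$ together with $\norm{e_\Phi}\le\sqrt{\eta(\Phi)}$ gives $\normtwo{Y}^2\le\eta(\Phi)\norm{\Delta_{t,x,\delta}}^2$. Since $W_{>t}$ is unitary, $\norm{\Delta_{t,x,\delta}}$ equals the norm of $\partial_{x,\delta}S^\Phi$ applied to the unit vector $W_{<t}\ket\Omega\ket0$. Lemma~\ref{lem:one-sided-chfs-sum}, with $R$ and all non-query registers absorbed into the ancilla, then bounds $\sum_{x,\delta}\normtwo{Y}^2\le 5\cdot2^d\eta(\Phi)$.

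Summing over $x,\delta$ and over $t$ gives $q_d^2\cdot 8(9+5\cdot2^d\eta(\Phi))\le 72q_d^2(1+2^d\eta(\Phi))$, which is the claimed bound.
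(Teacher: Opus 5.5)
Your proposal is correct and follows the paper's proof essentially step for step. The shared steps are Cauchy--Schwarz over the $q_d$ query positions, the split $\ket{\Xi_\Phi}=\ket{\Xi_\Phi^{\mathrm{id}}}+\ket{e_\Phi}$ with the $8\normtwo{X_{t,x,\delta}}^2+8\normtwo{Y_{t,x,\delta}}^2$ bound, the identity $\normtwo{X_{t,x,\delta}}^2=d_A^{-1}\normtwo{Q_{t,\Phi}(\partial_{x,\delta}S^\Phi)P_{t,\Phi}}^2$ fed into the two-sided lemma with $r=d_A$, and the one-sided lemma for the error term, giving the same constants $8q_d^2(9+5\cdot 2^d\eta(\Phi))$.
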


\begin{proof}[Proof of \cref{lem:gradient-energy-bound}]
By the preceding derivative identity and Cauchy-Schwarz,
\begin{align}
\normtwo{
\partial_{x,\delta}\rho_\Phi
}^2
\leq
q_d\sum_{t=1}^{q_d}
\normtwo{
\Tr_B\left(
\ketbra{\Delta_{t,x,\delta}}{\Xi_\Phi}
+
\ketbra{\Xi_\Phi}{\Delta_{t,x,\delta}}
\right)
}^2.
\label{eq:query-cauchy-schwarz}
\end{align}
Recall $\ket{\Xi_\Phi}
    =
    \ket{\Xi_\Phi^{\mathrm{id}}}
    +
    \ket{e_\Phi}$. To deal with them separately, define
\[
X_{t,x,\delta}
\defeq
\Tr_B\left(
\ketbra{\Delta_{t,x,\delta}}
{\Xi_\Phi^{\mathrm{id}}}
\right),
Y_{t,x,\delta}
\defeq
\Tr_B\left(
\ketbra{\Delta_{t,x,\delta}}{e_\Phi}
\right).
\]
Then, we have $\Tr_B\left(
\ketbra{\Delta_{t,x,\delta}}{\Xi_\Phi}
+
\ketbra{\Xi_\Phi}{\Delta_{t,x,\delta}}
\right)
=
X_{t,x,\delta}
+
X_{t,x,\delta}^\dagger
+
Y_{t,x,\delta}
+
Y_{t,x,\delta}^\dagger,$
and therefore
\begin{align}
\normtwo{
\Tr_B\left(
\ketbra{\Delta_{t,x,\delta}}{\Xi_\Phi}
+
\ketbra{\Xi_\Phi}{\Delta_{t,x,\delta}}
\right)
}^2
\leq
8\normtwo{X_{t,x,\delta}}^2
+
8\normtwo{Y_{t,x,\delta}}^2.
\label{eq:z-product-error-bound}
\end{align}

We first bound the terms $X_{t,x,\delta}$.
Let $\{\ket a\}_{a=1}^{d_A}$ be the orthonormal basis of $A$ used to
define $\ket\Omega_{AR}$. Define
\[
    \ket{s_a}
    \defeq
    W_{>t,\Phi}
    \bigl(\partial_{x,\delta}S^\Phi\bigr)
    W_{<t,\Phi}
    \bigl(\ket a_A\ket0_B\bigr).
\]
Then
\[
    \ket{\Delta_{t,x,\delta}}
    =
    \frac1{\sqrt{d_A}}
    \sum_{a=1}^{d_A}
    \ket{s_a}\ket a_R,\quad
    \ket{\Xi_\Phi^{\mathrm{id}}}
    =
    \frac1{\sqrt{d_A}}
    \sum_{b=1}^{d_A}
    V_\Phi\ket b_A\ket{\chi_\Phi}_B\ket b_R.
\]
Therefore,
\[
    X_{t,x,\delta}
    =
    \frac1{d_A}
    \sum_{a,b=1}^{d_A}
    (\Id_A\otimes\bra{\chi_\Phi})\ket{s_a}
    \bra{ b}V_\Phi^\dagger
    \otimes\ketbra {a} {b}_R.
\]
Since the elements $\{\ketbra {a} {b}_R\}_{a,b}$ can be regarded as orthonormal matrices with respect to the
Hilbert-Schmidt inner product,
\[
    \normtwo{X_{t,x,\delta}}^2
    =
    \frac1{d_A^2}
    \sum_{a,b=1}^{d_A}
    \norm{
        (\Id_A\otimes\bra{\chi_\Phi})\ket{s_a}
    }^2.
\]

Define the projections regarding the initial states and the final states and the operators $W_{<t}$ and $W_{>t}$ by
\begin{align*}
P_{t,\Phi}
&\defeq
W_{<t,\Phi}
\bigl(\Id_A\otimes\proj{0}_B\bigr)
W_{<t,\Phi}^\dagger,
\\
Q_{t,\Phi}
&\defeq
W_{>t,\Phi}^\dagger
\bigl(
\Id_A\otimes
\ketbra{\chi_\Phi}{\chi_\Phi}_B
\bigr)
W_{>t,\Phi}.
\end{align*}
Both $P_{t,\Phi}$ and $Q_{t,\Phi}$ are orthogonal projections of rank
$d_A$.
Then we can write
\begin{align*}
\normtwo{X_{t,x,\delta}}^2
&=
\frac1{d_A^2}
\sum_{a,b}
\norm{
(\Id_A\otimes\bra{\chi_\Phi})\ket{s_a}
}^2
\\
&=
\frac1{d_A}
\sum_a
\norm{
(\Id_A\otimes\bra{\chi_\Phi})
W_{>t,\Phi}
\bigl(\partial_{x,\delta}S^\Phi\bigr)
W_{<t,\Phi}
\bigl(\ket a_A\ket0_B\bigr)
}^2
\\
&=
\frac1{d_A}
\Tr\left(
P_{t,\Phi}
\bigl(\partial_{x,\delta}S^\Phi\bigr)^\dagger
Q_{t,\Phi}
\bigl(\partial_{x,\delta}S^\Phi\bigr)
P_{t,\Phi}
\right)
\\
&=
\frac1{d_A}
\normtwo{
Q_{t,\Phi}
\bigl(\partial_{x,\delta}S^\Phi\bigr)
P_{t,\Phi}
}^2.
\end{align*}
Since $P_{t,\Phi}$ and $Q_{t,\Phi}$ have rank $d_A$,
\cref{lem:two-sided-chfs-sum} gives
\begin{align}
\sum_{x,\delta}
\normtwo{X_{t,x,\delta}}^2
=
\frac1{d_A}
\sum_{x,\delta}
\normtwo{
Q_{t,\Phi}
\bigl(\partial_{x,\delta}S^\Phi\bigr)
P_{t,\Phi}
}^2\leq
9.
\label{eq:product-derivative-total}
\end{align}

For the error term,
$\norm{e_\Phi}\leq\sqrt{\eta(\Phi)}$ gives
\[
\normtwo{Y_{t,x,\delta}}^2
=\normtwo{\Tr_B\left(
\ketbra{\Delta_{t,x,\delta}}{e_\Phi}
\right)}^2
\leq \norm{\Delta_{t,x,\delta}}^2 \norm{e_\Phi}^2
\leq
\eta(\Phi)
\norm{\Delta_{t,x,\delta}}^2
\]
where we use $\normtwo{\Tr_B(\ketbra{x}{y})} \le \norm{x}\norm{y}.$
The pre-query state
\[
(W_{<t,\Phi}\otimes\Id_R)
\bigl(\ket\Omega_{AR}\ket0_B\bigr)
\]
is a unit vector. Including the reference register and all other
non-query registers in the ancillary register of
\cref{lem:one-sided-chfs-sum}, and using invariance of the norm under
$W_{>t,\Phi}$, gives
\begin{align}
\sum_{x,\delta}
\normtwo{Y_{t,x,\delta}}^2
\leq
5\cdot 2^d\eta(\Phi).
\label{eq:error-derivative-total}
\end{align}

Summing \cref{eq:query-cauchy-schwarz} over $x,\delta$, and using
\cref{eq:z-product-error-bound,eq:product-derivative-total,eq:error-derivative-total},
gives
\[
8q_d^2
\bigl(
9+5\cdot 2^d\eta(\Phi)
\bigr)
\leq
72q_d^2
\bigl(
1+2^d\eta(\Phi)
\bigr),
\]
which proves the claim.
\end{proof}

\subsection{Proofs for the distinguisher}\label{app:proofsdisting}

The following lemma ensures the accuracy of the above circuit.

\begin{lemma}[\cref{lem:average-choi-preparation}]
Given $\lambda,k$ and the descriptions
$\widetilde S_{\leq\tau}$, the above procedure prepares a purification of $\widehat\mu_k$.
Except with probability at most $(\tau+1)2^{-2\lambda}$ over
tomography, for every key $k$,
$\normone{\widehat\mu_k-\mu_k}\leq1/200$.
\end{lemma}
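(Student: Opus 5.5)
The plan is a triangle inequality through the intermediate state $\EE_{\Phi'_{>\tau}}J(\widetilde{\mathcal G}_k^{\Phi'_{>\tau}})$, the averaged Choi state of the channel in which the small-length queries are replaced by the tomography reconstructions $\widetilde S_d$ and the large-length queries use fresh Haar states. The preparation claim itself is by construction. Fix $z$: steps 1--3 of the circuit run the purified computation of $\widetilde{\mathcal G}_k$ on $\ket\Omega_{AR}$, with each large query implemented as $S_{\psi_{d,x,z}}$. The seed and the circuit description are uncomputed, so the $Z$ register stays in the computational basis state $\ket z$ and acts only as a control. Since $Z$ starts in the uniform superposition, tracing out $Z$ and all ancillas yields the uniform mixture over $z$ of the Choi states, which is $\widehat\mu_k$.

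\textbf{Step 1: design error.} Fix the tomography outputs $\widetilde S_{\leq\tau}$. These play the role of fixed oracle gates at lengths $d\leq\tau$, so they fall under the clause of \cref{lem:chfs-design-simulation} that allows the oracle at chosen input lengths to be fixed. Apply that lemma to the algorithm that runs $\widetilde{\mathcal G}_k$ on half of $\ket\Omega$ and outputs $AR$. It makes at most $q\leq\lambda^a$ queries, each of length at most $\lceil\lambda^a\rceil$. The family $s_z$ is $2q$-wise independent, so the lemma's bound applies and gives
\[
\normone{\widehat\mu_k-\EE_{\Phi'_{>\tau}}J(\widetilde{\mathcal G}_k^{\Phi'_{>\tau}})}\leq KqC^{q(\lceil\lambda^a\rceil+1)}\delta.
\]
Choosing $\delta$ so that the right-hand side is at most $1/400$ requires only $\log(1/\delta)=O(\lambda^{2a})$, which keeps the construction polynomial-time. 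This bound holds for every key and every tomography outcome.

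\textbf{Step 2: tomography error.} Condition on the event that every controlled $\widetilde S_d$ is within diamond distance $1/(400q)$ of the true controlled $S_d^\Phi$. By \cref{thm:process-tomography} and a union bound over the $\tau+1$ lengths, this event fails with probability at most $(\tau+1)2^{-2\lambda}$. On this event, fix any $\Phi'_{>\tau}$. A gate-replacement hybrid over the $\sum_{d\leq\tau}q_d\leq q$ small queries changes the Choi state by at most $q\cdot 1/(400q)=1/400$ in trace norm. This compares $J(\widetilde{\mathcal G}_k^{\Phi'_{>\tau}})$ with $\rho_{k,(\Phi_{\leq\tau},\Phi'_{>\tau})}$; the rest of the circuit is identical in both, including the large queries and the $\QPSPACE$ calls. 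Averaging over $\Phi'_{>\tau}$ and using convexity of the trace norm gives
\[
\normone{\EE_{\Phi'_{>\tau}}J(\widetilde{\mathcal G}_k^{\Phi'_{>\tau}})-\mu_k}\leq 1/400.
\]

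\textbf{Step 3: combine.} The triangle inequality gives $\normone{\widehat\mu_k-\mu_k}\leq 1/200$ for every $k$ simultaneously on the tomography success event, since that event does not depend on $k$.

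\textbf{Main obstacle.} The delicate point is justifying the use of \cref{lem:chfs-design-simulation} when the small layers are the \emph{reconstructed} gates $\widetilde S_d$ rather than genuine CHFS gates. The lemma's fixed-oracle clause must cover arbitrary fixed unitaries at those lengths. I would handle this by absorbing the $\widetilde S_d$ into the inter-query unitaries, exactly as was done for other lengths in \cref{eq:one-layer-query-comb}; the lemma's proof only uses the large-length oracle through Choi-state moments, so this absorption is harmless.
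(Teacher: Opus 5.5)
Your proposal is correct and follows essentially the same route as the paper: a triangle inequality through $\EE_{\Phi'_{>\tau}}J(\widetilde{\mathcal G}_k^{\Phi'_{>\tau}})$. The design-simulation lemma bounds one leg by $1/400$, and a gate-replacement hybrid over the length-$\leq\tau$ queries bounds the other by $1/400$; this second bound holds on the tomography success event, whose failure probability is $(\tau+1)2^{-2\lambda}$ by a union bound. Your remarks that $Z$ acts only as a computational-basis control, and that the reconstructed gates $\widetilde S_d$ can be absorbed into the inter-query unitaries, make explicit what the paper leaves implicit.
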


\begin{proof}[Proof of \cref{lem:average-choi-preparation}]
The preparation claim follows from the construction.
Fix the tomography descriptions, which are used for all keys. Apply
\cref{lem:chfs-design-simulation} to
$\widetilde{\mathcal G}_k$ acting on one half of $\ket\Omega$ for the CHFS oracles with input length $d>\tau$.
The computation makes at most $q$ CHFS oracle queries, each of
length at most $\lceil\lambda^a\rceil$. Thus
\begin{align}
\label{eqn:deltacondition}
\normone{
\widehat\mu_k-
\EE_{\Phi'_{>\tau}}J(\widetilde{\mathcal G}_k^{\Phi'_{>\tau}})
}
\leq KqC^{q(\lceil\lambda^a\rceil+1)}\delta
\leq\frac1{400},
\end{align}
where $K,C$ are the universal constants in that lemma.
Since $q\leq\lambda^a$, this choice requires only
$\log(1/\delta)=O(\lambda^{2a})$.

Now suppose that all tomography estimates have error at most $1/400q$. For every key $k$ and every $\Phi'_{>\tau}$, a hybrid
argument over the queries of size $\le \tau$ gives
\[
\normone{
J(\widetilde{\mathcal G}_k^{\Phi'_{>\tau}})
-\rho_{k,(\Phi_{\leq\tau},\Phi'_{>\tau})}
}
\leq\sum_{d\leq\tau}\frac{q_d}{400q}\leq\frac1{400}.
\]
Averaging over $\Phi'_{>\tau}$ and combining
the two bounds gives
$\normone{\widehat\mu_k-\mu_k}\leq1/200$.
A union bound over all the process tomography procedures gives failure probability
at most $(\tau+1)2^{-2\lambda}$.
\end{proof}

\begin{lemma}[\cref{lem:QPTimplement}]
The distinguisher is a uniform QPT algorithm with oracle access to
$\mathcal V$ and $\mathcal O^\Phi=(S^\Phi,\QPSPACE)$.
\end{lemma}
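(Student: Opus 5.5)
We verify the distinguisher step by step, showing that each stage is a uniform polynomial-size circuit with oracle access to $\mathcal V$ and $\mathcal O^\Phi$. We treat the four stages in the order they are executed, and the only nontrivial stage is the quantum OR test.

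\textbf{Step 1 (tomography).} There are $\tau+1=O(\log\lambda)$ process tomography runs, one on the controlled gate $\ketbra{0}_C\otimes\Id+\ketbra{1}_C\otimes S_d^\Phi$ for each $d\le\tau$. Each such gate acts on $2d+2=O(\log\lambda)$ qubits, so its dimension is $D=2^{2d+2}=\poly(\lambda)$. Each run uses accuracy $\zeta=1/400q$ with $q\le\lambda^a$, and failure probability $2^{-2\lambda}$. By \cref{thm:process-tomography}, each run therefore uses $O(D^2\zeta^{-1}\lambda)$ queries to $S^\Phi$ and time polynomial in $D$, $1/\zeta$ and $\lambda$. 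The output $\widetilde S_{\le\tau}$ is a polynomial-length classical string.

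\textbf{Steps 2 and 3 (Choi states and purity test).} Preparing $r+2=O(\lambda)$ copies of $\ket\Omega_{AR}$ takes $O(n)$ gates per copy. One forward call to $\mathcal V$ per copy then yields $J(\mathcal V)^{\otimes(r+2)}$. The purity test of \cref{lem:swap-test} on the first two copies is a single swap test on $2n$-qubit registers.

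\textbf{Step 4 (OR test).} This is the main obstacle, because $N=2^\lambda$ and we cannot run each $M_k$ separately. We therefore invoke \cref{cor:qpspace-quantum-or}, with common classical input $(\lambda,\widetilde S_{\le\tau})$ and $\log N=\lambda$. It remains to check that each $M_k$ is a unitary QPT circuit with $\QPSPACE$ access, generated uniformly from $k$ and this common input; we do this by deferring all measurements and uncomputing.
\begin{enumerate}
\item The circuit for $\widehat\mu_k$ in \cref{subsec:simulating-g} consists of the following pieces, all uniformly generated: the purified $\mathcal G_k$ of size $\lambda^a$, with its $\QPSPACE$ calls left unchanged; the reconstructed gates $\widetilde S_d$ on $O(\log\lambda)$ qubits; evaluation of the $2q$-wise independent family $s_z$; and the design generator $\mathcal T$.
\item By \cref{lem:succinct-phase-invariant-designs}, $\mathcal T$ runs in time $\poly(d,q,\log(1/\delta))$. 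This is polynomial because $d\le\lambda^a$ and $\log(1/\delta)=O(\lambda^{2a})$.
\item Preparing $r$ purified copies of $\widehat\mu_k$ with fresh seeds and performing $r$ coherent swap tests gives a polynomial-size circuit.
\item The threshold acceptance condition (at least $2r/3$ tests pass) is computed reversibly by a counter into an output qubit.
\end{enumerate}
With these checks, \cref{cor:qpspace-quantum-or} gives a uniform QPT implementation of the OR test with $\QPSPACE$ access.

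Combining the four stages, the distinguisher is a uniform QPT algorithm with oracle access to $\mathcal V$ and $\mathcal O^\Phi=(S^\Phi,\QPSPACE)$, since a sum of polynomially many polynomial costs is polynomial.
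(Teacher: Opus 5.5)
Your proposal is correct and follows the same route as the paper: polynomial-time tomography on the $O(\log\lambda)$-qubit layers, then \cref{cor:qpspace-quantum-or} applied to the $2^\lambda$ measurements $M_k$. These are uniform unitary QPT circuits with $\QPSPACE$ access, generated from $k$ and $\widetilde S_{\leq\tau}$, and they make no $S^\Phi$ queries; you simply spell out the polynomial bounds (tomography dimension, design-generator runtime, reversible threshold counter) that the paper's short proof leaves implicit.
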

\begin{proof}[Proof of \cref{lem:QPTimplement}]
The tomography procedure runs in polynomial time as shown in \cref{subsec:simulating-g}.
For fixed tomography descriptions, the preparation for $\widehat\mu_k$ and the $r=O(\lambda)$ swap tests give unitary QPT implementations of the measurements $M_k$ with access to $\QPSPACE$, whose circuits can be generated uniformly from $k$ and $\widetilde S_{\leq\tau}$.
There are $2^\lambda$ measurements, so \cref{cor:qpspace-quantum-or} gives a QPT implementation
of the quantum OR test with access to $\QPSPACE$.
All remaining operations have polynomial size.
\end{proof}

\begin{lemma}[\cref{lem:haar-case}]
If $\mathcal V$ is a Haar-random unitary, the distinguisher
outputs $1$ with negligible probability.
\end{lemma}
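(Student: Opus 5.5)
The plan is to show that, in the Haar case, the purity test never fires and every measurement $M_k$ accepts with exponentially small probability, so the quantum OR test almost never accepts. Let $\mathcal V=\mathcal U$ for a Haar-random $U$.

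\textbf{Step 1: the purity test.} Since $J(\mathcal U)$ is pure, the purity test of \cref{lem:swap-test} never fails. It therefore suffices to bound the acceptance probability of the quantum OR test.

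\textbf{Step 2: fixing the comparison states.} Condition on the oracle $\Phi$ and the tomography outcomes $\widetilde S_{\leq\tau}$. This fixes every $\widehat\mu_k$, and these states are independent of $U$. For a fixed key $k$, the swap test between $J(\mathcal U)$ and $\widehat\mu_k$ passes with probability $g_k(U)=(1+\Tr(J(\mathcal U)\widehat\mu_k))/2$ by \cref{lem:swap-test}.

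\textbf{Step 3: expectation and concentration.} Haar invariance gives $\EE_U J(\mathcal U)=\Id/2^{2n}$, hence $\EE_U g_k=\tfrac12+2^{-2n-1}$. The quantity $g_k$ is the acceptance probability of a one-query algorithm: prepare $\ket\Omega$, query $U$ once, then swap-test against a fixed state. So \cref{cor:haar-concentration} with $q=1$ and dimension $2^n$ applies and gives
\[
\Prb_U[g_k>7/12]\leq\exp(-\Omega(2^n)).
\]
A union bound over the $2^\lambda$ keys bounds the probability of any bad key by $2^\lambda e^{-\Omega(2^n)}$. This is negligible because $n=\omega(\log\lambda)$ makes $2^n$ superpolynomial in $\lambda$.

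\textbf{Step 4: Chernoff bound for each $M_k$.} On the good event, fix $U$. The $r$ swap tests in $M_k$ act on independent copies, because $\widehat\mu_k$ is prepared freshly each time and $J(\mathcal U)$ is a pure product. They are therefore independent Bernoulli trials with success probability $p\le 7/12$. Apply \cref{lem:bernoulli-chernoff} with $a=2/3$. The KL divergence $D(2/3\,\|\,7/12)$ is a constant of about $0.0146$ nats, so with $r=\lceil143\lambda\rceil$ the exponent is at least $2.08\lambda$ nats, i.e.\ above $3\lambda\ln 2$. Each $M_k$ thus accepts with probability at most $2^{-3\lambda}$.

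\textbf{Step 5: the OR test.} \cref{lem:quantum-or} with $N=2^\lambda$ and $b=2^{-3\lambda}$ bounds the OR-test acceptance by $4\cdot2^{-2\lambda}$. Adding the bad-event probability from Step 3 gives a negligible total.

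\textbf{Main obstacle.} The most delicate step is justifying that $g_k$ is the acceptance probability of a one-query algorithm, and that the union bound over $2^\lambda$ keys is absorbed by $\exp(-\Omega(2^n))$. The latter needs $2^n\gg\lambda$, which holds only because $n=\omega(\log\lambda)$. The remaining checks are the Chernoff constants and the requirement that the tomography conditioning is independent of $U$.
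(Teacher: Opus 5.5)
Your proposal is correct and follows the paper's proof essentially step for step. The steps are: the purity test never fails; with $\Phi$ and the tomography outcomes fixed, Haar concentration with one query plus a union bound over the $2^\lambda$ keys caps every swap-test pass probability at $7/12$; Chernoff with $r=\lceil143\lambda\rceil$ bounds each $M_k$'s acceptance by $2^{-3\lambda}$; and the OR test accepts with probability at most $4\cdot2^{-2\lambda}$. Your explicit check that $D(2/3\,\|\,7/12)\approx0.0146$ nats, so that $143\cdot0.0146>3\ln2$, supplies a constant the paper leaves implicit.
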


\begin{proof}[Proof of \cref{lem:haar-case}]
Let $\mathcal V=\mathcal U$ for a Haar-random unitary $U$.
The purity test always passes because $J(\mathcal U)$ is pure.
It therefore suffices to bound the acceptance probability of
the quantum OR test.

Fix the original oracle and the tomography outcomes.
Every $\widehat\mu_k$ is then fixed independently of $U$.
For each key $k$, a swap test on $J(\mathcal U)$ and
$\widehat\mu_k$ passes with probability
$(1+\Tr(J(\mathcal U)\widehat\mu_k))/2$.
Haar invariance gives
\[
\EE_U J(\mathcal U)=\frac{\Id_{AR}}{2^{2n}},
\qquad
\EE_U\left[\frac{1+\Tr(J(\mathcal U)\widehat\mu_k)}2\right]
=\frac12+\frac1{2^{2n+1}}.
\]
This swap test uses one query to $U$. Hence
\cref{cor:haar-concentration} gives, for each $k$,
\[
\Prb_U\left[
\frac{1+\Tr(J(\mathcal U)\widehat\mu_k)}2>\frac7{12}
\right]
\leq\exp(-\Omega(2^n)).
\]
A union bound over the keys shows that, except with probability
at most $2^\lambda\exp(-\Omega(2^n))=\negl(\lambda)$,
the swap-test acceptance probability is at most $7/12$
for every key. We only consider this case. Then, by Chernoff inequality 
(\cref{lem:bernoulli-chernoff}) and $r=\lceil143\lambda\rceil$,
each $M_k$ accepts with probability at most $2^{-3\lambda}$.
Thus the quantum OR test accepts and the distinguisher outputs 1
with probability at most $4\cdot2^\lambda\cdot2^{-3\lambda}$ by \cref{lem:quantum-or}.
Overall, the distinguisher outputs $1$
with negligible probability.
\end{proof}

\begin{lemma}[\cref{lem:candidate-case}]
Fix an oracle $\Phi$ for which the conclusion of
\cref{lem:purity-concentration-dichotomy} holds. If
$\mathcal V=\mathcal G_{k^\star}^{\mathcal O^\Phi}$ for a uniformly
random $k^\star$, then the distinguisher outputs 1 with non-negligible probability. More precisely,
$\Prb[\mathcal D^{\mathcal V,\mathcal O^\Phi}=1]
\geq\frac12\lambda^{-(4a+9)}-\negl(\lambda)
$ for all sufficiently large $\lambda$.
\end{lemma}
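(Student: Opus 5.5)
The argument splits along the two alternatives of \cref{lem:purity-concentration-dichotomy}, which holds for the fixed $\Phi$. In both cases the distinguisher outputs $1$ whenever the purity test in Step~3 fails or the quantum OR test in Step~4 accepts. These two tests act on disjoint copies of $J(\mathcal V)$, so I lower bound each event separately.

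\emph{Mixedness case.} Suppose $\EE_k[1-\Tr(\rho_{k,\Phi}^2)]\geq\lambda^{-(4a+9)}$. Since $\mathcal V=\mathcal G_{k^\star}^{\mathcal O^\Phi}$, each copy of $J(\mathcal V)$ equals $\rho_{k^\star,\Phi}$. By \cref{lem:swap-test}, the purity test on two copies fails with probability $(1-\Tr(\rho_{k^\star,\Phi}^2))/2$. Averaging over the uniform key $k^\star$ gives output probability at least $\frac12\lambda^{-(4a+9)}$, which is the claimed bound.

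\emph{Concentration case.} Suppose instead $\EE_k\normone{\rho_{k,\Phi}-\mu_k}\leq\lambda^{-1}$; we may also assume the mixedness alternative fails, so the average purity defect is below $\lambda^{-(4a+9)}$. By Markov's inequality on each average, all but an $O(\lambda^{-1})$ fraction of keys satisfy $1-\Tr(\rho_{k,\Phi}^2)\leq 1/100$ and $\normone{\rho_{k,\Phi}-\mu_k}\leq1/100$. Condition on tomography success, which fails with probability at most $(\tau+1)2^{-2\lambda}$. Then \cref{lem:average-choi-preparation} gives $\normone{\widehat\mu_k-\mu_k}\leq 1/200$, so $\normone{\rho_{k,\Phi}-\widehat\mu_k}\leq 3/200$ by the triangle inequality. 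For a good key $k^\star$, this yields
\[
\Tr(\rho_{k^\star,\Phi}\widehat\mu_{k^\star})\geq\Tr(\rho_{k^\star,\Phi}^2)-\normone{\rho_{k^\star,\Phi}-\widehat\mu_{k^\star}}\geq 39/40 .
\]
The first inequality uses $|\Tr(\rho(\sigma-\sigma'))|\leq\normone{\sigma-\sigma'}$.

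Hence each swap test in $M_{k^\star}$ passes with probability at least $79/80$, and the $r$ tests are independent because the copies of $\widehat\mu_{k^\star}$ use fresh seeds. The Chernoff--Hoeffding bound (\cref{lem:bernoulli-chernoff}) with $r=\lceil143\lambda\rceil$ then shows that $M_{k^\star}$ rejects with probability at most $2^{-3\lambda}$. By \cref{lem:quantum-or} with $a=2^{-3\lambda}$, the OR test accepts with probability at least $(1-2^{-3\lambda})^2/7$. Averaging over $k^\star$ and subtracting the tomography failure probability gives
\[
\Pr[\mathcal D=1]\geq\left(1-O(\lambda^{-1})\right)\frac{(1-2^{-3\lambda})^2}{7}-\negl(\lambda).
\]
For large $\lambda$ this is a constant, which exceeds $\frac12\lambda^{-(4a+9)}$.

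The main point needing care is the OR-test step. The OR test only needs one key with near-certain acceptance, namely the actual key $k^\star$. The Chernoff exponent must beat the prescribed $a$ with a margin, which dictates the choice of $r$. The precondition $a<1/2$ of \cref{lem:quantum-or} must also be checked. Everything else is bookkeeping of the Markov fractions and the negligible tomography failure.
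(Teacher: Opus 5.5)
Your proposal is correct and follows essentially the same argument as the paper: the case split on \cref{lem:purity-concentration-dichotomy}, the purity test for the mixed case, and in the concentration case Markov's inequality over keys, the triangle inequality with \cref{lem:average-choi-preparation} to get $\Tr(\rho_{k^\star,\Phi}\widehat\mu_{k^\star})\geq 39/40$, Chernoff with $r=\lceil143\lambda\rceil$, and the quantum OR test, minus the tomography failure probability. One small aside: the constant $143$ is actually forced by the Haar case, where the gap between $7/12$ and $2/3$ is narrow, whereas here any $r=\Theta(\lambda)$ with a modest constant would already give rejection probability at most $2^{-3\lambda}$.
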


\begin{proof}[Proof of \cref{lem:candidate-case}]
We first consider the mixedness case, i.e., it holds that $\EE_k[1-\Tr(\rho_{k,\Phi}^2)]\geq\lambda^{-(4a+9)}$. In this case,
the purity test alone outputs $1$ with probability at least
$\frac12\lambda^{-(4a+9)}$.

Otherwise, \cref{lem:purity-concentration-dichotomy} gives
$\EE_k\normone{\rho_{k,\Phi}-\mu_k}\leq\lambda^{-1}$.
By Markov's inequality, all but a
$100\lambda^{-(4a+9)}+100\lambda^{-1}$ fraction of keys satisfy the near-purity and closeness conditions simultaneously:
\[
1-\Tr(\rho_{k,\Phi}^2)\leq\frac1{100},
\qquad
\normone{\rho_{k,\Phi}-\mu_k}\leq\frac1{100}.
\]
Condition on successful tomography. 
By \cref{lem:average-choi-preparation}, every such key satisfies
$\normone{\rho_{k,\Phi}-\widehat\mu_k}\leq3/200$, and consequently
\[
\Tr(\rho_{k,\Phi}\widehat\mu_k)
\geq\Tr(\rho_{k,\Phi}^2)
-\normone{\rho_{k,\Phi}-\widehat\mu_k}
\geq\frac{39}{40}.
\]
For each such key $k$, every swap test in $M_k$ passes
with probability at least $79/80$.
Since $M_k$ rejects only if fewer than $2r/3$ of its $r$
independent swap tests pass, the Chernoff-Hoeffding bound
(\cref{lem:bernoulli-chernoff}) gives a rejection probability
of at most $2^{-3\lambda}$.
For the actual key $k^\star$ satisfying both the near-purity and closeness conditions,
the quantum OR test (\cref{lem:quantum-or}) thus accepts with probability at least
$(1-2^{-3\lambda})^2/7$.

The purity test acts on separate copies of $J(\mathcal V)$, and its
failure also causes the distinguisher to output $1$. Therefore,
averaging over the key and accounting for tomography failure gives
\[
\Prb[\mathcal D^{\mathcal V,\mathcal O^\Phi}=1]
\geq
\left(1-100\lambda^{-(4a+9)}-\frac{100}{\lambda}\right)
\frac{(1-2^{-3\lambda})^2}{7}
-(\tau+1)2^{-2\lambda}
\]
where $(\tau+1)2^{-2\lambda}$ is due to the tomography failure probability from \cref{lem:average-choi-preparation}.
This is bounded below by a positive constant for all sufficiently
large $\lambda$, proving the claim.
\end{proof}
\fi

\end{document}